\setlist[enumerate]{label=(\roman*),itemsep=-1\parsep,topsep=0.5\parsep}
\declaretheorem{theorem}
\declaretheorem[sibling=theorem]{lemma}
\declaretheorem[sibling=theorem]{corollary}
\theoremstyle{definition}
\declaretheorem[sibling=theorem]{algorithm}
\theoremstyle{remark}
\declaretheorem[sibling=theorem]{remark}
\newcommand{\sgroup}[2]{G_{#1,#2}}
\newcommand{\paramcard}{\kappa}
\newcommand{\paramgap}{\gamma}
\newcommand{\paramthresh}{\theta}
\newcommand{\speeds}{V}
\newcommand{\jobsizes}{P}
\newcommand{\vjobsizes}{\mathcal{P}}
\newcommand{\Cmax}{C_{\max}}
\newcommand{\machs}{\mathcal{M}}
\newcommand{\jobs}{\mathcal{J}}
\newcommand{\poly}{\mathrm{poly}}
\newcommand{\Opt}{\mathrm{OPT}}
\newcommand{\Oh}{\mathcal{O}}
\newcommand{\Conf}{\mathcal{C}}
\newcommand{\rjobs}{J}
\newcommand{\size}{\mathrm{size}}
\newcommand{\makespan}{T}
\newcommand{\rmakespan}{\bar{T}}
\newcommand{\rrmakespan}{\breve{T}}
\newcommand{\bjobsizes}{B}
\newcommand{\sjobsizes}{S}
\newcommand{\source}{\alpha}
\newcommand{\sink}{\omega}
\newcommand{\supp}{\mathrm{supp}}
\newcommand{\MILP}{\textnormal{MILP}}
\newcommand{\eps}{\varepsilon}
\newcommand{\ZZ}{\mathbb{Z}}
\DeclarePairedDelimiter\floor{\lfloor}{\rfloor}
\DeclarePairedDelimiter\ceil{\lceil}{\rceil}
\DeclarePairedDelimiter\set{\lbrace}{\rbrace}
\DeclarePairedDelimiterX\sett[2]{\lbrace}{\rbrace}{ #1 \,\delimsize| \,\mathopen{} #2 }
\title{An EPTAS for Scheduling on Unrelated Machines of Few Different Types\footnote{This work was partially supported
by the German Research Foundation (DFG) project JA 612/16-1. The current article is an extended version of the conference article \cite{WADSversion}}}
\author{Klaus Jansen \qquad Marten Maack\\[11pt]
Department of Computer Science, University of Kiel, 24118 Kiel, Germany\\ 
\{kj, mmaa\}@informatik.uni-kiel.de}
\begin{document}

\maketitle

\begin{abstract}
In the classical problem of scheduling on unrelated parallel machines, a set of jobs has to be assigned to a set of machines.
The jobs have a processing time depending on the machine and the goal is to minimize the makespan, that is the maximum machine load.
It is well known that this problem is NP-hard and does not allow polynomial time approximation algorithms with approximation guarantees smaller than $1.5$ unless P$=$NP.
We consider the case that there are only a constant number $K$ of machine types.
Two machines have the same type if all jobs have the same processing time for them.
This variant of the problem is strongly NP-hard already for $K=1$. 
We present an efficient polynomial time approximation scheme (EPTAS) for the problem, that is, for any $\eps > 0$ an assignment with makespan of length at most $(1+\eps)$ times the optimum can be found in polynomial time in the input length and the exponent is independent of $1/\eps$.
In particular we achieve a running time of $2^{\Oh(K\log(K) \nicefrac{1}{\eps}\log^4 \nicefrac{1}{\eps})}+\poly(|I|)$, where $|I|$ denotes the input length.
Furthermore, we study three other problem variants and present an EPTAS for each of them:
The Santa Claus problem, where the minimum machine load has to be maximized; the case of scheduling on unrelated parallel machines with a constant number of uniform types, where machines of the same type behave like uniformly related machines; and the multidimensional vector scheduling variant of the problem where both the dimension and the number of machine types are constant.
For the Santa Claus problem we achieve the same running time.
The results are achieved, using mixed integer linear programming and rounding techniques.
\end{abstract}

\section{Introduction}

We consider the problem of scheduling jobs on unrelated parallel machines---or unrelated scheduling for short---in which a set $\jobs$ of $n$ jobs has to be assigned to a set $\machs$ of $m$ machines.
Each job $j$ has a processing time $p_{ij}$ for each machine $i$ and the goal is to find a schedule $\sigma:\jobs\rightarrow\machs$ minimizing the \emph{makespan} $C_{\max}(\sigma)=\max_{i\in\machs}\sum_{j\in\sigma^{-1}(i)}p_{ij}$, i.e. the maximum machine load.
The problem is one of the classical scheduling problems studied in approximation.
In $1990$ Lenstra, Shmoys and Tardos \cite{LST90} showed that there is no approximation algorithm with an approximation guarantee smaller than $1.5$, unless P$=$NP.
Moreover, they presented a $2$-approximation, and closing this gap is a rather famous open problem in scheduling theory and approximation (see e.g. \cite{WS11}).

In particular, we study the special case where there is only a constant number $K$ of \emph{machine types}.
Two machines $i$ and $i'$ have the same type, if $p_{ij}=p_{i'j}$ holds for each job $j$.
In many application scenarios this variant is plausible, e.g., when considering computers which typically only have a very limited number of different types of processing units. 
We denote the processing time of a job $j$ on a machine of type $t\in[K]$ by $p_{tj}$ and assume that the input consist of the corresponding $K\times n$ processing time matrix together with machine multiplicities $m_t$ for each type $t$, yielding $m=\sum_{t\in[K]}m_t$.
Note that the case $K=1$ is equivalent to the classical scheduling on identical machines.
We also study three other variants of the problem:

\subparagraph{Santa Claus Problem.}

We consider the reverse objective of maximizing the minimum machine load, i.e. $C_{\min}(\sigma)=\min_{i\in\machs}\sum_{j\in\sigma^{-1}(i)}p_{ij}$.
This problem is known as max-min fair allocation or the Santa Claus problem.
The intuition behind these names is that the jobs are interpreted as goods (e.g. presents), the machines as players (e.g. children), and the processing times as the values of the goods from the perspective of the different players.
Finding an assignment that maximizes the minimum machine load, means therefore finding an allocation of the goods that is in some sense fair (making the least happy kid as happy as possible).
We will refer to the problem as Santa Claus problem in the following, but otherwise will stick to the scheduling terminology.

\subparagraph{Uniform Types.}

Two machines $i$ and $i'$ have the same \emph{uniform} machine type, if there is a scaling factor $s$ such that $p_{ij} = sp_{i'j}$ for each job $j$.
While jobs behave on machines of the same type like they do on identical machines, they behave of machines of the same uniform type like they do on uniformly related machines. 
Hence, we may assume that the input consists of job sizes $p_{tj}$ depending on the job $j$ and the uniform type $t$, together with uniform machine types $t_i$ and machine speeds $s_i$, such that $p_{ij}=p_{t_ij}/s_i$.

\subparagraph{Vector Scheduling.}

In the $D$-dimensional vector scheduling variant of unrelated scheduling, a processing time vector $p_{ij} = (p^{(1)}_{ij},\dots, p^{(D)}_{ij})$ is given for each job $j$ and machine $i$ and the makespan of a schedule $\sigma$ is defined as the maximum load any machine receives in any dimension: 
\[\Cmax(\sigma)  =\max_{i\in\machs}\Big\|\sum_{j\in\sigma^{-1}(i)} p_{ij}\Big\|_{\infty} = \max_{i\in\machs,d\in[D]}\sum_{j\in\sigma^{-1}(i)} p^{(d)}_{ij}\]
Machine types are defined correspondingly.
We consider the case that both $K$ and $D$ are constant and like in the one dimensional case we may assume that the input consist of processing time vectors depending on types and jobs, together with machine multiplicities.


\paragraph{Basic Concepts.}

We study polynomial time approximation algorithms: 
Given an instance $I$ of an optimization problem, an $\alpha$-approximation $A$ for this problem produces a solution in time $\poly(|I|)$, where $|I|$ denotes the input length.
For the objective function value $A(I)$ of this solution it is guaranteed that $A(I)\leq \alpha\Opt(I)$, in the case of an minimization problem, or $A(I)\geq (1/\alpha)\Opt(I)$, in the case of an maximization problem, where $\Opt(I)$ is the value of an optimal solution.
We call $\alpha$ the \emph{approximation guarantee} or \emph{rate} of the algorithm. 
In some cases a polynomial time approximation scheme (PTAS) can be achieved, that is, an $(1+\eps)$-approximation for each $\eps>0$.
If for such a family of algorithms the running time can be bounded by $f(1/\eps)\poly(|I|)$ for some computable function $f$, the PTAS is called \emph{efficient} (EPTAS), and if the running time is polynomial in both $1/\eps$ and $|I|$ it is called \emph{fully polynomial} (FPTAS).

\paragraph{Related Work.}

It is well known that the unrelated scheduling problem admits an FPTAS in the case that the number of machines is considered constant \cite{HS76} and we already mentioned the seminal work by Lenstra et al. \cite{LST90}.
Furthermore, the problem of unrelated scheduling with a constant number of machine types is strongly NP-hard, because it is a generalization of the strongly NP-hard problem of scheduling on identical parallel machines.
Therefore an FPTAS can not be hoped for in this case.
However, Wiese, Bonifaci and Baruah showed that there is a PTAS \cite{wiese2013partitioned}, and Wiese and Bonifaci \cite{BW12} gave an extended analysis for the vector scheduling case where both the dimension $D$ and $K$ are constant.
The authors do not present a detailed analysis of the running time, however the procedures involve guessing steps with $(m+1)^{K\kappa}$ possibilities, where $\kappa = (D/\eps)^{\Oh((\nicefrac{1}{\eps}\log(\nicefrac{D}{\eps}))^D)}$.
Gehrke, Jansen, Kraft and Schikowski \cite{GJKS16} presented a PTAS with an improved running time of $\Oh(Kn)+m^{\Oh(\nicefrac{K}{\eps^2})}(\log(m)/\eps)^{\Oh(K^{2})}$ for the regular one dimensional case of unrelated scheduling with a constant number of machine types.
On the other hand, Chen, Jansen and Zhang \cite{chen2014optimality} showed that there is no PTAS for scheduling on identical machines with running time $2^{(\nicefrac{1}{\eps})^{1-\delta}}$ for any $\delta>0$, unless the exponential time hypothesis fails.
Furthermore, the case $K = 2$ has been studied:
Imreh \cite{Imr03} designed heuristic algorithms with rates $2+ (m_1-1)/m_2$ and $4 - 2/m_1$, and  Bleuse et al. \cite{BKMMT15} presented an algorithm with rate $4/3 + 3/m_2$ and, moreover, a (faster) $3/2$-approximation, for the case that for each job the processing time on the second machine type is at most the one on the first. 
Moreover, Raravi and N{\'e}lis \cite{RN12} designed a PTAS for the case with two machine types.

Interestingly, unrelated scheduling is in P, if both the number of machine types and the number of job types is bounded by a constant.
This is implied by a recent result due to Chen, Marx, Ye and Zhang \cite{chen2017parameterized} building upon a result by Goemans and Rothvoss \cite{GR14}.
Job types are defined analogously to machine types, i.e., two jobs $j,j'$ have the same type, if $p_{ij}=p_{ij'}$ for each machine $i$.
In this case the matrix $(p_{ij})$ has only a constant number of distinct rows and columns.
Note that both the number of machine types and uniform machine types bounds the rank of this matrix.
However the case of unrelated scheduling where the matrix $(p_{ij})$ has constant rank turns out to be much harder: 
Already for the case with rank $3$ the problem is APX-hard \cite{chen2017parameterized} and for rank $4$ an approximation algorithm with rate smaller than $3/2$ can be ruled out, unless P$=$NP \cite{chen2014improved}.
In a rather recent work, Knop and Koutecký \cite{knop2016scheduling} considered the number of machine types as a parameter from the perspective of fixed parameter tractability.
They showed that unrelated scheduling is fixed parameter tractable for the parameters $K$ and $\max p_{i,j}$, that is, there is an algorithm with running time $f(K,\max p_{i,j})\poly(|I|)$ for some computable function $f$ that solves the problem to optimality.
Chen et al. \cite{chen2017parameterized} extended this, showing that unrelated scheduling is fixed parameter tractable for the parameters $\max p_{i,j}$ and the rank of the processing time matrix.

For the case that the number of machines is constant, the Santa Claus problem behaves similar to the unrelated scheduling problem:
there is an FPTAS that is implied by a result due to Woeginger \cite{Woe00}.
In the general case however, so far no approximation algorithm with a constant approximation guarantee has been found.
The results by Lenstra et al. \cite{LST90} can be adapted to show that that there is no approximation algorithm with a rate smaller than $2$, unless P$=$NP, and to get an algorithm that finds a solution with value at least $\Opt(I)-\max p_{i,j}$, as was done by Bezáková and Dani \cite{BD05}.
Since $\max p_{i,j}$ could be bigger than $\Opt(I)$, this does not provide a (multiplicative) approximation guarantee.
Bezáková and Dani also presented a simple $(n-m+1)$-approximation and an improved approximation guarantee of $\Oh(\sqrt{n}\log^3 n)$ was achieved by Asadpour and Saberi \cite{AS10}.
The best rate so far is $O(n^\eps)$ due to Bateni et al. \cite{BCG09} and Chakrabarty et al. \cite{CCK09}, with a running time of $\Oh(n^{1/\eps})$ for any $\eps>0$.

To the best of our knowledge, unrelated scheduling with a constant number of uniform machine types has not been studied before, but we argue that it is a natural extension of the case with a constant number of regular machine types and also a sensible special case of the general unrelated scheduling and the low rank case in particular.

The vector scheduling problem has been studied for the special case of identical machines by Chekuri and Khanna \cite{chekuri2004multidimensional}.
They achieve a PTAS for the case that $D$ is constant and an $\Oh(\log^2 D)$-approximation for the case that $D$ is arbitrary.

\paragraph{Results and Methodology.}

The main result of this paper is the following:
\begin{theorem}\label{thm:main_result}
There is an EPTAS for both scheduling on unrelated parallel machines and the Santa Claus problem with a constant number  $K$ of different machine types with running time $2^{\Oh(K\log(K) \nicefrac{1}{\eps}\log^4 \nicefrac{1}{\eps})}+\poly(|I|)$.
\end{theorem}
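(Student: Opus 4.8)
The plan is to follow the standard dual-approximation framework combined with a carefully designed mixed integer linear program (MILP) whose number of integer variables is bounded by a function of $1/\eps$ and $K$ only. First I would binary-search for a target makespan $\makespan$ (guessing $\Opt$ up to a factor $(1+\eps)$ via the usual rounding of candidate values), so that it suffices to either produce a schedule of makespan at most $(1+\Oh(\eps))\makespan$ or to certify that no schedule of makespan $\makespan$ exists. Fix such a $\makespan$ and rescale so $\makespan = 1$. For each machine type $t\in[K]$ call a job $j$ \emph{big} for $t$ if $p_{tj} > \eps$ and \emph{small} otherwise; note a job may be big for some types and small for others, which is the essential difficulty not present in the identical-machines case. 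Round the big processing times to powers of $(1+\eps)$ within $(\eps,1]$, giving only $\Oh(\nicefrac{1}{\eps}\log\nicefrac{1}{\eps})$ distinct big sizes per type, hence $\Oh(\nicefrac{1}{\eps}\log\nicefrac{1}{\eps})^K$ possible ``big profiles'' describing, for a single job, its rounded big size on each type on which it is big.

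Next I would set up configurations: for a machine of type $t$, a configuration is a multiset of rounded big-for-$t$ sizes with total at most $1$, together with a reserved amount (rounded to a multiple of $\eps$) for small-for-$t$ jobs; the number of configurations per type is $2^{\Oh(\nicefrac{1}{\eps}\log^2\nicefrac{1}{\eps})}$. The MILP has integer variables $y_{t,C}$ counting how many type-$t$ machines use configuration $C$ (so $\sum_C y_{t,C} = m_t$), and fractional variables $x$ that assign jobs: big jobs are assigned fractionally to (type, big-size)-slots opened by the chosen configurations, and small jobs are assigned fractionally to the small-reservoirs of the types. Crucially, jobs sharing the same big profile and the same vector of small sizes are interchangeable, so $x$ can be aggregated over job classes; the number of \emph{distinct} constraints and the number of integer variables then depend only on $K$ and $1/\eps$, while the fractional part has polynomial size. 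Using Lenstra-type integer programming in fixed dimension (or Kannan's algorithm) solves this MILP in time $2^{\Oh(K\log K\,\nicefrac{1}{\eps}\log^4\nicefrac{1}{\eps})}+\poly(|I|)$; the $\log^4$ comes from the number of integer variables being $\Oh(\nicefrac{1}{\eps}\log^2\nicefrac{1}{\eps})$-ish times $K$ and the bit-length bookkeeping.

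The remaining work is rounding the fractional solution. Fix the integral $y$; the residual problem in $x$ is an assignment LP. For big jobs, build the bipartite assignment graph between job-classes and big-slots; a vertex (basic) solution has few fractional entries, and the standard Lenstra--Shmoys--Tardos rotation/path argument lets each machine receive at most one extra big job, costing an additive $\eps$ per machine. For small jobs, the fractional assignment into the small-reservoirs is rounded greedily/via a flow argument so that each machine's small load exceeds its reservation by at most one small job, i.e.\ by at most $\eps$. Together the makespan grows by at most $\Oh(\eps)\makespan$, and rescaling $\eps$ gives the claim. For the Santa Claus objective the construction is symmetric: replace ``at most $\makespan$'' by ``at least $\makespan$'' in configurations, treat jobs of size $>\makespan$ by truncation, and round so that each machine loses at most $\Oh(\eps)\makespan$; the MILP size and hence the running time are identical. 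The main obstacle, and the place I would spend the most care, is the interaction between the big/small split being type-dependent and the aggregation of job classes: one must define job classes by the \emph{full} vector of (rounded-or-``small'') sizes across all $K$ types so that interchangeability holds, check that the number of such classes that matters for the integer part stays bounded by $f(K,1/\eps)$, and ensure the small-job rounding respects that a job small for one type may be big for another — so its placement is constrained by the configuration chosen on its actual machine. Handling this cleanly while keeping the integer dimension at $\Oh(K\,\nicefrac{1}{\eps}\log^2\nicefrac{1}{\eps})$ is the crux of achieving the stated running time.
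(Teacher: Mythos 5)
Your overall architecture (dual approximation, geometric rounding, a configuration MILP with integral configuration variables and fractional job-assignment variables, Lenstra--Kannan, then a rounding step) matches the paper's Section 2. But there is a genuine gap exactly where the stated running time has to come from: your MILP has one integer variable per (type, configuration) pair, i.e.\ $K\cdot 2^{\Theta(\nicefrac{1}{\eps}\log^2\nicefrac{1}{\eps})}$ integral variables, and Lenstra--Kannan runs in time $d^{\Oh(d)}$ in the number $d$ of integral variables, which yields a bound \emph{doubly} exponential in $1/\eps$ -- not $2^{\Oh(K\log K\,\nicefrac{1}{\eps}\log^4\nicefrac{1}{\eps})}$. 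Your remark that ``the number of integer variables is $\Oh(\nicefrac{1}{\eps}\log^2\nicefrac{1}{\eps})$-ish times $K$'' conflates the number of big \emph{sizes} with the number of \emph{configurations}. The paper needs two additional ideas to close this: (i) an existence result for thin solutions of configuration ILPs (via Eisenbrand--Shmonin's integer-cone theorem) showing one may restrict to $\Oh(\nicefrac{1}{\eps}\log^2\nicefrac{1}{\eps})$ nonzero configuration variables per type -- guessing that support already only gives $2^{\Oh(K\nicefrac{1}{\eps^2}\log^4\nicefrac{1}{\eps})}$; and (ii) the Jansen--Klein--Verschae refinement splitting configurations into simple and complex ones, guessing the number of machines and of big jobs covered by complex configurations, replacing the complex configurations by equivalent ones found by dynamic programming, and guessing only the support among the $2^{\Oh(\log^2\nicefrac{1}{\eps})}$ simple configurations; only then does the MILP handed to Lenstra--Kannan have $\Oh(K\nicefrac{1}{\eps}\log^2\nicefrac{1}{\eps})$ integral variables and the guess count stay at $2^{\Oh(K\nicefrac{1}{\eps}\log^4\nicefrac{1}{\eps})}$. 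Without some argument of this kind the claimed bound does not follow.

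Two further points. First, your big-job rounding is argued incorrectly: an ``extra big job per machine'' in an LST-style rounding costs up to $\makespan$, not $\eps\makespan$, which would only give a constant-factor guarantee. The paper avoids any extra big jobs by rounding with a flow network whose sink capacities $\eta_{t,p}$ are integral and, by the MILP constraint, at most the number of slots of size $p$ in the chosen configurations, so flow integrality places every big job into an existing slot; the only loss ($\Oh(\eps)\makespan$ per type, spread by the area constraint) comes from small jobs. Second, the Santa Claus case is not merely symmetric: one must allow configurations of size up to $P>\rrmakespan$, split them into big and small configurations with a covering constraint of the form (\ref{eq:MILP_santa}), use a flow network with lower-bound demands rather than capacities, and re-derive the thin-solution theorem separately for the big configurations (whose size is constrained from below and above); your one-sentence reduction glosses over these steps, which the paper treats explicitly to keep the same running time.
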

First we present a basic version of the EPTAS for unrelated scheduling with a running time doubly exponential in $1/\eps$.
For this EPTAS we use the dual approximation approach by Hochbaum and Shmoys \cite{HS87Dual} to get a guess $\makespan$ of the optimal makespan $\Opt$.
Then, we further simplify the problem via geometric rounding of the processing times.
Next, we formulate a mixed integer linear program (MILP) with a constant number of integral variables that encodes a relaxed version of the problem.
The MILP can be seen as a generalization of the classical integer linear program of configurations---or configuration ILP---for scheduling on identical parallel machines.
We solve it with the algorithm by Lenstra and Kannan \cite{Len83ILP,Kan87ILP}.  
The fractional variables of the MILP have to be rounded and we achieve this with a flow network utilizing flow integrality and causing only a small error. 
With an additional error the obtained solution can be used to construct a schedule with makespan $(1+\Oh(\eps))\makespan$.
This procedure is described in detail in Section \ref{sec:basic_eptas}.
Building upon the basic EPTAS we achieve the improved running time using techniques by Jansen \cite{Jan10QCmaxEPTAS} and by Jansen, Klein and Verschae \cite{JKV16ICALP}.
The basic idea of these techniques is to make use of existential results about simple structured solutions of integer linear programs (ILPs).
In particular these results can be used to guess the non-zero variables of the MILP, because they sufficiently limit the search space.
We show how these techniques can be applied in our case in Section \ref{sec:better_running_time}.
Furthermore, we present efficient approximation schemes for several other problem variants, thereby demonstrating the flexibility of our approach.
In particular, we can adapt all our techniques to the Santa Claus problem yielding the result stated above.
This is covered in Section \ref{sec:santa} and in Section \ref{sec:uniform} we show:
\begin{theorem}\label{thm:uniform}
There is an EPTAS for scheduling on unrelated parallel machines with a constant number $K$ of different uniform machine types with running time $2^{\Oh(K\log(K)\nicefrac{1}{\eps^3} \log^5\nicefrac{1}{\eps})}+\poly(|I|)$.
\end{theorem}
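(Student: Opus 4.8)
The plan is to transfer the pipeline of Sections~\ref{sec:basic_eptas} and~\ref{sec:better_running_time} to the uniform setting, using that when restricted to the machines of a single uniform type the problem is precisely scheduling on uniformly related machines, for which an EPTAS is known \cite{Jan10QCmaxEPTAS}. As in the basic EPTAS I would start from the dual approximation method \cite{HS87Dual} to obtain a candidate makespan $\makespan$ and rescale so that $\makespan=1$. Then I would geometrically round, replacing each job size $p_{tj}$ and each machine speed $s_i$ by the nearest power of $1+\eps$; since on a machine of type $t$ and (rounded) speed $s$ the processing time $p_{tj}/s$ is again a power of $1+\eps$, this costs only a factor $1+\Oh(\eps)$ in the makespan. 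For a type $t$ and a speed class $s=(1+\eps)^{\ell}$ occurring among its machines, call a job \emph{big} for $(t,s)$ if $p_{tj}/s>\eps$ and \emph{small} otherwise; then only $\Oh(\nicefrac{1}{\eps}\log\nicefrac{1}{\eps})$ distinct rounded relative big-job sizes, and hence at most $2^{\Oh(\nicefrac{1}{\eps}\log^{2}\nicefrac{1}{\eps})}$ big-job configurations per speed class, are relevant.

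The MILP I would set up generalises the configuration MILP of Section~\ref{sec:basic_eptas}: one integral variable per triple (type, speed class, big-job configuration) counting the machines of that type and speed realising that configuration, constrained by the machine multiplicities $m_t$; continuous variables fractionally distributing, for each type, the small jobs over the speed classes and the big jobs of each size class over the configurations that contain them; and load constraints forcing every machine to receive at most $1+\Oh(\eps)$. Given a feasible solution produced by the algorithm of Lenstra and Kannan \cite{Len83ILP,Kan87ILP}, I would round its fractional part through a flow network --- as in Section~\ref{sec:basic_eptas}, but now with one layer of vertices per (type, speed class) pair --- so that integrality of a maximum flow yields an integral assignment that moves only a bounded number of jobs and inflates the makespan by at most another $1+\Oh(\eps)$ factor.

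The crux, and what accounts for the extra $\nicefrac{1}{\eps^{2}}\log\nicefrac{1}{\eps}$ in the exponent relative to Theorem~\ref{thm:main_result}, is that a type may a priori contain $\Omega(\log n)$ distinct speed classes, which would make the number of integral MILP variables super-constant and break the running-time bound for the Lenstra--Kannan step. I would handle this in two stages. First, mirroring the speed-class reduction underlying the $Q||\Cmax$ EPTAS \cite{Jan10QCmaxEPTAS}, I would show that per type only $\Oh(\poly(\nicefrac{1}{\eps}))$ speed classes need be kept: machines too slow to hold any surviving job are discarded, the very fast machines (on which even the largest job is small) are collapsed into a single coarse class for which only the total capacity matters, and any large gap between consecutive remaining speed classes can be contracted at a $1+\Oh(\eps)$ cost because jobs need only be pushed towards slower machines. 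Second, to turn the resulting $f(K,\nicefrac{1}{\eps})\poly(|I|)$ bound into the additive one claimed, I would apply, exactly as in Section~\ref{sec:better_running_time}, the structural results of Jansen, Klein and Verschae \cite{JKV16ICALP}: there is a near-optimal MILP solution whose number of non-zero configuration variables is bounded by the number of constraints, which is now $\Oh(K\poly(\nicefrac{1}{\eps}))$, so one may guess this support first and then solve a residual program of constant size, confining the exponential dependence to a term $2^{\Oh(K\log(K)\,\nicefrac{1}{\eps^{3}}\log^{5}\nicefrac{1}{\eps})}$ added to $\poly(|I|)$. The main work, and the delicate part, is to carry out the speed-class reduction so that it remains compatible with the support guessing and to verify that the counts of configurations and constraints multiply out to precisely the stated exponent.
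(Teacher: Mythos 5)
There is a genuine gap, and it sits exactly at the step you yourself identify as the crux. Your plan rests on a preprocessing claim that each uniform type can be reduced to $\Oh(\poly(\nicefrac{1}{\eps}))$ relevant speed classes, attributed to the speed-class structure "underlying" the $Q||\Cmax$ EPTAS of \cite{Jan10QCmaxEPTAS}. No such reduction is contained in that work, and the sketch you give does not go through. Discarding machines too slow to hold any job and treating machines for which \emph{all} jobs are small in aggregate are both fine, but the intermediate regime cannot be compressed: the number of speed classes on which some job is big and some other job is small grows with the speed range (up to $\Theta(\log m/\eps)$ classes), and "contracting a large gap between consecutive speed classes" changes how many jobs that are small for the fast machines but big for the slow ones fit into the leftover capacity of a fast machine --- this quantity depends on the actual speed ratio, so contraction can alter the optimal makespan by much more than a $1+\Oh(\eps)$ factor. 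Pushing jobs towards slower machines does not repair this, since those jobs may be big (or not fit at all) there. Precisely because such a reduction is not available, \cite{Jan10QCmaxEPTAS} --- and Section~\ref{sec:uniform} of this paper --- keep \emph{all} speed values and instead make the configuration variables integral only for a constant number of fast speeds.

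Concretely, the paper's route is: per type, separate the machines into three groups using two parameters, the $\paramcard\in\Oh(\nicefrac{1}{\eps}\log\nicefrac{1}{\eps})$ fastest machines, the remaining machines within a speed factor $\paramgap=\nicefrac{1}{\eps^2}$ of them ("fast" speeds, only $\Oh(\nicefrac{1}{\eps}\log\nicefrac{1}{\eps})$ many values), and the rest ("slow" speeds, possibly many values). The MILP has configuration variables for \emph{every} speed, integral only for fast speeds, plus counting variables $u^{(t)}_p$ and job-size-to-speed assignment variables $y^{(t)}_{p,s}$ that your formulation omits. The rounding then has to do real work: the fractional configuration variables of each slow speed are first sparsified to a basic feasible solution (at most $|\bjobsizes_{t,s}|+1$ non-zeros), rounded down, and the lost coverage is mapped \emph{injectively} onto the $\paramcard$ fastest machines, where the speed gap $\paramgap$ and the geometric series bound the extra load by $\Oh(\eps)s_iT$; the counting variables are rounded up onto the fastest speed; and the $y^{(t)}_{p,s}$ variables are rounded with a Lenstra--Shmoys--Tardos pseudoforest argument \cite{LST90} so that each speed gets at most one extra small job. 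Only the job-assignment variables are rounded by the flow network as in Section~\ref{sec:basic_eptas}. Your two-stage running-time plan also deviates in a minor way from the paper (which uses only the Eisenbrand--Shmonin support bound here, not the thin-solution machinery of \cite{JKV16ICALP}), but that is cosmetic; the missing speed-class reduction is the substantive flaw, and without it (or a replacement such as the paper's fractional-configuration-plus-charging scheme) your MILP has a non-constant number of integral variables and the claimed running time does not follow.
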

We achieve this with a non-trivial combination of the ideas of Section \ref{sec:basic_eptas} with techniques for scheduling on uniformly related machines by Jansen \cite{Jan10QCmaxEPTAS}.
Finally, in Section \ref{sec:vector}, we revisit the unrelated vector scheduling problem that was studied by Bonifaci and Wiese \cite{BW12}.
We show that an additional rounding step---similar to the one in \cite{chekuri2004multidimensional}---together with a slight modification of the MILP and the rounding procedure yield an EPTAS for this problem as well.
\begin{theorem}\label{thm:vector}
There is an EPTAS for vector scheduling on unrelated parallel machines with constant dimension $D$ a constant number $K$ of different machine types.
\end{theorem}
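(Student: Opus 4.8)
The plan is to reduce vector scheduling to (a mild generalization of) the machinery of Sections~\ref{sec:basic_eptas} and~\ref{sec:better_running_time} by inserting one extra rounding step in the spirit of Chekuri and Khanna \cite{chekuri2004multidimensional}. As before, we first apply the dual approximation framework of Hochbaum and Shmoys \cite{HS87Dual} to obtain a guess $\makespan$ with $\Opt\le\makespan\le(1+\eps)\Opt$, and from now on aim for a schedule with $\Cmax\le(1+\Oh(\eps))\makespan$. Fix a type $t\in[K]$ and call a job $j$ \emph{big for $t$} if $p^{(d)}_{tj}>\eps\makespan$ for some coordinate $d\in[D]$, and \emph{small for $t$} otherwise. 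On any machine of type $t$ in a schedule of makespan at most $\makespan$ the total load summed over all $D$ coordinates is at most $D\makespan$, and each big-for-$t$ job contributes more than $\eps\makespan$ to this sum; hence such a machine carries at most $D/\eps$ jobs that are big for its type. Note a job small for $t$ has $p^{(d)}_{tj}\le\eps\makespan$ in \emph{every} coordinate.

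The additional rounding step is as follows. For a job $j$ that is big for $t$, round every \emph{large} coordinate ($p^{(d)}_{tj}>\eps\makespan$) up to the nearest power of $(1+\eps)$ and round every \emph{small} coordinate ($p^{(d)}_{tj}\le\eps\makespan$) down to the nearest multiple of $\eps^2\makespan/D$; leave the processing times of jobs that are small for $t$ unchanged. Each large coordinate then takes one of $\Oh(\tfrac1\eps\log\tfrac1\eps)$ values and each rounded small coordinate one of $\Oh(D/\eps)$ values, so the number of distinct rounded vectors of jobs that are big for $t$ is at most $2^{\Oh(D\log(D/\eps))}$, a constant. Rounding large coordinates up inflates any machine load by at most a factor $(1+\eps)$ per coordinate, so the optimal makespan of the rounded instance is at most $(1+\eps)\makespan$; conversely, since a machine carries at most $D/\eps$ big jobs and each rounded-down coordinate decreased by less than $\eps^2\makespan/D$, any schedule of the rounded instance of makespan $\makespan'$ gives a schedule of the original instance of makespan at most $\makespan'+\eps\makespan$.

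Now we set up the MILP as in Section~\ref{sec:basic_eptas} with two modifications. A \emph{configuration} for type $t$ is a multiset of at most $D/\eps$ rounded big-for-$t$ vectors whose coordinate-wise sum is at most $(1+\eps)\makespan$ in every coordinate; since there are only constantly many big-for-$t$ vectors, the number of configurations per type is bounded by a function of $D$ and $1/\eps$. We keep one integer variable per (type, configuration) pair counting how many machines of that type run that configuration, together with fractional variables describing how jobs are (fractionally) assigned to types and to the big-job slots of the configurations; the load constraints are imposed in each of the $D$ coordinates separately, the bound on the small-job load of type $t$ in coordinate $d$ being the residual capacity $\sum_c x_{t,c}\bigl((1+\eps)\makespan-\mathrm{load}(c)^{(d)}\bigr)$. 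The number of integer variables is still a function of $K$, $D$ and $1/\eps$, so the MILP is solved with the algorithm of Lenstra and Kannan \cite{Len83ILP,Kan87ILP} in time $f(K,D,1/\eps)+\poly(|I|)$. Finally the fractional part is rounded with a flow network as in Section~\ref{sec:basic_eptas}, arranged so that flow integrality bounds the rounding error in all $D$ coordinates simultaneously; this works because every job that is small for a type is small in all of its coordinates, so charging at most one extra such job to each machine costs at most $\eps\makespan$ per coordinate. Accounting once more for the $\Oh(\eps\makespan)$ errors from the two rounding steps and from assigning big jobs to their slots yields a schedule of makespan $(1+\Oh(\eps))\makespan\le(1+\Oh(\eps))\Opt$; rescaling $\eps$ proves the theorem, and the support-guessing technique of Section~\ref{sec:better_running_time} can be applied to the modified MILP if one additionally wants the exponent of $|I|$ to be a fixed constant.

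The main obstacle is the rounding of the fractional assignment: unlike in the one-dimensional case, the flow network must certify feasibility of the rounded small-job assignment in all $D$ coordinates at once. I expect this to go through because smallness is a per-coordinate property and the flow-integrality argument adds at most one job per machine, but the bookkeeping—choosing the granularity $\eps^2\makespan/D$ in the Chekuri–Khanna rounding so that the per-coordinate errors telescope to $\Oh(\eps\makespan)$, and verifying that a near-optimal schedule of the original instance really induces a feasible integral solution of the modified MILP—is the part requiring the most care.
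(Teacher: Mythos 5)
There is a genuine gap, and it is exactly at the point you yourself flag as "the part requiring the most care": the treatment of the small jobs. Your MILP only assigns small jobs fractionally to \emph{types}, with an aggregated residual-capacity constraint per type and coordinate, namely $\sum_{C} z_{C,t}\bigl((1+\eps)\makespan-\size(C)^{(d)}\bigr)$. In dimension $D\ge 2$ this relaxation is too weak to be rounded: the one-dimensional greedy/area argument (used in the proof of Lemma \ref{lem:MILP_and_schedule}) has no analogue, because a machine's leftover space is a \emph{vector} and small jobs cannot be split across coordinates. Concretely, take one type with two machines whose configurations leave residual vectors $((1+\eps)\makespan,0)$ and $(0,(1+\eps)\makespan)$, and a collection of small jobs each with both coordinates positive whose total load is about $(\makespan/2,\makespan/2)$: the aggregated constraint is satisfied, yet no assignment (even a fractional per-machine one) places these jobs without violating some coordinate. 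This is precisely why the paper changes the MILP so that jobs are fractionally assigned to \emph{configurations} of a type (variables $x_{j,t,C}$) and imposes the $D$-dimensional constraint $\sum_{p\in\sjobsizes_t}p\sum_{j\in\rjobs_t(p)}x_{j,t,C}\le(T^D-\size(C))\,z_{C,t}$ for every single configuration, so that the residual vector against which small jobs are packed is known; the integral assignment is then extracted per machine.

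The second problem is your error bound for the flow rounding. Flow integrality does not give "at most one extra small job per machine"; with the natural network it gives at most one extra job per \emph{distinct small processing-time vector} on each machine, so the extra load in a coordinate is a sum over all small vector classes present on that machine. Bounding this sum by $\Oh(\eps)\makespan$ is where the real work lies, and it needs three ingredients you do not have: geometric rounding of the small coordinates as well (you leave small jobs unrounded, so the sum over distinct small sizes is unbounded), the Chekuri--Khanna-type step applied to \emph{all} vectors so that every coordinate of a vector is at least an $\eps/D$-fraction of its largest coordinate (which yields the counting lemma that only $(2\ceil{\log_{1+\eps}(D/\eps)})^{D-1}$ small vectors share a given coordinate value), and a much smaller bigness threshold $\paramthresh=(\eps^2/D)^D$ chosen so that threshold times vector count times the geometric series is at most $(\eps+\eps^2)\makespan$. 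With your threshold $\eps\makespan$ and granularity $\eps^2\makespan/D$, and with the ratio-rounding applied only to coordinates of big jobs, this accounting does not close, so the final "one extra job per machine costs $\eps\makespan$ per coordinate" step fails. The overall architecture (extra Chekuri--Khanna rounding, configuration MILP, Lenstra--Kannan, flow rounding) is the right one and matches the paper, but these two points are where your argument, as written, would not go through.
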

Note that our results may also be seen as fixed parameter tractable algorithms for the parameters $1/\eps$ and $K$ (and $D$).
In the last section we elaborate on possible directions for future research.

\section{Basic EPTAS}\label{sec:basic_eptas}

In this chapter we describe a basic EPTAS for unrelated scheduling with a constant number of machine types, with a running time doubly exponential in $1/\eps$.
Wlog. we assume $\eps< 1$.
Furthermore $\log(\cdot)$ denotes the logarithm with basis $2$ and for $k\in\ZZ_{\geq 0}$ we write $[k]$ for $\set{1,\dots,k}$.

First, we simplify the problem via the classical dual approximation concept by Hochbaum and Shmoys \cite{HS87Dual}.
In the simplified version of the problem a target makespan $\makespan$ is given and the goal is to either output a schedule with makespan at most $(1+\alpha \eps)\makespan$ for some constant $\alpha\in\ZZ_{>0}$, or correctly report that there is no schedule with makespan $\makespan$.
We can use a polynomial time algorithm for this problem in the design of a PTAS in the following way.
First we obtain an upper bound $B$ for the optimal makespan $\Opt$ of the instance with $B\leq 2\Opt$.
This can be done using the $2$-approximation by Lenstra et al. \cite{LST90}.
With binary search on the interval $[B/2,B]$ we can find in $\Oh(\log 1/\eps)$ iterations a value $\makespan^*$ for which the mentioned algorithm is successful, while $T^*-\eps B/2$ is rejected.
We have $T^*-\eps B/2\leq\Opt$ and therefore $T^*\leq (1+\eps)\Opt$.
Hence the schedule we obtained for the target makespan $T^*$ has makespan at most $(1+\alpha\eps)T^*\leq (1+\alpha\eps)(1+\eps)\Opt=(1+\Oh(\eps))\Opt$.
In the following we will always assume that a target makespan $\makespan$ is given.
Next we present a brief overview of the algorithm for the simplified problem followed by a more detailed description and analysis.

\begin{algorithm}
\ 
\begin{enumerate}
\item Simplify the input via geometric rounding with an error of $\eps\makespan$.
\item Build the mixed integer linear program $\MILP(\rmakespan)$ and solve it with the algorithm by Lenstra and Kannan ($\rmakespan=(1+\eps)\makespan$).
\item If there is no solution, report that there is no solution with makespan $T$.
\item Generate an integral solution for $\MILP(\rmakespan+\eps\makespan+\eps^2\makespan)$ via a flow network utilizing flow integrality.
\item The integral solution is turned into a schedule with an additional error of $\eps^2\makespan$ due to the small jobs.   
\end{enumerate}
\end{algorithm}

\paragraph{Simplification of the Input.}

We construct a simplified instance $\bar{I}$ with modified processing times $\bar{p}_{tj}$.
If a job $j$ has a processing time bigger than $T$ for a machine type $t\in[K]$ we set $\bar{p}_{tj}=\infty$.
We call a job \emph{big} (for machine type $t$), if $p_{tj}>\eps^2\makespan$, and \emph{small} otherwise.
We perform a geometric rounding step for each job $j$ with $p_{tj}<\infty$, that is we set $\bar{p}_{tj}=(1+\eps)^x\eps^{2} \makespan$ with $x=\ceil{\log_{1+\eps}(p_{tj}/(\eps^{2} \makespan))}$.
\begin{lemma}\label{lem:rounded_instance}
If there is a schedule with makespan at most $\makespan$ for $I$, the same schedule has makespan at most $(1+\eps)\makespan$ for instance $\bar{I}$ and any schedule for instance $\bar{I}$ can be turned into a schedule for $I$ without increase in the makespan.
\end{lemma}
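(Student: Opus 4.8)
The plan is to reduce everything to a single pointwise estimate on the rounded processing times and then read off both directions of the statement.

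First I would establish that for every job $j$ and every type $t$ with $p_{tj}<\infty$,
\[ p_{tj}\;\le\;\bar p_{tj}\;<\;(1+\eps)\,p_{tj}. \]
This is immediate from the definition $\bar p_{tj}=(1+\eps)^x\eps^2\makespan$ with $x=\ceil{\log_{1+\eps}(p_{tj}/(\eps^2\makespan))}$: the ceiling satisfies $(1+\eps)^x\ge p_{tj}/(\eps^2\makespan)$ and $(1+\eps)^{x-1}< p_{tj}/(\eps^2\makespan)$, and multiplying the first inequality by $\eps^2\makespan$ and the second by $(1+\eps)\eps^2\makespan$ gives the two bounds. I would emphasize that this holds uniformly for big and small jobs --- for a small job the exponent $x$ may be zero or negative, but the estimate is unchanged --- so no case distinction is needed here. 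I would also record that $\bar p_{tj}=\infty$ holds precisely when $p_{tj}>\makespan$.

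Next I would handle the forward direction. Given a schedule $\sigma$ for $I$ with makespan at most $\makespan$, observe that $\sigma$ cannot place a job $j$ on a machine of a type $t$ with $p_{tj}>\makespan$, since such a job alone would already overload that machine; hence $\sigma$ uses only finite entries of $\bar p$. Then for every machine $i$ of type $t$ the pointwise bound yields $\sum_{j\in\sigma^{-1}(i)}\bar p_{tj}\le(1+\eps)\sum_{j\in\sigma^{-1}(i)}p_{tj}\le(1+\eps)\makespan$, so $\sigma$, viewed as a schedule for $\bar I$, has makespan at most $(1+\eps)\makespan$.

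Finally, the reverse direction is even shorter: any schedule $\bar\sigma$ for $\bar I$ with finite makespan $\makespan'$ uses only finite entries of $\bar p$, and since $p_{tj}\le\bar p_{tj}$ there, the load of each machine under the identical assignment can only decrease when we pass from $\bar I$ to $I$; hence $\bar\sigma$ is a schedule for $I$ of makespan at most $\makespan'$, i.e.\ without increase. The only points requiring a little care in all of this are the bookkeeping around the $\infty$-entries --- making sure that the schedules under consideration never use them --- and checking that the pointwise bound of the first step is valid also for small jobs, where the exponent $x$ need not be nonnegative; there is otherwise no genuine obstacle.
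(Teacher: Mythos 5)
Your proof is correct and is exactly the standard argument the paper leaves implicit (the lemma is stated without proof there): the pointwise bound $p_{tj}\le\bar p_{tj}<(1+\eps)p_{tj}$ from the geometric rounding, plus the observation that a schedule of makespan at most $\makespan$ never uses an entry with $p_{tj}>\makespan$, i.e.\ an $\infty$-entry of $\bar I$. Your attention to the $\infty$-entries and to negative exponents for small jobs is the right bookkeeping; nothing is missing.
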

We will search for a schedule with makespan $\rmakespan=(1+\eps)\makespan$ for the rounded instance $\bar{I}$.
We establish some notation for the rounded instance.
For any rounded processing time $p$ we denote the set of jobs $j$ with $\bar{p}_{tj}=p$ by $\rjobs_{t}(p)$.
Moreover, for each machine type $t$ let $\sjobsizes_t$ and $\bjobsizes_t$ be the sets of small and big rounded processing times.
Obviously we have $|\sjobsizes_t|+|\bjobsizes_t|\leq n$.
Furthermore $|\bjobsizes_t|$ is bounded by a constant:
Let $N$ be such that $(1+\eps)^N\eps^2\makespan$ is the biggest rounded processing time for all machine type. 
Then we have  $(1+\eps)^{N-1}\eps^{2} \makespan \leq \makespan$ and therefore $|\bjobsizes_t|\leq N\leq\log(1/\eps^{2})/\log(1+\eps) +1\leq 1/\eps\log(1/\eps^{2})+1$ (using $\eps\leq 1$).

\paragraph{MILP.}

For any set of processing times $P$ we call the $P$-indexed vectors of non-negative integers $\ZZ_{\geq 0}^P$ \emph{configurations} (for $P$).
The \emph{size} $\size(C)$ of configuration $C$ is given by $\sum_{p\in P}C_p p$.
For each $t\in[K]$ we consider the set $\Conf_t(\rmakespan)$ of configurations $C$ for the big processing times $\bjobsizes_t$ and with $\size(C)\leq\rmakespan$.
Given a schedule $\sigma$, we say that a machine $i$ of type $t$ obeys a configuration $C$, if the number of big jobs with processing time $p$ that $\sigma$ assigns to $i$ is exactly $C_p$ for each $p\in\bjobsizes_t$.
Since the processing times in $\bjobsizes_t$ are bigger than $\eps^{2}\makespan$ we have $\sum_{p\in\bjobsizes_t}C_p\leq 1/\eps^{2}$ for each $C\in\Conf_t(\rmakespan)$.
Therefore the number of distinct configurations in $\Conf_t(\rmakespan)$ can be bounded by $(1/\eps^{2}+1)^{N}< (1/\eps^{2}+1)^{\nicefrac{1}{\eps}\log(\nicefrac{1}{\eps^{2}})+1}=2^{\log(\nicefrac{1}{\eps^{2}}+1)\nicefrac{1}{\eps}\log(\nicefrac{1}{\eps^{2}})+1}\in 2^{\Oh(\nicefrac{1}{\eps}\log^2 \nicefrac{1}{\eps})}$.

We define a mixed integer linear program $\MILP(\rmakespan)$ in which configurations are chosen integrally and jobs are assigned fractionally to machine types.
Note that we will call a solution of a MILP integral if both the integral and fractional variables have integral values.
We introduce variables $z_{C,t}\in\ZZ_{\geq 0}$ for each machine type $t\in[K]$ and configuration $C\in\Conf_t(\rmakespan)$, and $x_{j,t}\geq 0$ for each machine type $t\in[K]$ and job $j\in\jobs$.
For $\bar{p}_{tj}=\infty$ we set $x_{j,t}=0$. 
Besides this, the MILP has the following constraints:
\begin{align}
\sum_{C\in\Conf_t(\rmakespan)}z_{C,t}                   &= m_t                               & \forall t\in [K] \label{eq:MILP_Conf} \\
\sum_{t\in[K]} x_{j,t}                     &= 1                                    & \forall j\in\jobs \label{eq:MILP_job_assignment} \\
\sum_{j\in\rjobs_{t}(p)}x_{j,t}            &\leq \sum_{C\in\Conf_t(\rmakespan)}C_{p}z_{C,t}   & \forall t\in [K],p\in \bjobsizes_t \label{eq:MILP_big_jobs_bounded_by_conf}\\
\sum_{C\in\Conf_t(\rmakespan)} \size(C) z_{C,t} + \sum_{p\in\sjobsizes_t}p\sum_{j\in\rjobs_{t}(p)} x_{j,t}   &\leq m_t\rmakespan                             & \forall t\in [K] \label{eq:MILP_jobs_fit_into_space}
\end{align}
With constraint (\ref{eq:MILP_Conf}) the number of chosen configurations for each machine type equals the number of machines of this type.
Due to constraint (\ref{eq:MILP_job_assignment}) the variables $x_{j,t}$ encode the fractional assignment of jobs to machine types.
Moreover for each machine type it is ensured with constraint (\ref{eq:MILP_big_jobs_bounded_by_conf}) that the summed up number of big jobs of each size is at most the number of big jobs that are used in the chosen configurations for the respective machine type.
Lastly, (\ref{eq:MILP_jobs_fit_into_space}) guarantees that the overall processing time of the configurations and small jobs assigned to a machine type does not exceed the area $m_t\rmakespan$.
It is easy to see that the MILP models a relaxed version of the problem:
\begin{lemma}\label{lem:MILP_and_schedule}
If there is schedule with makespan $\rmakespan$ there is a feasible (integral) solution of $\MILP(\rmakespan)$, and if there is a feasible integral solution for $\MILP(\rmakespan)$ there is a schedule with makespan at most $\rmakespan+\eps^2\makespan$.
\end{lemma}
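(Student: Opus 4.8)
The plan is to prove the two directions separately. For the forward direction, suppose a schedule $\sigma$ with makespan $\rmakespan$ for $\bar{I}$ is given. For each machine $i$ of type $t$, let $C^{(i)}\in\ZZ_{\geq 0}^{\bjobsizes_t}$ record, for each big processing time $p\in\bjobsizes_t$, the number of big jobs of rounded size $p$ that $\sigma$ assigns to $i$; since the total load on $i$ is at most $\rmakespan$, we have $\size(C^{(i)})\leq\rmakespan$, so $C^{(i)}\in\Conf_t(\rmakespan)$. Set $z_{C,t}$ to be the number of type-$t$ machines $i$ with $C^{(i)}=C$, and set $x_{j,t}=1$ if $\sigma$ assigns $j$ to a machine of type $t$ and $0$ otherwise (note $x_{j,t}=0$ whenever $\bar p_{tj}=\infty$, since such an assignment is impossible at makespan $\rmakespan$). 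Constraint~\eqref{eq:MILP_Conf} holds because every type-$t$ machine contributes exactly one configuration; \eqref{eq:MILP_job_assignment} holds because $\sigma$ assigns each job to exactly one machine, hence one type; \eqref{eq:MILP_big_jobs_bounded_by_conf} holds with equality, since both sides count the big jobs of size $p$ assigned by $\sigma$ to type-$t$ machines; and \eqref{eq:MILP_jobs_fit_into_space} follows by summing the per-machine load bound $\size(C^{(i)})+\sum_{p\in\sjobsizes_t}p\cdot(\text{\# small jobs of size }p\text{ on }i)\leq\rmakespan$ over all $m_t$ machines of type $t$. This gives a feasible integral solution.

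For the reverse direction, suppose we are given a feasible integral solution $(z,x)$ of $\MILP(\rmakespan)$. For each type $t$, open $m_t$ machines and assign configurations to them so that exactly $z_{C,t}$ machines receive configuration $C$; this is possible by \eqref{eq:MILP_Conf}. The total demand for big jobs of size $p$ on type $t$, namely $\sum_C C_p z_{C,t}$, is by \eqref{eq:MILP_big_jobs_bounded_by_conf} at least $\sum_{j\in\rjobs_t(p)}x_{j,t}$, which is an integer since the $x_{j,t}$ corresponding to a single job sum to $1$ over $t$ by \eqref{eq:MILP_job_assignment} and --- here I need the solution to have integral $x$; the lemma posits an integral solution, so each $x_{j,t}\in\{0,1\}$. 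Thus the number of jobs $j$ with $x_{j,t}=1$ and $\bar p_{tj}=p$ is at most the number of "slots" of size $p$ provided by the chosen configurations, so these big jobs can be placed into the configuration slots on type-$t$ machines, one job per slot, and each such machine then carries big-job load exactly $\size$ of its configuration, which is $\leq\rmakespan$. It remains to place the small jobs: for each type $t$, greedily distribute the small jobs $j$ with $x_{j,t}=1$ onto the type-$t$ machines, always adding the next small job to a currently least-loaded machine. Every job assigned this way is feasible (its size is finite on type $t$, since $x_{j,t}=1$ forces $\bar p_{tj}<\infty$). By the standard list-scheduling argument, if some machine ends up with load exceeding $\rmakespan+\eps^2\makespan$, then at the moment its last small job (of size $\leq\eps^2\makespan$) was added, that machine was least loaded, so \emph{all} $m_t$ type-$t$ machines had load $>\rmakespan$ at that point, contradicting the area bound \eqref{eq:MILP_jobs_fit_into_space}, which caps the total type-$t$ load by $m_t\rmakespan$. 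Hence every machine has load at most $\rmakespan+\eps^2\makespan$, giving a schedule of the claimed makespan.

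The only genuinely delicate point is the small-job placement in the reverse direction: one must check that the area constraint~\eqref{eq:MILP_jobs_fit_into_space} accounts for \emph{both} the configuration sizes and the small jobs simultaneously, so that the list-scheduling overflow argument can derive a contradiction from the total-area bound rather than from a per-machine bound. The rest is bookkeeping: the handling of the $\infty$ entries is automatic from $x_{j,t}=0$, and the big-job placement is an exact matching of jobs to configuration slots. I would also remark explicitly that both implications only concern the rounded instance $\bar I$; combining with Lemma~\ref{lem:rounded_instance} to transfer back to $I$ is done elsewhere.
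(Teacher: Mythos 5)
Your proposal is correct and follows essentially the same route as the paper: configurations read off from the schedule for the forward direction, and for the reverse direction big jobs placed into configuration slots via constraint~(\ref{eq:MILP_big_jobs_bounded_by_conf}) followed by a greedy placement of small jobs justified by the area bound~(\ref{eq:MILP_jobs_fit_into_space}). Your least-loaded-machine greedy is only a cosmetic variant of the paper's machine-by-machine filling, and both yield the same $\rmakespan+\eps^2\makespan$ bound.
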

\begin{proof}
Let $\sigma$ be a schedule with makespan $\rmakespan$.
Each machine of type $t$ obeys exactly one configuration from $\Conf_t(\rmakespan)$, and we set $z_{C,t}$ to be the number of machines of type $t$ that obey $C$ with respect to $\sigma$.
Furthermore for a job $j^*$ let $t^*$ be the type of machine $\sigma(j^*)$.
We set $x_{j^*,t^*}=1$ and $x_{j^*,t}=0$ for $t\neq t^*$.
It is easy to check that all conditions are fulfilled.

Now let $(z_{C,t},x_{j,t})$ be an integral solution of $\MILP(\rmakespan)$.
Using (\ref{eq:MILP_job_assignment}) we can assign the jobs to distinct machine types based on the $x_{j,t}$ variables.
The $z_{C,t}$ variables can be used to assign configurations to machines such that each machine receives exactly one configuration using (\ref{eq:MILP_Conf}).
Based on these configurations we can create slots for the big jobs and for each type $t$ we can successively assign all of the big jobs assigned to this type to slots of the size of their processing time, because of (\ref{eq:MILP_big_jobs_bounded_by_conf}).
Now, for each type, we can iterate through the machines and greedily assign small jobs.
When the makespan $\rmakespan$ is exceeded due to some job, we stop assigning to the current machine and continue with the next.
Because of (\ref{eq:MILP_jobs_fit_into_space}), all small jobs can be assigned in this fashion.
Since the small jobs have size at most $\eps^{2}\makespan$, we get a schedule with makespan at most $\rmakespan+\eps^2\makespan$.
\end{proof}

We have $K2^{\Oh(\nicefrac{1}{\eps}\log^2 \nicefrac{1}{\eps})}$ integral variables, i.e., a constant number.
Therefore $\MILP(\makespan)$ can be solved in polynomial time, with the following classical result due to Lenstra \cite{Len83ILP} and Kannan \cite{Kan87ILP}:
\begin{theorem}\label{thm:lenstra_kannan_ilp}
A mixed integer linear program with $d$ integral variables and encoding size $s$ can be solved in time $d^{\Oh(d)}\poly(s)$. 
\end{theorem}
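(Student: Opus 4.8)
The statement is the classical theorem of Lenstra and Kannan on mixed integer programming in fixed dimension, so the plan is to reconstruct its proof. As a first reduction I would pass from ``solving'' (optimizing a linear objective) to deciding feasibility: if the LP relaxation is infeasible or unbounded in a direction that survives integrality this is detected by one LP solve, and otherwise the optimum is attained at a vertex whose encoding size is $\poly(s)$, so $\poly(s)$ binary-search steps on the objective value reduce the task to deciding, in time $d^{\Oh(d)}\poly(s)$, whether a system $\{(y,x) : A_1 y + A_2 x \le b,\ y \in \ZZ^{d}\}$ has a solution and exhibiting one. Second, I would eliminate the continuous variables: the projection $Q = \{\, y \in \RR^{d} : \exists x,\ A_1 y + A_2 x \le b\,\}$ onto the integer coordinates is again a rational polyhedron, and although writing its inequality description explicitly (Fourier--Motzkin) can blow up, this is never needed---membership $y \in Q$ is an LP feasibility question solvable in $\poly(s)$ time, so $Q$ is available as an implicitly presented convex body equipped with a $\poly(s)$-time separation oracle. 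It then remains to solve pure integer feasibility over $Q$ in dimension $d$, and to recover a compatible $x$ at the end by one more LP solve.

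For the core step I would run Lenstra's recursion on the dimension. At a node with current body $Q \subseteq \RR^{d'}$ (initially $d'=d$), first compute a L\"owner--John ellipsoid, i.e.\ an ellipsoid $E$ and center with $E \subseteq Q \subseteq c_{d'} E$ for a scaling factor $c_{d'}$ depending only on $d'$; this is done by the ellipsoid method using the separation oracle, in $\poly(s)$ time. Applying the linear map that turns $E$ into a ball transforms $\ZZ^{d'}$ into a lattice $\Lambda$; run lattice basis reduction (LLL, or Kannan's generalized Korkine--Zolotarev reduction for the sharper constant) on $\Lambda$. Now invoke the flatness alternative: either the inscribed ball is large enough relative to the covering radius of $\Lambda$ read off from the reduced basis that $Q$ is guaranteed to contain a lattice point---which we then locate directly---or else the reduction exposes a primitive dual-lattice vector $w$ for which the width $\max_{y \in Q} w^{\top} y - \min_{y \in Q} w^{\top} y$ (both computed by LP in $\poly(s)$ time) is at most a bound $f(d')$ depending only on $d'$. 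In the latter case $w^{\top} y$ takes at most $f(d')+1$ integer values over $Q$; for each such value $k$ we append the equation $w^{\top} y = k$, re-coordinatize the resulting lower-dimensional polytope, and recurse.

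The accounting gives the stated bound: the recursion has depth at most $d$, the branching factor at each level is at most $f(d)+1$, and with Kannan's flatness and reduction constants $f(d) = d^{\Oh(d)}$, so there are $d^{\Oh(d)}$ leaves; every node performs only polynomially many LP solves, ellipsoid-method calls, and one basis reduction, each $\poly(s)$, and one checks that the linear transformations and added equalities keep all encoding sizes polynomial in $s$ throughout. Multiplying, the procedure runs in $d^{\Oh(d)}\poly(s)$, and a final LP recovers the continuous variables. The delicate part---and the reason the exponent is $d^{\Oh(d)}$ rather than Lenstra's original $2^{\Oh(d^{2})}$---is exactly the flatness step: proving that whenever $Q$ contains no lattice point there is a lattice direction of width bounded by a function of the dimension alone, and driving that function down to $d^{\Oh(d)}$, which is where the stronger basis-reduction machinery is essential. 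The mixed-integer wrapper (projection plus oracle) is comparatively routine, once one observes that every geometric subroutine needs only a separation oracle rather than an explicit description of $Q$.
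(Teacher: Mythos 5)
The paper does not prove this statement at all: it is imported verbatim as the classical result of Lenstra \cite{Len83ILP} and Kannan \cite{Kan87ILP}, so the only meaningful comparison is with that classical proof, which your sketch indeed reconstructs along the standard lines (binary search to reduce optimization to feasibility, projection of the continuous variables handled through an LP-based separation oracle, ellipsoidal rounding, basis reduction, and the flatness-based branch-on-a-thin-direction recursion). The architecture is right, and the mixed-integer wrapper and oracle observations are handled correctly.

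There is, however, one quantitative step in your accounting that does not hold together as written, and it is exactly the step that carries the content of the theorem. You take the per-level branching factor to be $f(d')+1$ with $f(d)=d^{\Oh(d)}$ and conclude that a depth-$d$ recursion has $d^{\Oh(d)}$ leaves; but with branching $d^{\Oh(d)}$ at every level and depth $d$ the leaf count is $d^{\Oh(d^2)}$, i.e.\ $2^{\Oh(d^2\log d)}$, which is not the claimed bound (it is essentially Lenstra's original regime, where the width bound per level is $2^{\Oh(d)}$). What actually makes $d^{\Oh(d)}\poly(s)$ work is that the flatness function is \emph{polynomial} in the dimension: Khinchine-type flatness, in the form used by Kannan (and later sharpenings by Babai, Lagarias--Lenstra--Schnorr, Banaszczyk), guarantees that a convex body containing no lattice point has lattice width $\Oh(\poly(d))$ in a direction exposed by a reduced basis, so the branching factor per level is $\poly(d)$ and $(\poly(d))^d = d^{\Oh(d)}$; Kannan's stronger reduction and a more careful recursion (enumerating lattice points in lower-dimensional slices rather than naively restarting Lenstra's procedure) are what make this polynomial width algorithmically usable. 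So the fix is to replace ``$f(d)=d^{\Oh(d)}$'' by a polynomial flatness bound and attribute the $d^{\Oh(d)}$ to the product over the $d$ levels; as stated, your final multiplication does not yield the running time in the theorem.
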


\paragraph{Rounding.}

In this paragraph we describe how a feasible solution $(z_{C,t},x_{j,t})$ for $\MILP(\rmakespan)$ can be transformed into an integral feasible solution $(\bar{z}_{C,t},\bar{x}_{j,t})$ for $\MILP(\rmakespan+\eps\makespan+\eps^2\makespan)$, where the second MILP is defined using the same configurations but accordingly changed right hand side.
This is achieved via a flow network utilizing flow integrality.

For any (small or big) processing time $p$ let $\eta_{t,p}=\ceil{\sum_{j\in\rjobs_{t}(p)}x_{j,t}}$ be the rounded up (fractional) number of jobs with processing time $p$ that are assigned to machine type $t$.
Note that for big job sizes $p\in\bjobsizes_t$, we have $\eta_{t,p}\leq\sum_{C\in\Conf_t(\rmakespan)}C_{p}z_{C,t}$, because of (\ref{eq:MILP_big_jobs_bounded_by_conf}) and because the right hand side is an integer.

\begin{figure}
\centering
\includegraphics{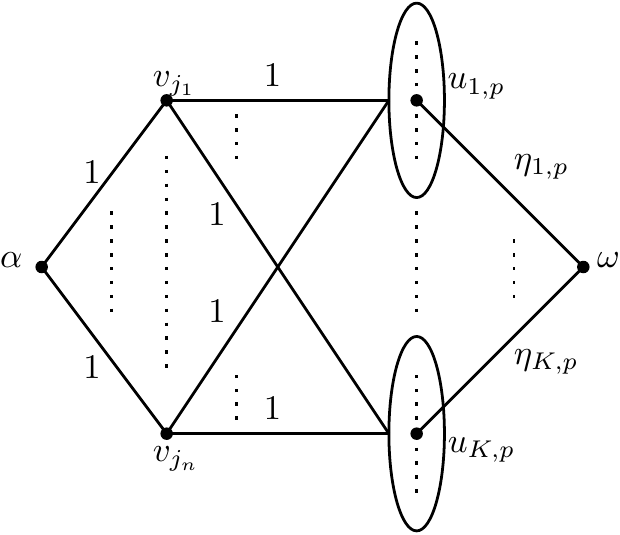}
\caption{A sketch of the flow network.}
\label{fig:flow}
\end{figure}
Now we describe the flow network $G=(V,E)$ with source $\source$ and sink $\sink$.
For each job $j\in\jobs$ there is a job node $v_j$ and an edge $(\source,v_j)$ with capacity $1$ connecting the source and the job node.
Moreover, for each machine type $t$ we have processing time nodes $u_{t,p}$ for each processing time $p\in\bjobsizes_t\cup\sjobsizes_t$.
The processing time nodes are connected to the sink via edges $(u_{t,p},\sink)$ with capacity $ \eta_{t,p}$.
Lastly, for each job $j$ and machine type $t$ with $\bar{p}_{t,j}<\infty$, we have an edge $(v_j,u_{t,\bar{p}_{t,j}})$ with capacity $1$ connecting the job node with the corresponding processing time nodes.
We outline the construction in Figure \ref{fig:flow}.
Obviously we have $|V|\leq (K+1)n+2$ and $|E|\leq (2K+1)n$.
\begin{lemma}\label{lem:max_flow}
$G$ has a maximum flow with value $n$.
\end{lemma}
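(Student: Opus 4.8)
The plan is to exhibit a fractional flow of value $n$ built from the MILP solution $(z_{C,t},x_{j,t})$, and then invoke integrality of maximum flow in a network with integral capacities to conclude that there is an integral flow of the same value; since $n$ is clearly an upper bound on any flow (the source edges have total capacity $n$), this will give the claim. First I would define the candidate flow $f$ as follows: on each edge $(\source,v_j)$ send $1$ unit; on each edge $(v_j,u_{t,\bar p_{tj}})$ send $x_{j,t}$ units; and on each edge $(u_{t,p},\sink)$ send $\sum_{j\in\rjobs_t(p)}x_{j,t}$ units. This is clearly well-defined and nonnegative, and the total flow out of $\source$ is $n$.

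Next I would verify the three feasibility conditions. Flow conservation at each job node $v_j$ holds because $\sum_{t\in[K]}x_{j,t}=1$ by constraint~(\ref{eq:MILP_job_assignment}), so the single incoming unit is exactly matched by the outgoing $x_{j,t}$'s (with the convention $x_{j,t}=0$ when $\bar p_{tj}=\infty$, so those edges are absent and contribute nothing). Flow conservation at each processing-time node $u_{t,p}$ holds by the very definition of the flow on the edge $(u_{t,p},\sink)$: the incoming flow is $\sum_{j\in\rjobs_t(p)}x_{j,t}$, which is exactly what we route onward to $\sink$. For capacity constraints: the source edges and the job-to-type edges carry at most $1$, which is their capacity; and the edge $(u_{t,p},\sink)$ carries $\sum_{j\in\rjobs_t(p)}x_{j,t}\le\ceil{\sum_{j\in\rjobs_t(p)}x_{j,t}}=\eta_{t,p}$, matching its capacity. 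Hence $f$ is a feasible flow of value $n$.

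Finally, since all edge capacities in $G$ are integral (they are $1$ or $\eta_{t,p}\in\ZZ_{\geq0}$) and $G$ is a finite network, the max-flow value is attained by an integral flow; and $f$ certifies that this value is at least $n$, while the cut $\set{\source}$ has capacity $n$, so the value is exactly $n$. I do not expect any real obstacle here — the only thing to be careful about is the bookkeeping around jobs $j$ with $\bar p_{tj}=\infty$ (these simply have no edge to $u_{t,\bar p_{tj}}$ and satisfy $x_{j,t}=0$, so conservation at $v_j$ still uses the full equality in~(\ref{eq:MILP_job_assignment})), and the observation that $\eta_{t,p}$ is a genuine upper bound by construction. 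The role of constraint~(\ref{eq:MILP_big_jobs_bounded_by_conf}) is not needed for this lemma itself but will matter afterwards, when the integral flow is read back as an assignment and one checks it still fits the chosen configurations.
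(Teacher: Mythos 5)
Your proposal is correct and follows essentially the same route as the paper: construct the flow $f((\source,v_j))=1$, $f((v_j,u_{t,\bar p_{tj}}))=x_{j,t}$, $f((u_{t,p},\sink))=\sum_{j\in\rjobs_t(p)}x_{j,t}$ from the MILP solution, check feasibility using constraint~(\ref{eq:MILP_job_assignment}) and the definition of $\eta_{t,p}$, and note that the source cut gives the matching upper bound $n$. The extra remarks on flow integrality are not needed for the lemma itself (they belong to the subsequent rounding step) but are harmless.
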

\begin{proof}
Since the outgoing edges from $\source$ have summed up capacity $n$, $n$ is a trivial upper bound for the maximum flow.
The solution $(z_{C,t},x_{j,t})$ for $\MILP(\rmakespan)$ can be used to design a flow $f$ with value $n$, by setting $f((\source,v_j))=1$, $f((v_j,u_{t,\bar{p}_{t,j}}))=x_{j,t}$ and $f((u_{t,y},\sink))=\sum_{j\in\rjobs_{t}(y)}x_{j,t}$.
It is easy to check that $f$ is indeed a feasible flow with value $n$.
\end{proof}
Using the Ford-Fulkerson algorithm, an integral maximum flow $f^*$ can be found in time $\Oh(|E|f^*)=\Oh(Kn^2)$.
Due to flow conservation, for each job $j$ there is exactly one machine type $t^*$ such that $f((v_j,u_{t^*,\bar{p}_{t^*,j}}))=1$, and we set $\bar{x}_{j,t^*}=1$ and $\bar{x}_{j,t}=0$ for $t\neq t^*$.
Moreover, we set $\bar{z}_{C,t}=z_{C,t}$.
Obviously $(\bar{z}_{C,t},\bar{x}_{j,t})$ fulfils (\ref{eq:MILP_Conf}) and (\ref{eq:MILP_job_assignment}).
Furthermore, (\ref{eq:MILP_big_jobs_bounded_by_conf}) is fulfilled, because of the capacities and because $\eta_{t,p}\leq\sum_{C\in\Conf_t(\rmakespan)}C_{p}z_{C,t}$ for big job sizes $p$.
Utilizing the geometric rounding and the convergence of the geometric series, as well as $\sum_{j\in\rjobs_{t}(p)} \bar{x}_{j,t}\leq\eta_{t,p}<\sum_{j\in\rjobs_{t}(p)} x_{j,t}+1$, we get: 
\begin{equation*}
\sum_{p\in\sjobsizes_t}p\sum_{j\in\rjobs_{t}(p)} \bar{x}_{j,t} < \sum_{p\in\sjobsizes_t}p\sum_{j\in\rjobs_{t}(p)} x_{j,t} + \sum_{p\in\sjobsizes_t}p < \sum_{p\in\sjobsizes_t}p\sum_{j\in\rjobs_{t}(p)} x_{j,t} + \eps^2\makespan\frac{1+\eps}{\eps}
\end{equation*}
Hence, we have $\sum_{C\in\Conf_t(\rmakespan)} \size(C) \bar{z}_{C,t} + \sum_{p\in\sjobsizes_t}p\sum_{j\in\rjobs_{t,s}} \bar{x}_{j,t} < m_t(\rmakespan +\eps T + \eps^2 T)$ and therefore (\ref{eq:MILP_jobs_fit_into_space}) is fulfilled as well.

\paragraph{Analysis.}

The solution found for $\MILP(\rmakespan)$ can be turned into an integral solution for $\MILP(\rmakespan+\eps\makespan+\eps^2\makespan)$.
Like described in the proof of Lemma \ref{lem:MILP_and_schedule} this can easily be turned into a schedule with makespan $\rmakespan+\eps\makespan+\eps^2\makespan+\eps^2\makespan\leq(1+4\eps)\makespan$.
It is easy to see that the running time of the algorithm by Lenstra and Kannan dominates the overall running time.
Since $\MILP(\rmakespan)$ has $\Oh(K/\eps\log 1/\eps+n)$ many constraints, $Kn$ fractional and $K2^{\Oh(\nicefrac{1}{\eps}\log^2 \nicefrac{1}{\eps})}$ integral variables, the running time of the algorithm can be bounded by:
\[(K2^{\Oh(\nicefrac{1}{\eps}\log^2 \nicefrac{1}{\eps})})^{\Oh(K2^{\Oh(\nicefrac{1}{\eps}\log^2\nicefrac{1}{\eps})})}\poly((K/\eps\log 1/\eps)|I|) = 2^{K2^{\Oh(\nicefrac{1}{\eps}\log^2 \nicefrac{1}{\eps})}}\poly(|I|)\]

\section{Better running time}\label{sec:better_running_time}

We improve the running time of the algorithm using techniques that utilize results concerning the existence of solutions for integer linear programs (ILPs) with a certain simple structure.
In a first step we can reduce the running time to be only singly exponential in $1/\eps$ with a technique by Jansen \cite{Jan10QCmaxEPTAS}.
Then we further improve the running time to the one claimed in Theorem \ref{thm:main_result} with a very recent result by Jansen, Klein and Verschae \cite{JKV16ICALP}.
Both techniques rely upon the following result about integer cones by Eisenbrandt and Shmonin \cite{ES06caratheodory}.
\begin{theorem}\label{thm:Eisenbrandt_Shmonin}
Let $X\subset\ZZ^d$ be a finite set of integer vectors and let $b\in\textnormal{int-cone}(X)=\sett{\sum_{x\in X}\lambda_x x}{\lambda_x\in\ZZ_{\geq 0}}$.
Then there is a subset $\tilde{X}\subseteq X$, such that $b\in\textnormal{int-cone}(\tilde{X})$ and $|\tilde{X}|\leq 2d\log(4dM)$, with $M=\max_{x\in X}\| x\|_\infty$.
\end{theorem}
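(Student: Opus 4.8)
The plan is to fix one nonnegative integer representation $b=\sum_{x\in X}\lambda_x x$ that is extremal in a two‑level sense, and to show that its support is automatically small; the desired $\tilde X$ is then exactly that support. Since $b\in\textnormal{int-cone}(X)$ such representations exist, and if we restrict to those of minimum total multiplicity $\sum_{x\in X}\lambda_x$ there are only finitely many (each $\lambda_x$ is then bounded and $X$ is finite); among these I would pick one maximizing $\sum_{x\in X}\lambda_x^2$. Set $\tilde X=\supp(\lambda)=\set{x\in X:\lambda_x>0}$ and $n=|\tilde X|$. By construction $b\in\textnormal{int-cone}(\tilde X)$, so everything reduces to proving $n\leq 2d\log(4dM)$.

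The core claim is that the map $S\mapsto\sum_{x\in S}x$ is injective on the $2^{n}$ subsets $S\subseteq\tilde X$. If not, cancelling the common part of two distinct subsets with equal image yields disjoint, not‑both‑empty sets $S',T'\subseteq\tilde X$ with $\sum_{x\in S'}x=\sum_{x\in T'}x$. Three cases. If one side is empty, say $T'=\emptyset$, then $\sum_{x\in S'}x=0$, and since every $x\in S'$ has $\lambda_x\geq 1$, the vector $\lambda-\mathbf{1}_{S'}$ is again a feasible nonnegative integer representation of $b$ of strictly smaller total multiplicity, contradicting the first‑level minimality. If both sides are nonempty with $|S'|\neq|T'|$, then on the larger side replacing one copy of each of its vectors by one copy of each vector on the smaller side preserves the value $b$ and nonnegativity of the coefficients while strictly decreasing $\sum_x\lambda_x$ — again a contradiction. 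The case $|S'|=|T'|$ is the one I expect to be the real obstacle: here both substitutions $\lambda\mapsto\lambda\pm(\mathbf{1}_{S'}-\mathbf{1}_{T'})$ are feasible and leave both $b$ and $\sum_x\lambda_x$ unchanged, so neither the structure of the problem nor the first‑level objective rules them out; a one‑line computation, however, shows that the two induced changes of $\sum_x\lambda_x^2$ add up to $2(|S'|+|T'|)=4|S'|>0$, so at least one of them is strictly positive, contradicting the second‑level maximality. This is precisely why the secondary objective $\sum_x\lambda_x^2$ is built into the choice of representation.

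Granting injectivity, every subset sum lies in the lattice box $\prod_{k=1}^{d}\big[\sum_{x\in\tilde X,\,x_k<0}x_k,\ \sum_{x\in\tilde X,\,x_k>0}x_k\big]$, whose side lengths are $\sum_{x\in\tilde X}|x_k|\leq nM$, so it contains at most $(nM+1)^d$ integer points. Injectivity thus forces $2^{n}\leq(nM+1)^d$, i.e. $n\leq d\log(nM+1)$, and a routine estimate shows that every integer $n$ satisfying this is bounded by $2d\log(4dM)$. Taking $\tilde X$ to be the support fixed at the outset then gives $b\in\textnormal{int-cone}(\tilde X)$ with $|\tilde X|\leq 2d\log(4dM)$, as required; the degenerate situations (all vectors zero, or $d=0$) are trivial.
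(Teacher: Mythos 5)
Your proof is correct, but note that the paper itself does not prove this statement at all: it is quoted verbatim from Eisenbrand and Shmonin \cite{ES06caratheodory}, so the only meaningful comparison is with their original argument. That argument picks a representation whose \emph{support} is cardinality-minimal and derives a contradiction from a single subset-sum collision: two distinct subsets of the support with equal sums are reduced to disjoint $Y,Z$, and one adds a suitable integer multiple of $\mathbf{1}_Y-\mathbf{1}_Z$ to drive some coefficient to zero, shrinking the support. You instead use a two-level extremal choice (minimize $\sum_x\lambda_x$, then maximize $\sum_x\lambda_x^2$ over the finitely many minimizers) and show that \emph{all} $2^n$ subset sums of the support are distinct; your three-case exchange analysis is sound, and the sum-of-squares computation $\Delta_++\Delta_-=2(|S'|+|T'|)>0$ correctly kills the equal-cardinality case that total multiplicity alone cannot. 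Both routes then rest on the same pigeonhole count ($2^n$ subset sums versus roughly $(nM+1)^d$ lattice points in a box of side $\le nM$) and the same final arithmetic; your box bound is in fact marginally tighter than the $(2nM+1)^d$ of the original. Two small points you leave implicit: the ``routine estimate'' does close (from $n>2d\log(4dM)$ one gets $2^{n/d}>4dM\cdot 2^{n/(2d)}\ge 4dM\,(n/(2d)+1)\ge nM+1$ using $2^t\ge t+1$), but it silently needs $M\ge 1$, and the case $M=0$ (all vectors zero) makes the stated bound $2d\log(4dM)$ meaningless rather than ``trivial''---a degeneracy the original statement also implicitly excludes. What your variant buys is a cleaner structural conclusion (genuine injectivity of subset sums on the support) at the cost of a slightly more elaborate extremal choice; the original gets away with support-minimality alone because it only needs one collision to produce a contradiction.
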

For the first improvement of the running time, this theorem is used to show:
\begin{corollary}\label{cor:few_variables}
$\MILP(\rmakespan)$ has a feasible solution, where for each machine type at most $\Oh(1/\eps \log^2 1/\eps)$ of the corresponding integer variables are non-zero.
\end{corollary}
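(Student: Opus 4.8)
The plan is to apply Theorem~\ref{thm:Eisenbrandt_Shmonin} separately to each machine type $t$, using the integral configuration variables $(z_{C,t})_{C\in\Conf_t(\rmakespan)}$ as the coefficients of an integer conic combination. Fix a feasible solution $(z_{C,t},x_{j,t})$ of $\MILP(\rmakespan)$, which exists because we only invoke this corollary when the MILP is feasible. For a fixed type $t$, I would encode each configuration $C\in\Conf_t(\rmakespan)$ as an integer vector $x(C)\in\ZZ^{d}$ capturing exactly the data that the constraints \eqref{eq:MILP_Conf}--\eqref{eq:MILP_jobs_fit_into_space} read off from the $z$-variables of type $t$: one coordinate that is always $1$ (to reproduce the count constraint \eqref{eq:MILP_Conf}), one coordinate $C_p$ for each big processing time $p\in\bjobsizes_t$ (for \eqref{eq:MILP_big_jobs_bounded_by_conf}), and one coordinate $\size(C)$ (for the area constraint \eqref{eq:MILP_jobs_fit_into_space}). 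Then $b_t:=\sum_{C} z_{C,t}\,x(C)$ lies in $\mathrm{int\text{-}cone}(X_t)$ where $X_t=\{x(C):C\in\Conf_t(\rmakespan)\}$, so Theorem~\ref{thm:Eisenbrandt_Shmonin} yields a subset $\tilde X_t\subseteq X_t$ with $b_t\in\mathrm{int\text{-}cone}(\tilde X_t)$ and $|\tilde X_t|\le 2d\log(4dM_t)$.

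Next I would bound $d$ and $M_t$. The dimension is $d=|\bjobsizes_t|+2$, and by the discussion preceding the MILP paragraph we have $|\bjobsizes_t|\le \tfrac{1}{\eps}\log(\tfrac{1}{\eps^2})+1$, so $d\in\Oh(\tfrac1\eps\log\tfrac1\eps)$. For $M_t=\max_{C}\|x(C)\|_\infty$: each coordinate $C_p$ is at most $1/\eps^2$ since big jobs exceed $\eps^2\makespan$ and $\size(C)\le\rmakespan$; the coordinate $\size(C)$ is at most $\rmakespan\le 2\makespan$ in the natural encoding, but since the processing times are all integer multiples of a common unit (the geometrically rounded values $(1+\eps)^x\eps^2\makespan$ share the factor $\eps^2\makespan$), one can rescale so that $\size(C)$ becomes an integer bounded by $(1/\eps)\cdot(1/\eps^2)\cdot$(number of big sizes)$\in 2^{\Oh(\frac1\eps\log\frac1\eps)}$ — or more cleanly, replace the $\size(C)$ coordinate by the vector of job-count coordinates it is a fixed linear combination of, so that all coordinates are at most $1/\eps^2$ and $M_t\le 1/\eps^2$. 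Either way $\log(4dM_t)\in\Oh(\log\tfrac1\eps)$, hence $|\tilde X_t|\in\Oh(\tfrac1\eps\log^2\tfrac1\eps)$.

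Finally I would assemble the new solution: define $\bar z_{C,t}$ for the $\Oh(\tfrac1\eps\log^2\tfrac1\eps)$ configurations whose encoding lies in $\tilde X_t$ according to the nonnegative integer multipliers from the theorem, and set all other $z_{C,t}$ to zero; keep the $x_{j,t}$ unchanged. Because $x(C)$ was built to reproduce precisely the left-hand sides of \eqref{eq:MILP_Conf}, \eqref{eq:MILP_big_jobs_bounded_by_conf} (the right-hand side) and \eqref{eq:MILP_jobs_fit_into_space} (the configuration part) that depend on the $z$-variables of type $t$, and since $b_t$ is unchanged, all four families of constraints remain satisfied; constraint \eqref{eq:MILP_job_assignment} does not involve $z$ at all. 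Doing this for every $t\in[K]$ independently gives the claimed feasible solution with at most $\Oh(\tfrac1\eps\log^2\tfrac1\eps)$ nonzero integer variables per type.

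The main obstacle is the encoding step: one must choose $x(C)$ so that $M_t$ stays polynomially bounded in $1/\eps$ (giving the crucial $\log\tfrac1\eps$ rather than $\poly(1/\eps)$ factor in $|\tilde X_t|$), while still ensuring that preserving $b_t$ is enough to preserve feasibility of \emph{all} the relevant MILP constraints — in particular that the area constraint \eqref{eq:MILP_jobs_fit_into_space} is controlled, which forces either the rescaling trick or the substitution of $\size(C)$ by its constituent job-count coordinates. Everything else is bookkeeping.
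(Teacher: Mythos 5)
Your proposal is correct and follows essentially the same route as the paper: there, Theorem~\ref{thm:Eisenbrandt_Shmonin} is applied per machine type to the configuration ILP whose columns are the counts $C_p$ together with an all-ones row (so $d=|\bjobsizes_t|+1$ and $M\leq 1/\eps^2$), the $x_{j,t}$ are kept unchanged, and constraint (\ref{eq:MILP_jobs_fit_into_space}) is preserved because the covered number of big jobs of each size---and hence the total size of the chosen configurations---does not change, which is exactly your ``replace the $\size(C)$ coordinate by its constituent job-count coordinates'' variant. Only that variant is needed (and is the one to keep): the rescaling alternative would not literally produce integer entries, since the rounded sizes $(1+\eps)^x\eps^{2}\makespan$ are not integer multiples of a common unit.
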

We get the better running time by guessing the non-zero variables and removing all the others from the MILP.
The number of possibilities of choosing $\Oh(1/\eps \log^2 1/\eps)$ elements out of a set of $2^{\Oh(\nicefrac{1}{\eps}\log^2 \nicefrac{1}{\eps})}$ elements can be bounded by $2^{\Oh(\nicefrac{1}{\eps^2}\log^4 \nicefrac{1}{\eps})}$.
Considering all the machine types we can bound the number of guesses by $2^{\Oh(\nicefrac{K}{\eps^2}\log^4 \nicefrac{1}{\eps})}$.
The running time of the algorithm by Lenstra and Kannan with $\Oh(K/\eps \log^2 1/\eps)$ integer variables can be bounded by: 
\[\Oh(K/\eps \log^2 1/\eps)^{\Oh(\nicefrac{K}{\eps}\log^2 \nicefrac{1}{\eps})}\poly(|I|)=2^{\Oh(K\log(K)\nicefrac{1}{\eps} \log^3\nicefrac{1}{\eps})}\poly(|I|)\]
This yields a running time of: 
\[2^{\Oh(K\log(K)\nicefrac{1}{\eps^2} \log^4 \nicefrac{1}{\eps})}\poly(|I|)\]

In the following we first proof Corollary \ref{cor:few_variables} and then introduce the technique from \cite{JKV16ICALP} to further reduce the running time.

\paragraph{Proof of Corollary \ref{cor:few_variables}.}

We consider the so called \emph{configuration ILP} for scheduling on identical machines.
Let $m'$ be a given number of machines, $P$ be a set of processing times with multiplicities $k_p\in\ZZ_{>0}$ for each $p\in P$ and let $\Conf\subseteq \ZZ_{\geq 0}^P$ be some finite set of configurations for $P$.
The configuration ILP for $m'$, $P$, $k=(k_p)_{p\in P}$, and $\Conf$ is given by:
\begin{align}
\sum_{C\in\Conf} C_p y_{C} & = k_p & \forall p\in P \\ 
\sum_{C\in\Conf}y_{C} & = m' &\\
y_C & \in \ZZ_{\geq 0}   & \forall C\in\Conf
\end{align}
The default case that we will consider most of the time is that $\Conf$ is given by a target makespan $\makespan$ that upper bounds the size of the configurations. 

Let's assume we had a feasible solution $(\tilde{z}_{C,t},\tilde{x}_{j,t})$ for $\MILP(\rmakespan)$.
For $t\in[K]$ and $p\in\bjobsizes_t$ we set $\tilde{k}_{t,p}=\sum_{C\in\Conf_t(\rmakespan)}C_p \tilde{z}_{C,t}$.
We fix a machine type $t$.
By setting $y_{C}=\tilde{z}_{C,t}$, we get a feasible solution for the configuration ILP given by $m_t$, $\bjobsizes_t$, $\tilde{k}_{t}$ and $\Conf_t(\rmakespan)$.
Theorem \ref{thm:Eisenbrandt_Shmonin} can be used to show the existence of a solution for the ILP with only a few non-zero variables:
Let $X$ be the set of column vectors corresponding to the left hand side of the ILP and $b$ be the vector corresponding to the right hand side.
Then $b\in\textnormal{int-cone}(X)$ holds and Theorem \ref{thm:Eisenbrandt_Shmonin} yields that there is a subset $\tilde{X}$ of $X$ with cardinality at most $2(|\bjobsizes_t|+1)\log(4(|\bjobsizes_t|+1)1/\eps^2)\in\Oh(1/\eps \log^2 1/\eps)$ and $b\in\textnormal{int-cone}(\tilde{X})$.
Therefore there is a solution $(\breve{y}_C)$ for the ILP with $\Oh(1/\eps \log^2 1/\eps)$ many non-zero variables.
If we set $\breve{z}_{C,t}=\breve{y}_C$ and $\breve{x}_{j,t}=\tilde{x}_{j,t}$ and perform corresponding steps for each machine type, we get a solution $(\breve{z}_{C,t},\breve{x}_{j,t})$ that obviously satisfies constraints (\ref{eq:MILP_Conf}),(\ref{eq:MILP_job_assignment}) and (\ref{eq:MILP_big_jobs_bounded_by_conf}) of $\MILP(\rmakespan)$.
The last constraint is also satisfied, because the number of covered big jobs of each size does not change and therefore the overall size of the configurations does not change either for each machine type.
This completes the proof of Corollary \ref{cor:few_variables}.

\paragraph{Further Improvement of the Running Time.}

The main ingredient of the technique by Jansen et al. \cite{JKV16ICALP} is a result about the configuration ILP, for the case that there is a target makespan $\makespan'$ upper bounding the configuration sizes.
Let $\Conf(\makespan')$ be the set of configurations with size at most $\makespan'$.
We need some further notation.
The \emph{support} of any vector of numbers $v$ is the set of indices with non-zero entries, i.e., $\supp(v)=\sett{i}{v_i\neq 0}$.
A configuration is called \emph{simple}, if the size of its support is at most $\log(\makespan'+1)$, and \emph{complex} otherwise.
The set of complex configurations from $\Conf(\makespan')$ is denoted by $\Conf^c(\makespan')$.
\begin{theorem}\label{thm:thin_solutions}
Let the configuration ILP for $m'$, $P$, $k$, and $\Conf(\makespan')$ have a feasible solution and let both the makespan $\makespan'$ and the processing times from $P$ be integral.
Then there is a solution $(y_C)$ for the ILP that satisfies the following conditions:
\begin{enumerate}
\item $|\supp(y|_{\Conf^c(\makespan')})|\leq 2(|P|+1)\log(4(|P|+1)\makespan')$ and $y_C\leq 1$ for $C\in\Conf^c(\makespan')$.
\item $|\supp(y)|\leq 4(|P|+1)\log(4(|P|+1)\makespan')$.
\end{enumerate}
\
\end{theorem}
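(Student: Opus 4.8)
The idea is to obtain the two items by combining two essentially separate reductions: an application of Theorem~\ref{thm:Eisenbrandt_Shmonin} to bound the number of distinct configurations used, and a \emph{splitting} operation on complex configurations that forces their multiplicities down to~$1$. Throughout I would view a feasible solution of the configuration ILP for $m'$, $P$, $k$, $\Conf(\makespan')$ as a representation of the vector $\binom{k}{m'}\in\ZZ_{\geq 0}^{P}\times\ZZ_{\geq 0}$ in $\textnormal{int-cone}$ of the columns $\binom{C}{1}$, $C\in\Conf(\makespan')$. The ambient dimension is $d=|P|+1$, and since all processing times are positive integers with $\size(C)\leq\makespan'$, every entry of every such column is at most $\makespan'$; hence $M\leq\makespan'$ in Theorem~\ref{thm:Eisenbrandt_Shmonin} and its cardinality bound $2d\log(4dM)$ is at most $2(|P|+1)\log(4(|P|+1)\makespan')$, exactly the quantity in the statement. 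Applying Theorem~\ref{thm:Eisenbrandt_Shmonin} directly to the columns used by a feasible solution already gives $|\supp(y)|\le 2(|P|+1)\log(4(|P|+1)\makespan')$; the point of the statement is the finer complex/simple–separated bound.

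For the multiplicity half of item~1 I would use the following move. If $C$ is complex, then $|\supp(C)|>\log(\makespan'+1)$, so among the more than $\makespan'+1$ subsets $S\subseteq\supp(C)$ the indicator configurations $\mathbf 1_S$ (with $(\mathbf 1_S)_p=1$ for $p\in S$ and $0$ otherwise) have integral sizes lying in $\{0,1,\dots,\makespan'\}$; by pigeonhole two distinct subsets have the same size, and deleting their intersection yields disjoint nonempty $A,B\subseteq\supp(C)$ with $\size(\mathbf 1_A)=\size(\mathbf 1_B)$. Then $C_1:=C+\mathbf 1_A-\mathbf 1_B$ and $C_2:=C-\mathbf 1_A+\mathbf 1_B$ are again configurations, with $\size(C_i)=\size(C)\leq\makespan'$ and $C_1+C_2=2C$, so decreasing $y_C$ by~$2$ and increasing $y_{C_1},y_{C_2}$ by~$1$ maps a feasible solution with $y_C\geq 2$ to a feasible solution (the two equality constraints and nonnegativity are all preserved). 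Taking a feasible solution that maximizes the bounded, integer-valued potential $\Phi(y)=\sum_C y_C\sum_{p\in P}C_p^{2}$, this move is impossible, since it would change $\Phi$ by $\sum_p(C_1)_p^{2}+\sum_p(C_2)_p^{2}-2\sum_p C_p^{2}=2\sum_p(\mathbf 1_A-\mathbf 1_B)_p^{2}=2(|A|+|B|)>0$. Hence a $\Phi$-maximal solution has $y_C\leq 1$ for every complex~$C$.

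For the cardinality bounds I would split the configuration vector of a solution with $y_C\le 1$ on complex configurations into its complex and simple parts. The complex part is a feasible solution of the configuration ILP for $m^{c}:=\sum_{C\in\Conf^c(\makespan')}y_C$, $P$, $b^{c}:=\sum_{C\in\Conf^c(\makespan')}y_C C$, and $\Conf^c(\makespan')$; the simple part is a feasible solution of the configuration ILP for $m'-m^{c}$, $P$, $k-b^{c}$, and the set of simple configurations of size at most $\makespan'$. Applying Theorem~\ref{thm:Eisenbrandt_Shmonin} to each of these two sub-ILPs (still with $d=|P|+1$ and $M\le\makespan'$) makes each part supported on at most $2(|P|+1)\log(4(|P|+1)\makespan')$ configurations, giving the support bound of item~1 and, by summation, that of item~2.

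The step I expect to be the main obstacle is making these reductions coexist in a single solution. Applying Theorem~\ref{thm:Eisenbrandt_Shmonin} to the complex sub-ILP can reintroduce multiplicities $\ge 2$ on the selected complex configurations, while re-running the splitting move is dangerous because the configurations $C_1,C_2$ it produces need not lie in the small set just selected (when $C$ has all its entries on $A\cup B$ at least $2$, both $C_1$ and $C_2$ are again complex with the same support), so the complex support can grow again. I would address this by interleaving the reductions in the right order and on the right universe of configurations -- reduce complex multiplicities to one first, then pin a small complex set via Theorem~\ref{thm:Eisenbrandt_Shmonin}, then re-maximize $\Phi$ subject to the complex part staying inside a family that is closed under the splitting move while still forcing a small complex support, iterating until both properties hold together -- after which one final, complex-disjoint application of Theorem~\ref{thm:Eisenbrandt_Shmonin} supplies the bound on the simple configurations. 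Verifying that this interleaving terminates with the stated bounds is the technical heart of the argument.
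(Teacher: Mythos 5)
There is a genuine gap, and you have located it yourself. The two halves of your argument pull in opposite directions and you never show they can be satisfied by a single solution: the symmetric splitting move $C\mapsto C\pm(\mathbf 1_A-\mathbf 1_B)$ with the convex potential $\Phi$ correctly yields \emph{some} feasible solution with $y_C\leq 1$ on complex configurations, and Theorem~\ref{thm:Eisenbrandt_Shmonin} correctly yields \emph{some} feasible solution (or sub-solution) with small support, but the theorem requires the multiplicity bound \emph{and} the bound on $|\supp(y|_{\Conf^c(\makespan')})|$ simultaneously. Your $\Phi$-maximization gives no control whatsoever on how many distinct complex configurations appear (each splitting step can enlarge the support, and the number of steps is unbounded a priori), while re-applying Theorem~\ref{thm:Eisenbrandt_Shmonin} to the complex sub-ILP reintroduces multiplicities $\geq 2$, after which your only repair is the splitting move, which can again blow up the complex support. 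The proposed interleaving ("re-maximize $\Phi$ subject to the complex part staying inside a family closed under the splitting move") comes with no invariant and no termination argument, so the technical heart of the statement is exactly what remains unproven.

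For comparison: the paper does not prove this theorem at all — it is quoted from Jansen, Klein and Verschae \cite{JKV16ICALP} — but the paper's own summary of that proof (in the Santa Claus section) points to the ingredient you are missing. There, the multiplicity reduction is done by a \emph{Sparsification Lemma} that exchanges a complex configuration used several times for configurations of the \emph{same size but strictly smaller support}, i.e., a move that makes progress with respect to a decreasing, support-related potential rather than your increasing convexity potential. This is what lets the multiplicity-$1$ property be enforced without losing control of the complex support, so that the Eisenbrand--Shmonin-type counting and the sparsification coexist in one solution. Your pigeonhole construction of the disjoint equal-size sets $A,B$ is the right raw material (it is essentially the same combinatorial step), but used symmetrically it only preserves support inclusion $\supp(C_i)\subseteq\supp(C)$; without a variant that strictly decreases a global support-type measure, the bounds in items (i) and (ii) do not follow.
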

We will call such a solution \emph{thin}.
Furthermore they argue:
\begin{remark}\label{rem:simple_conf}
There are at most $2^{\Oh(\log^2( \makespan')+\log^2(|P|))}$ simple configurations.
\end{remark}
The better running time can be achieved by determining configurations that are equivalent to the complex configurations (via guessing and dynamic programming), guessing the support of the simple configurations, and solving the MILP with few integral variables.
The approach is a direct adaptation of the one in \cite{JKV16ICALP} for our case.
In the following, we explain the additional steps of the modified algorithm in more detail, analyse the running time  and present an outline of the complete algorithm.

We have to ensure that the makespan and the processing times are integral and that the makespan is small.
After the geometric rounding step we scale the makespan and the processing times, such that $\makespan=1/\eps^3$ and $\rmakespan=(1+\eps)/\eps^3$ holds and the processing times have the form $(1+\eps)^x\eps^2\makespan=(1+\eps)^x/\eps$.
Next we apply a second rounding step for the big processing times, setting $\breve{p}_{t,j}=\ceil{\bar{p}_{t,j}}$ for $\bar{p}_{t,j}\in\bjobsizes_t$ and denote the set of these processing times by $\breve{\bjobsizes}_t$.
Obviously we have $|\breve{\bjobsizes}_t|\leq |\bjobsizes_t| \leq  1/\eps\log(1/\eps^{2})+1$.
We denote the corresponding instance by $\breve{I}$.
Since for a schedule with makespan $\makespan$ for instance $I$ there are at most $1/\eps^2$ big jobs on any machine, we get:
\begin{lemma}\label{lem:rounded_instance_2}
If there is a schedule with makespan at most $\makespan$ for $I$, the same schedule has makespan at most $(1+2\eps)\makespan$ for instance $\breve{I}$ and any schedule for instance $\breve{I}$ can be turned into a schedule for $I$ without increase in the makespan.
\end{lemma}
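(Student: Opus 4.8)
The plan is to mirror the argument for Lemma \ref{lem:rounded_instance}, but with an additive accounting of the per-machine error introduced by the second rounding step $\breve{p}_{t,j} = \ceil{\bar{p}_{t,j}}$. First I would observe that going from $\bar{I}$ to $\breve{I}$ only increases processing times (rounding up), so the backward direction is immediate: any schedule for $\breve{I}$ has, on every machine, load at least as large as the same schedule evaluated on $\bar{I}$ (and on $I$, by Lemma \ref{lem:rounded_instance}), hence can be reinterpreted as a schedule for $I$ without increasing the makespan. So the content is entirely in the forward direction.

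For the forward direction, suppose $\sigma$ is a schedule with makespan at most $\makespan$ for $I$. By Lemma \ref{lem:rounded_instance}, $\sigma$ has makespan at most $(1+\eps)\makespan = \rmakespan$ for $\bar{I}$. Now fix a machine $i$ of type $t$ and compare its load under $\breve{I}$ versus $\bar{I}$: the difference is $\sum_{j \in \sigma^{-1}(i)} (\breve{p}_{t,j} - \bar{p}_{t,j})$. Small jobs are unaffected (the second rounding only touches $\bjobsizes_t$), so this sum is over big jobs assigned to $i$, and each term is $\ceil{\bar{p}_{t,j}} - \bar{p}_{t,j} < 1$. The key structural fact — already noted in the excerpt just before the lemma, and this is really the crux — is that after scaling so that $\makespan = 1/\eps^3$, a big job has $\bar{p}_{t,j} > \eps^2 \makespan = 1/\eps$, so a machine with load at most $\makespan = 1/\eps^3$ in the original (scaled) instance carries at most $(1/\eps^3)/(1/\eps) = 1/\eps^2$ big jobs. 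Hence the total increase on machine $i$ is strictly less than $1/\eps^2 = \eps^2 \makespan \le \eps\makespan$ (using $\eps \le 1$), so the new load is at most $\rmakespan + \eps\makespan = (1+\eps)\makespan + \eps\makespan = (1+2\eps)\makespan$.

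The one point requiring a little care is the claim "at most $1/\eps^2$ big jobs per machine": strictly speaking this bounds the number of big jobs on machine $i$ under the schedule $\sigma$ as evaluated on $I$ (or equivalently $\bar{I}$, since rounding preserves which jobs are big), using that $\sigma$ has makespan $\le \makespan$ on $I$; this is exactly the hypothesis of the lemma, so the bound is available. I would state this explicitly, then conclude the additive error bound as above. I do not anticipate any genuine obstacle here — the lemma is a routine bookkeeping step — the only thing to get right is tracking that the per-job rounding error of $<1$ is measured in the scaled units where $\makespan = 1/\eps^3$, which is what makes the aggregate error $\eps\makespan$ rather than something larger.
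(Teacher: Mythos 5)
Your proposal is correct and takes essentially the same route as the paper, which justifies the lemma with exactly the observation you make: under a schedule of makespan at most $\makespan$ for $I$ each machine carries fewer than $1/\eps^2$ big jobs, each of which grows by less than $1$ under the ceiling, so after the scaling $\makespan=1/\eps^3$ the per-machine increase is less than $1/\eps^2=\eps\makespan$, which added to the $(1+\eps)\makespan$ bound from the geometric rounding gives $(1+2\eps)\makespan$, while the backward direction follows since rounding only increases processing times. The only slip is the identity $1/\eps^2=\eps^2\makespan$ (in fact $\eps^2\makespan=1/\eps$; the correct identity is $1/\eps^2=\eps\makespan$), but this does not affect your conclusion since the inequality you actually use, namely that the increase is at most $\eps\makespan$, holds.
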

We set $\rrmakespan=(1+2\eps)\makespan$ and for each machine type $t$ we consider the set of configurations $\Conf_t(\floor{\rrmakespan})$ for $\breve{\bjobsizes}_t$ with size at most $\floor{\rrmakespan}$.
Rounding down $\rrmakespan$ ensures integrality and causes no problems, because all big processing times are integral.
Furthermore let $\Conf^c_t(\floor{\rrmakespan})$ and $\Conf^s_t(\floor{\rrmakespan})$ be the subsets of complex and simple configurations.
Due to Remark \ref{rem:simple_conf} we have:
\begin{equation}\label{eq:simple_conf_bounded}
|\Conf^s_t(\floor{\rrmakespan})| \in 2^{\Oh(\log^2 \floor{\rrmakespan}+\log^2|\breve{\bjobsizes}_t|)} = 2^{\Oh(\log^2 \nicefrac{1}{\eps}))}\end{equation}
Due to Theorem \ref{thm:thin_solutions} (using the same considerations concerning configuration ILPs like in the last paragraph), we get that there is a solution $(\breve{z}_C,\breve{x}_{j,t})$ for $\MILP(\rrmakespan)$ (adjusted to this case) that uses for each machine type $t$ at most $4(|\breve{\bjobsizes}_t|+1)\log(4(|\breve{\bjobsizes}_t|+1)\floor{\rrmakespan})\in\Oh(1/\eps\log^2 1/\eps)$ many configurations from $\Conf_t(\floor{\rrmakespan})$.
Moreover, at most $2(|\breve{\bjobsizes}_t|+1)\log(4(|\breve{\bjobsizes}_t|+1)\floor{\rrmakespan})\in\Oh(1/\eps\log^2 1/\eps)$ complex configurations are used and each of them is used only once.
Since each configuration corresponds to at most $1/\eps^2$ jobs, there are at most $\Oh(1/\eps^3\log^2 1/\eps)$ many jobs for each type corresponding to complex configurations.
Hence, we can determine the number of complex configurations $m_t^c$ for machine type $t$ along with the number of jobs $k^c_{t,p}$ with processing time $p\in\breve{\bjobsizes}_t$ that are covered by a complex configuration  in $(1/\eps^3\log^2 1/\eps)^{\Oh(\nicefrac{K}{\eps}\log \nicefrac{1}{\eps})}=2^{\Oh(\nicefrac{K}{\eps}\log^2 \nicefrac{1}{\eps})}$ many steps via guessing.
Now we can use a dynamic program to determine configurations (with multiplicities) that are equivalent to the complex configurations in the sense that their size is bounded by $\floor{\rrmakespan}$, their summed up number is $m_t^c$ and they cover exactly $k^c_{t,p}$ jobs with processing time $p$.
The dynamic program iterates through $[m_t^c]$ determining ${\breve{\bjobsizes}_t}$-indexed vectors $y$ of non-negative integers with $y_p\leq k^c_{t,p}$.
A vector $y$ computed at step $i$ encodes that $y_p$ jobs of size $p$ can be covered by $i$ configurations from $\Conf_t(\floor{\rrmakespan})$.
We denote the set of configurations the program computes with $\tilde{\Conf}_t$ and the multiplicities with $\tilde{z}_C$ for $C\in\tilde{\Conf}_t$.
It is easy to see that the running time of such a program can be bounded by $\Oh(m_t^c(\prod_{p\in\breve{\bjobsizes}_t}(k^c_{t,p}+1))^2)$.
Using $m_t^c\in\Oh(1/\eps\log^2 1/\eps)$ and $ k^c_{t,p}\in\Oh(1/\eps^3\log^2 1/\eps)$ this yields a running time of $K2^{\Oh(\nicefrac{1}{\eps}\log^2 \nicefrac{1}{\eps})}$, when considering all the machine types.

Having determined configurations that are equivalent to the complex configurations, we may just guess the simple configurations.
For each machine type, there are at most $2^{\Oh(\log^2 1/\eps)}$ simple configurations and the number of configurations we need is bounded by $\Oh(1/\eps\log^2 1/\eps)$.
Therefore, the number of needed guesses is bounded by $2^{\Oh(\nicefrac{K}{\eps}\log^4 \nicefrac{1}{\eps})}$.
Now we can solve a modified version of $\MILP(\rrmakespan)$ in which $z_C$ is fixed to $\tilde{z}_C$ for $C\in\tilde{\Conf}_t$ and only variables $z_{C'}$ corresponding to the guessed simple configurations are used.
The running time for the algorithm by Lenstra and Kannan can again be bounded by $2^{\Oh(K\log K \nicefrac{1}{\eps}\log^3 \nicefrac{1}{\eps})}\poly(|I|)$.
Thus we get an overall running time of $2^{\Oh(K\log K \nicefrac{1}{\eps}\log^4 \nicefrac{1}{\eps})}\poly(|I|)$. 
Considering the two cases $2^{\Oh(K\log K \nicefrac{1}{\eps}\log^4 \nicefrac{1}{\eps})}<\poly(|I|)$ and $2^{\Oh(K\log K \nicefrac{1}{\eps}\log^4 \nicefrac{1}{\eps})}\geq\poly(|I|)$ yields the claimed running time of:
\[2^{\Oh(K\log(K) \nicefrac{1}{\eps}\log^4 \nicefrac{1}{\eps})}+\poly(|I|)\] 
Hence, the proof of the part of Theorem \ref{thm:main_result} concerning unrelated scheduling is complete.
We conclude this section with a summary of the complete algorithm.
\begin{algorithm}
\
\begin{enumerate}
\item Simplify the input via scaling, geometric rounding and a second rounding step for the big jobs with an error of $2\eps\makespan$. We now have $\makespan=1/\eps^3$.
\item Guess the number of machines $m_t^c$ with a complex configuration for each machine type $t$ along with  the number $k^c_{t,p}$ of jobs with processing time $p$ covered by complex configurations for each big processing time $p\in\breve{\bjobsizes}_t$.
\item \label{algo:step_complex}For each machine type $t$ determine via dynamic programming configurations that are equivalent to the complex configurations.
\item \label{algo:guess_simple}Guess the simple configurations used in a thin solution.
\item Build the simplified mixed integer linear program $\MILP(\rrmakespan)$ in which the variables for configurations from step \ref{algo:step_complex} are fixed and only integral variables for configurations guessed in step \ref{algo:guess_simple} are used. Solve it with the algorithm by Lenstra and Kannan.
\item If there is no solution for each of the guesses, report that there is no solution with makespan $T$.
\item Generate an integral solution for $\MILP(\rrmakespan+\eps\makespan+\eps^2\makespan)$ via a flow network utilizing flow integrality.
\item With an additional error of $\eps^2\makespan$ due to the small jobs the integral solution is turned into a schedule. 
\end{enumerate}
\end{algorithm}

\section{The Santa Claus Problem}\label{sec:santa}

Adapting the result for unrelated scheduling we achieve an EPTAS for the Santa Claus problem.
It is based on the basic EPTAS together with the second running time improvement.
In the following we show the needed adjustments.

\paragraph{Preliminaries.}

Wlog. we present a $(1-\eps)^{-1}$-approximation instead of a $(1+\eps)$-approximation.
Moreover, we assume $\eps<1$ and that $m\leq n$, because otherwise the problem is trivial.

The dual approximation method can be applied in this case as well.
However, since we have no approximation algorithm with a constant rate, the binary search is slightly more expensive.
Still we can use for example the algorithm by Bezáková and Dani \cite{BD05} to find a bound $B$ for the optimal makespan with $B\leq\Opt\leq (n-m+1)B$.
In $\Oh(\log((n-m)/\eps))$ many steps we can find a guess for the optimal minimum machine load $\makespan^*$ such that $\makespan^*\leq\Opt<\makespan^*+\eps B$ and therefore $\makespan^*>(1-\eps)\Opt$.
It suffices to find a procedure that given an instance and a guess $\makespan$ outputs a solution with objective value at least $(1-\alpha\eps)\makespan$ for some constant $\alpha$.

Concerning the simplification of the input, we first scale the makespan and the running times such that $\makespan=1/\eps^3$.
Then we set the processing times that are bigger than $\makespan$ equal to $\makespan$.
Next we round the processing times down via geometric rounding:
We set $\bar{p}_{t,j}=(1-\eps)^x\eps^2\makespan$ with $x=\ceil{\log_{1-\eps}p_{tj}/(\eps^{2} \makespan)}$.
The number of big jobs for any machine type is again bounded by $1/\eps\log(1/\eps^2)\in\Oh(1/\eps\log 1/\eps)$.
For the big jobs we apply the second rounding step setting $\breve{p}_{t,j}=\floor{\bar{p}_{t,j}}$ and denote the resulting big processing times with $\breve{\bjobsizes}_t$, the corresponding instance by $\breve{I}$ and the occurring small processing times by $\sjobsizes_t$.
The analogue of Lemma \ref{lem:rounded_instance_2} holds, i.e. at the cost of $2\eps\makespan$ we may search for a solution for the rounded instance $\breve{I}$.
We set $\rrmakespan=(1-2\eps)\makespan$.

\paragraph{MILP.}

In the Santa Claus problem it makes sense to use configurations of size bigger than $\rrmakespan$.
Let $P=\floor{\rrmakespan}+\max\sett{\breve{p}_{t,j}}{t\in[K],j\in\breve{\bjobsizes}_t}$. 
It suffices to consider configurations with size at most $P$ and for each machine type $t$ we denote the corresponding set of configurations by $\Conf_t(P)$.
Again we can bound $\Conf_t(P)$ by $2^{\Oh(\nicefrac{1}{\eps}\log^2 \nicefrac{1}{\eps})}$.
The MILP has integral variables $z_{C,t}$ for each such configuration and fractional ones like before.
The constraints (\ref{eq:MILP_Conf}) and (\ref{eq:MILP_job_assignment}) are adapted changing only the set of configurations and for constraint (\ref{eq:MILP_big_jobs_bounded_by_conf}) additionally in this case the left-hand side has to be at least as big as the right hand side.
The last constraint (\ref{eq:MILP_jobs_fit_into_space}) has to be changed more.
For this we partition $\Conf_t(P)$ into the set $\hat{\Conf}_t(P)$ of big configurations with size bigger than $\floor{\rrmakespan}$ and the set $\check{\Conf}_t(P)$ of small configurations with size at most $\floor{\rrmakespan}$.
The changed constraint has the following form:
\begin{align}
\sum_{C\in\check{\Conf}_t(P)} \size(C) z_{C,t} + \sum_{p\in\sjobsizes_t}p\sum_{j\in\rjobs_{t}(p)} x_{j,t}   &\geq (m_t-\sum_{C\in\hat{\Conf}_t(P)}z_{C,t})\rrmakespan                             & \forall t\in [K] \label{eq:MILP_santa}
\end{align}
We denote the resulting MILP by $\MILP(\rrmakespan,P)$ and get the analogue of Lemma \ref{lem:MILP_and_schedule}:
\begin{lemma}\label{lem:MILP_and_schedule_Santa}
If there is schedule with minimum machine load $\rrmakespan$, there is a feasible (integral) solution of $\MILP(\rrmakespan,P)$; and if there is a feasible integral solution for $\MILP(\rrmakespan,P)$, there is a schedule with minimum machine load at least $\rrmakespan-\eps^2\makespan$.
\end{lemma}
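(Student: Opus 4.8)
The plan is to mirror the proof of Lemma~\ref{lem:MILP_and_schedule}, adapting the two directions to the reversed objective and the big/small configuration split. For the first direction, suppose $\sigma$ is a schedule for $\breve{I}$ with minimum machine load at least $\rrmakespan$. Each machine of type $t$ obeys a unique configuration of big jobs; since a machine could carry several big jobs exceeding $\floor{\rrmakespan}$ in total, we need that this configuration lies in $\Conf_t(P)$. This is exactly why $P$ was chosen as $\floor{\rrmakespan}$ plus the largest big processing time: once the load of a machine's big jobs passes $\floor{\rrmakespan}$, it suffices to include big jobs up to the first time the threshold is crossed, so we may assume each machine obeys a configuration of size at most $P$ (truncate extra big jobs, they only help feasibility of the remaining constraints). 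Set $z_{C,t}$ to the number of type-$t$ machines obeying $C$, set $x_{j,t}=1$ for the type $t$ of $\sigma(j)$ and $0$ otherwise. Then~\eqref{eq:MILP_Conf} and~\eqref{eq:MILP_job_assignment} are immediate; the adapted~\eqref{eq:MILP_big_jobs_bounded_by_conf} (now an equality or ``$\geq$'') holds because every big job is counted in the configuration of its machine; and~\eqref{eq:MILP_santa} holds because every machine not obeying a big configuration has load $\geq\rrmakespan$ coming from small-configuration big jobs plus small jobs.

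For the converse, let $(z_{C,t},x_{j,t})$ be an integral feasible solution of $\MILP(\rrmakespan,P)$. Use~\eqref{eq:MILP_job_assignment} to assign each job to a single machine type, and~\eqref{eq:MILP_Conf} to distribute one configuration to each machine. On machines with a big configuration $C\in\hat{\Conf}_t(P)$, the big jobs alone give load $\size(C)>\floor{\rrmakespan}\geq\rrmakespan$ (here we use integrality of $\rrmakespan$ after rounding down, or rather $\floor{\rrmakespan}\ge\rrmakespan-1$ combined with the scaling $\makespan=1/\eps^3$, so the slack is absorbed below); these machines are already happy, and we can fill their big-job slots using the adapted constraint~\eqref{eq:MILP_big_jobs_bounded_by_conf}. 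For the remaining machines we have to cover the big jobs assigned to small configurations and then the small jobs. The big jobs fit into the small-configuration slots exactly as in Lemma~\ref{lem:MILP_and_schedule}. The small jobs are distributed greedily: iterate over the small-configuration machines of type $t$, adding small jobs until a machine reaches load $\rrmakespan$, then move on. Constraint~\eqref{eq:MILP_santa} guarantees that the total size available on small-configuration machines of type $t$—namely $\sum_{C\in\check{\Conf}_t(P)}\size(C)z_{C,t}+\sum_{p\in\sjobsizes_t}p\sum_{j\in\rjobs_t(p)}x_{j,t}$—is at least $(m_t-\sum_{C\in\hat{\Conf}_t(P)}z_{C,t})\rrmakespan$, so the greedy fill makes every such machine reach load at least $\rrmakespan$, except possibly the last one processed, which may be short by the size of one small job, i.e.\ by less than $\eps^2\makespan$. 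Hence every machine ends with load at least $\rrmakespan-\eps^2\makespan$.

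The main obstacle, and the place where care is needed, is the direction from schedule to MILP solution: unlike the makespan case, a machine's big-job multiset need not be a configuration of bounded size, so one must argue that truncating to the ``prefix up to the first crossing of $\floor{\rrmakespan}$'' both stays within $\Conf_t(P)$ and preserves all the inequalities—this is precisely what the definition of $P$ buys, and it must be spelled out. A secondary subtlety is bookkeeping the one ``short'' machine per type in the greedy small-job assignment and confirming its deficiency is bounded by a single small job's size $\le\eps^2\makespan$; since there are $K$ types this costs at most $K$ underloaded machines, but each is short by at most $\eps^2\makespan$, so the claimed bound $\rrmakespan-\eps^2\makespan$ on the minimum load over all machines still holds. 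Everything else is a routine transcription of the proof of Lemma~\ref{lem:MILP_and_schedule} with inequalities reversed.
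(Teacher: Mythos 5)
Your first direction is essentially the paper's argument: the paper also replaces the big-job multiset of an overloaded machine by a subset of total size in $(\floor{\rrmakespan},P]$ (it removes biggest jobs until the total drops to at most $P$, while you add jobs until the threshold is first crossed -- same effect), and the remaining constraints are verified just as you indicate. The genuine gap is in your converse direction, in the greedy distribution of the small jobs. Under your rule -- keep adding small jobs until the current machine's load \emph{reaches} $\rrmakespan$, overshoot allowed -- a completed machine can consume up to almost $\rrmakespan+\eps^2\makespan$ of the area budget in constraint (\ref{eq:MILP_santa}), not $\rrmakespan$. These overshoots accumulate: the budget only bounds the \emph{total} deficit of the machines handled after the supply is exhausted by (number of completed machines) times $\eps^2\makespan$, so it is not true that only the last machine processed can be short, nor that its shortfall is at most one small job. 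Concretely, with all configurations empty and all small jobs of one size $s\leq\eps^2\makespan$ that does not divide $\rrmakespan$, every completed machine overshoots by up to almost $s$, and once there are enough machines the remaining supply runs out with several machines receiving little or nothing, i.e.\ load far below $\rrmakespan-\eps^2\makespan$. The conclusion you state is correct, but the accounting you give does not establish it.

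The fix is the stopping rule the paper uses: stop assigning to the current machine as soon as the next small job \emph{would} push its load above $\rrmakespan$, and do not add that job. Then each machine consumes at most $\rrmakespan$ of the left-hand side of (\ref{eq:MILP_santa}), matching the per-machine budget $\rrmakespan$ on the right-hand side, so the greedy does not run out of small jobs before every small-configuration machine has been treated; and every treated machine ends with load strictly greater than $\rrmakespan-\eps^2\makespan$, because the job that was withheld has size at most $\eps^2\makespan$. Note that under this rule potentially \emph{every} machine (not just one per type) falls slightly short of $\rrmakespan$, which is exactly why the lemma claims $\rrmakespan-\eps^2\makespan$ rather than $\rrmakespan$. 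The rest of your second direction (big configurations have integral size exceeding $\floor{\rrmakespan}$ and hence at least $\rrmakespan$; slots filled using the reversed constraint (\ref{eq:MILP_big_jobs_bounded_by_conf}); leftover jobs assigned arbitrarily) agrees with the paper.
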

\begin{proof}
Let $\sigma$ be a schedule with minimum machine load $\rrmakespan$.
We first consider only the machines for which the received load due to big jobs is at most $P$.
These machines obey exactly one configuration from $ \Conf_t(P)$ and we set the corresponding integral variables like before.
The rest of the integral variables we initially set to $0$.
Now consider a machine of type $t$ that receives more than $P$ load due to big jobs.
We can successively remove a biggest job from the set of big jobs assigned to the machine until we reach a subset with summed up processing time at most $P$ and bigger than $\floor{\rrmakespan}$.
This set corresponds to a big configuration $C'$ and we increment the variable $z_{C',t}$.
The fractional variables are set like in the unrelated scheduling case and it is easy to verify that all constraints are satisfied.

Now let $(z_{C,t},x_{j,t})$ be an integral solution of MILP($\rrmakespan$).
Again we can assign the jobs to distinct machine types based on the $x_{j,t}$ variables and the configurations to machines based on the $z_{C,t}$ variables such that each machine receives at most one configuration.
Based on these configurations we can create slots for the big jobs and for each type $t$ we can successively assign big jobs until all slots are filled.
Now we can, for each type, iterate through the machines that received small configurations and greedily assign small jobs.
When the makespan $\rmakespan$ would be exceeded due to some job, we stop assigning to the current machine (not adding the current job) and continue with the next machine.
Because of (\ref{eq:MILP_santa}) we can cover all of the machines by this.
Since the small jobs have size at most $\eps^{2}\makespan$ we get a schedule with makespan at least $\rmakespan-\eps^2\makespan$.
There may be some remaining jobs that can be assigned arbitrarily.
\end{proof}

To solve the MILP we adapt the techniques by Jansen et al. \cite{JKV16ICALP}, which is slightly more complicated for the modified MILP.
Unlike in the previous section in order to get a thin solution that still fulfils (\ref{eq:MILP_santa}), we have to consider big and small configurations separately for each machine type.
Note that for a changed solution of the MILP (\ref{eq:MILP_santa}) is fulfilled, if the summed-up size of the small and the summed up number of the big configurations is not changed. 
Given a solution $(\tilde{z}_{C,t},\tilde{x}_{j,t})$ for the MILP and a machine type $t$, we set $\check{m}_t=\sum_{C\in\check{\Conf}_t(P)}\tilde{z}_{C,t}$ and $\hat{m}_t=\sum_{C\in\hat{\Conf}_t(P)}\tilde{z}_{C,t}$, and furthermore  $\check{k}_{t,p}=\sum_{C\in\check{\Conf}_t(P)}C_p \tilde{z}_{C,t}$ and $\hat{k}_{t,p}=\sum_{C\in\hat{\Conf}_t(P)}C_p \tilde{z}_{C,t}$ for $p\in\breve{\bjobsizes}_t$.
We get two configuration ILPs:
The first is given by $\check{m}_t$, $\breve{\bjobsizes}_t$, $\check{k}_{t}$ and $\check{\Conf}_t(P)$ and we call it the \emph{small} ILP.
The second is given by $\hat{m}_t$, $\breve{\bjobsizes}_t$, $\hat{k}_{t}$ and $\hat{\Conf}_t(P)$ and we call it the \emph{big} ILP.
For the small ILP the set of configurations is given by the upper bound $\floor{\rrmakespan}$ on the configuration size and we define the simple and complex configurations accordingly denoting them by $\check{\Conf}^s(P)$ and $\check{\Conf}^c(P)$ respectively.
We can directly apply Theorem \ref{thm:thin_solutions} to the small ILP like before without changing the summed-up size of the small configurations.
This is not the case for the big ILP because in this case the set of configurations is defined by an upper and lower bound for the configuration size and hence Theorem \ref{thm:thin_solutions} can not be applied directly.
Note that considering the set of configurations given just by the upper bound $P$ is not an option, since this could change the number of big configurations that are used. 
However, when looking more closely into the proof of Theorem \ref{thm:thin_solutions} given in \cite{JKV16ICALP}, it becomes apparent that the result can easily be adapted.
For this we call a configuration $C$ in this case simple if $|\supp(C)|\leq\log(P+1)$ and complex otherwise and denote the corresponding sets by $\hat{\Conf}^s(P)$ and $\hat{\Conf}^c(P)$ respectively.
Without going into details we give the outline how the proof can be adjusted to this case:

The main tools in the proof are variations of Theorem \ref{thm:Eisenbrandt_Shmonin} and the so called Sparsification Lemma.
Theorem \ref{thm:Eisenbrandt_Shmonin} actually works with any set of configurations and therefore we can restrict its use to big configuration. 
Moreover, the Sparsification Lemma is used to exchange complex configurations that are used multiple times with configurations that have a smaller support but the same size.
Therefore big configurations are exchanged only with other big configurations.
Moreover, the Sparsification Lemma still holds when considering a set of configurations with a lower and upper bound for the size.

Hence, there is a thin solution for the big ILP and obviously the summed-up number of configurations stays the same.
Summarizing we get:
\begin{corollary}
If MILP$(\breve{T})$ has a solution, there is also a solution $(z_{C,t},x_{j,t})$ such that for each machine type $t$:
\begin{enumerate}
\item $|\supp(y|_{\check{\Conf}^c_t(P)})|\leq 2(|\breve{\bjobsizes}_t|+1)\log(4(|\breve{\bjobsizes}_t|+1)\floor{\rrmakespan})$, $|\supp(y|_{\hat{\Conf}^c_t(P)})|\leq 2(|\breve{\bjobsizes}_t|+1)\log(4(|\breve{\bjobsizes}_t|+1)P)$ and $z_{C,t}\leq 1$ for $C\in\check{\Conf}^c_t(P)\cup\hat{\Conf}^c_t(P) $.
\item $|\supp(z_t)|\leq 4(|\breve{\bjobsizes}_t|+1)( \log(4(|\breve{\bjobsizes}_t|+1)\floor{\rrmakespan}) + \log(4(|\breve{\bjobsizes}_t|+1)P))$.
\end{enumerate}
\
\end{corollary}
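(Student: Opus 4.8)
The plan is to take any feasible solution $(\tilde{z}_{C,t},\tilde{x}_{j,t})$ for $\MILP(\rrmakespan,P)$ and, for each machine type $t$ separately, replace the configuration part $(\tilde{z}_{C,t})_{C}$ by a sparse one while keeping the two quantities that constraint (\ref{eq:MILP_santa}) is sensitive to untouched: the total number of big configurations used and the total size contributed by the small configurations. First I would split $\tilde{z}_{t}$ into its restriction to $\check{\Conf}_t(P)$ and to $\hat{\Conf}_t(P)$, and from these read off the data $\check{m}_t,\check{k}_t$ and $\hat{m}_t,\hat{k}_t$ as in the paragraph preceding the statement. This yields two independent configuration ILPs, the \emph{small} ILP (over $\check{\Conf}_t(P)$, which is exactly $\Conf(\lfloor\rrmakespan\rfloor)$ for the processing times $\breve{\bjobsizes}_t$) and the \emph{big} ILP (over $\hat{\Conf}_t(P)$, configurations with size in $(\lfloor\rrmakespan\rfloor,P]$), each of which has a feasible solution by construction, namely the corresponding restriction of $\tilde{z}_t$.

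For the small ILP I would apply Theorem~\ref{thm:thin_solutions} verbatim: $\lfloor\rrmakespan\rfloor$ and the processing times in $\breve{\bjobsizes}_t$ are integral (the second rounding step ensures this), so the theorem gives a thin solution $\check{y}_t$ with the claimed support bounds (item~1 for $\check{\Conf}^c_t(P)$ and the first half of item~2), and since $\check{y}_t$ satisfies the same ILP equations it covers exactly $\check{k}_{t,p}$ jobs of each size and has total size equal to $\sum_{C}\size(C)\tilde{z}_{C,t}$ over $\check{\Conf}_t(P)$ — so the small-configuration contribution to (\ref{eq:MILP_santa}) is unchanged. For the big ILP, Theorem~\ref{thm:thin_solutions} does not apply off the shelf because $\hat{\Conf}_t(P)$ is defined by both a lower bound $\lfloor\rrmakespan\rfloor$ and an upper bound $P$ on the size. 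Here I would redefine ``simple'' relative to $P$ (support at most $\log(P+1)$) and re-run the proof of Theorem~\ref{thm:thin_solutions} with this set: its two ingredients are Theorem~\ref{thm:Eisenbrandt_Shmonin}, which is stated for an arbitrary finite vector set $X$ and so may be invoked with $X$ = columns of the big ILP, and the Sparsification Lemma, which exchanges a heavily-used complex configuration for one of smaller support but \emph{equal size} — hence a big configuration is only ever traded for another big configuration, and the size bounds $(\lfloor\rrmakespan\rfloor,P]$ are preserved throughout. This produces a thin $\hat{y}_t$ with support bound $2(|\breve{\bjobsizes}_t|+1)\log(4(|\breve{\bjobsizes}_t|+1)P)$ on the complex part and $z_C\le 1$ there, and crucially $\sum_C \hat{y}_{C,t}=\hat{m}_t$, so the count of big configurations is unchanged.

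Finally I would set $z_{C,t}=\check{y}_{C,t}$ for $C\in\check{\Conf}_t(P)$ and $z_{C,t}=\hat{y}_{C,t}$ for $C\in\hat{\Conf}_t(P)$, and $x_{j,t}=\tilde{x}_{j,t}$, repeating this for every type. Constraints (\ref{eq:MILP_Conf}) and (\ref{eq:MILP_job_assignment}) hold because the per-type configuration count and the job assignment are untouched; the (two-sided) version of (\ref{eq:MILP_big_jobs_bounded_by_conf}) holds because $\sum_C C_p z_{C,t}=\check{k}_{t,p}+\hat{k}_{t,p}=\sum_C C_p\tilde{z}_{C,t}$ for each $p$; and (\ref{eq:MILP_santa}) holds because, as noted, neither the small-configuration size sum nor the big-configuration count changed, while the small-jobs term is identical. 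The support bounds of items~1 and~2 follow by adding the small-ILP and big-ILP contributions, giving $|\supp(z_t)|\le 4(|\breve{\bjobsizes}_t|+1)(\log(4(|\breve{\bjobsizes}_t|+1)\lfloor\rrmakespan\rfloor)+\log(4(|\breve{\bjobsizes}_t|+1)P))$. The main obstacle is the re-examination of the proof of Theorem~\ref{thm:thin_solutions} in \cite{JKV16ICALP} for the two-sided size constraint: one has to verify that every step — in particular the Sparsification Lemma — respects the lower size bound, which is exactly why it only exchanges big configurations with big ones; everything else is bookkeeping that parallels the one-objective case in Section~\ref{sec:better_running_time}.
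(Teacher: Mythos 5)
Your proposal is correct and follows essentially the same route as the paper: split the configuration variables per type into the small and big configuration ILPs, apply Theorem \ref{thm:thin_solutions} directly to the small one, and adapt its proof for the big one by redefining simple configurations with respect to $P$ and observing that Theorem \ref{thm:Eisenbrandt_Shmonin} works for any configuration set while the Sparsification Lemma only exchanges configurations of equal size, so the two-sided size bound is respected. Your explicit bookkeeping that the big-configuration count and the small-configuration total size (and hence constraint (\ref{eq:MILP_santa})) are preserved is exactly the observation the paper states before the corollary.
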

Note that like before the terms above can be bounded by $\Oh(1/\eps\log^2 1/\eps)$.
Utilizing this corollary we can again solve the MILP rather efficiently.
For this we have to guess the numbers $\check{m}_t^c$ and $\hat{m}_t^c$ of machines that are covered by small and big complex configurations respectively.
In addition we guess like before the numbers of big jobs corresponding to the complex configurations.
With this we can determine via dynamic programming suitable configurations.
For the small configurations we can use the same dynamic program as before and for the second one we can use a similar one that guarantees that we find big configurations.
In the MILP we fix the big configurations we have determined and guess the non-zero variables corresponding to the simple configurations.
Although this procedure is a little bit more complicated than in the unrelated machine case, the bound for the running time remains the same.

\paragraph{Rounding.}

To get an integral solution of the MILP we build a similar flow network.
However in this case  $\eta_{t,p}=\floor{\sum_{j\in\rjobs_{t}(p)}x_{j,t}}$ is set to be the rounded \emph{down} (fractional) number of jobs with processing time $p$ that are assigned to machine type $t$.
We get $\eta_{t,p}\geq\sum_{C\in\Conf(T)}C_{\ell}z_{C,t}$ for big processing times $p$.
The flow network looks basically the same, with one important difference: 
The $(u_{t,p},\sink)$ have a \emph{demand} of $ \eta_{t,p}$ and an capacity of $\infty$.
We may introduce demands of $0$ for all the other edges.
The analogue of Lemma \ref{lem:max_flow} holds, that is, the flow network has a (feasible) maximum flow with value $n$.
Given such a flow we can build a new solution for the MILP changing the $x_{j,t}$ variables based on the flow decreasing the load due to small jobs by at most $\eps T + \eps^2 T$.

Flow networks with demands can be solved with a two-phase approach that first finds a feasible flow and than augments the flow until a max flow is reached.
The first problem can be reduced to a max flow problem without demands in a flow network that is rather similar to the original one with at most two additional nodes and $\Oh(|V|)$ additional edges.
Flow integrality still can be used.
For details we refer to \cite{AMO93network}.
The running time again can be bounded by $\Oh(Kn^2)$. 
Hence the overall running time of the algorithm is $2^{\Oh(K\log(K) \nicefrac{1}{\eps}\log^4 \nicefrac{1}{\eps})}+\poly(|I|)$, which concludes the proof of Theorem \ref{thm:main_result}.

\section{Uniform Machinetypes}\label{sec:uniform}

We consider the problem of unrelated scheduling with a constant number $K$ of uniform machine types.
In this version of the problem the input is as follows:
Each job has a size $p_{tj}$ for each uniform machine type $t$ and each machine $i$ has a speed value $s_i$ and a type $t_i$. 
The processing time of job $j$ on machine $i$ is given by $p_{ij} = p_{t_ij}/s_i$.

We present an EPTAS and it has the same basic structure as the ones presented so far.
However, both the MILP and its rounding are considerably more complicated and can be seen as a combination of the techniques from Section \ref{sec:basic_eptas} with ideas from~\cite{Jan10QCmaxEPTAS}.
Note that in this section we have taken less effort to get a small running time in order to keep the presentation of the result less technical.

We set $\machs_t = \sett{i\in \machs}{t_i=t}$ for each $t\in [K]$ and $ s^{(t)}_{\max} = \max\sett{s_i}{i\in\machs_t}$. 
In the following, we refer to uniform machine types as machine types or just types.

\paragraph{Preliminaries.}

Again, we may assume that a target makespan $T$ for instance $I$ is given and we employ geometric rounding to both the job sizes and machine speeds. 
More precisely, if a job $j$ has a size bigger than $T s^{(t)}_{\max}$ for a machine type $t\in[K]$, we set $\bar{p}_{tj}=\infty$.
For each job $j$ with $p_{tj}<\infty$, we set $\bar{p}_{tj}=(1+\eps)^x\eps^2 T s^{(t)}_{\max}$ with $x=\ceil{\log_{1+\eps}(p_{tj}/(\eps^2 T s^{(t)}_{\max}))}$.
Moreover, we set $\bar{s}_{i} = s^{(t)}_{\max}/(1+\eps)^y$ with $y =\ceil{\log_{1+\eps}(s^{(t)}_{\max}/s_i)}$ and call the rounded instance $\bar{I}$.
\begin{lemma}\label{lem:uni_rounded_instance}
If there is a schedule with makespan at most $\makespan$ for $I$, the same schedule has makespan at most $(1+\eps)^2\makespan$ for instance $\bar{I}$ and any schedule for instance $\bar{I}$ can be turned into a schedule for $I$ without increase in the makespan.\qed
\end{lemma}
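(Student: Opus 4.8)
The statement is a standard two-direction rounding argument, and the plan is to treat each direction separately, tracking how the two independent geometric roundings (one on job sizes, one on machine speeds) each contribute a factor of $(1+\eps)$. The key quantity to control is the processing time $\bar p_{ij} = \bar p_{t_ij}/\bar s_i$ in $\bar I$ compared with $p_{ij} = p_{t_ij}/s_i$ in $I$. First I would record the two elementary inequalities coming from the rounding definitions: since $\bar p_{tj} = (1+\eps)^x \eps^2 T s^{(t)}_{\max}$ with $x = \lceil \log_{1+\eps}(p_{tj}/(\eps^2 T s^{(t)}_{\max}))\rceil$, we have $p_{tj} \le \bar p_{tj} < (1+\eps)p_{tj}$ whenever $\bar p_{tj}<\infty$; and since $\bar s_i = s^{(t)}_{\max}/(1+\eps)^y$ with $y = \lceil \log_{1+\eps}(s^{(t)}_{\max}/s_i)\rceil$, we have $\bar s_i \le s_i < (1+\eps)\bar s_i$, i.e. $s_i/(1+\eps) < \bar s_i \le s_i$. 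Note also that $\bar p_{tj}=\infty$ exactly when $p_{tj} > T s^{(t)}_{\max} \ge T s_i$, i.e. when job $j$ alone would already exceed the target makespan on any machine of type $t$; such an assignment cannot occur in a schedule of makespan $\le T$, so setting $\bar p_{tj}=\infty$ loses nothing in the forward direction.

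For the forward direction, suppose $\sigma$ has makespan at most $T$ for $I$. Fix a machine $i$ of type $t$. For every job $j$ assigned to $i$ we have $p_{tj} \le T s_i \le T s^{(t)}_{\max}$, so $\bar p_{tj}<\infty$ and $\bar p_{tj} < (1+\eps) p_{tj}$. Combining with $\bar s_i \ge s_i/(1+\eps)$ gives
\[
\bar p_{ij} \;=\; \frac{\bar p_{tj}}{\bar s_i} \;<\; \frac{(1+\eps)p_{tj}}{s_i/(1+\eps)} \;=\; (1+\eps)^2 \frac{p_{tj}}{s_i} \;=\; (1+\eps)^2 p_{ij}.
\]
Summing over all jobs on $i$, the load of $i$ under $\sigma$ in $\bar I$ is at most $(1+\eps)^2$ times its load in $I$, hence at most $(1+\eps)^2 T$; taking the maximum over machines gives the claim. (One has to be slightly careful that the right-hand inequality is strict, so the bound $(1+\eps)^2 T$ is obtained with room to spare, but this is immaterial.)

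For the reverse direction, let $\sigma$ be any schedule for $\bar I$; I would first argue it never assigns a job $j$ to a machine $i$ of type $t$ with $\bar p_{tj}=\infty$ (otherwise its makespan is infinite and there is nothing to prove), so only finite processing times occur, and for each such assignment $\bar p_{tj} \ge p_{tj}$ while $\bar s_i \le s_i$, giving $\bar p_{ij} = \bar p_{tj}/\bar s_i \ge p_{tj}/s_i = p_{ij}$. Therefore the same assignment $\sigma$, viewed as a schedule for $I$, has every machine load no larger than in $\bar I$, so its makespan does not increase. The main (and only) subtlety worth flagging is bookkeeping the direction of each inequality correctly — rounding job sizes \emph{up} but speeds \emph{down} both make processing times in $\bar I$ larger, which is what makes the reverse direction immediate and forces the $(1+\eps)^2$ blow-up (rather than $(1+\eps)$) in the forward direction; everything else is a direct substitution into $p_{ij}=p_{t_ij}/s_i$. \qed
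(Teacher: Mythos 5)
Your proof is correct and is exactly the standard argument the paper has in mind (the lemma is stated with its proof omitted, analogous to Lemma~\ref{lem:rounded_instance}): rounding job sizes up and speeds down each cost a factor $(1+\eps)$ in the forward direction, make the reverse direction monotone, and the $\infty$-entries are harmless because such a job alone would exceed $T$ on any machine of that type. No gaps; the bookkeeping of the inequality directions and the treatment of the $\bar p_{tj}=\infty$ case are handled correctly.
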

Therefore, it suffices to search for a schedule for instance $\bar{I}$ with makespan $\bar{\makespan} := (1+\eps)^2\makespan$.
For the sake of simplicity, we do not use the $(\bar{\,\cdot\,})$-notation in the following, i.e., we assume that the instance is already rounded and the makespan properly increased.

We fix some notation:
A job size $p$ is called \emph{huge} for a speed $s$, if $p > T s$; \emph{big}, if $p\leq T s$ and $p > \eps^2 T s$; and \emph{small} otherwise.
We will not consider assigning jobs on machines for whose speeds they are huge. 
For each machine type $t$, we denote the set of occurring speeds $\sett{s_i}{i\in\machs_t}$ by $\speeds_t$; the set of machines of type $t$ and speed $s$ by $\machs_{t,s}$; and set $m_{t,s}:=|\machs_{t,s}|$. 
For each machine type $t$ and speed $s$, let $\sjobsizes_{t,s}$ and $\bjobsizes_{t,s}$ be the sets of occurring small and big processing times.
Furthermore, let $\jobsizes_t$ be the set of all occurring job sizes for type $t$.
Like before, we have $|\bjobsizes_{t,s}|\in\Oh(1/\eps\log 1/\eps)$.
For any processing time $p$ we denote the set of jobs $j$ with $p_{tj}=p$ by $\rjobs_{t}(p)$.

\paragraph{Separation of Machines.}

We will consider configurations for each machine type $t$ and speed value $s\in\speeds_t$.
However, the number of distinct speed values could be dependent in $m$ and we can not effort to introduce integral variables in the MILP for each of them.
Instead, we will introduce integral variables only for the fastest speeds of each type and round the fractional variables.
For the rounding approach, we will need a constant number of machines that receive some load from the slow speeds, and furthermore the speeds of these machines have to be faster than the slow speeds by some constant factor.
This leads to a separation of the machines into three groups $\sgroup{t}{i}$ for $i\in[3]$ for each machine type $t$.
This is done in a way, such that for $j>i$ the machines in group $\sgroup{t}{i}$ are faster than the ones in group $\sgroup{t}{j}$.
For $ i\in[3]$ and $ \mathrm{opt} \in \set{\min,\max}$, we set $s^{(t)}_{i,\mathrm{opt}} := \mathrm{opt}\sett{s_i}{i\in\sgroup{t}{i}}$.
The partition is defined by two parameters.
The first parameter \[\paramcard := \max\sett{|\bjobsizes_{t,s}| + 1}{\forall t\in[K],s\in\speeds_t}\in \Oh(1/\eps\log 1/\eps)\] controls the number of machines in the first group and the second $\paramgap := 1/\eps^2$ the speed-gap between the first and the third group.
More precisely:
\begin{itemize}
\item $\sgroup{t}{1}$ contains the $\paramcard$ fastest machines of type $t$.
\item $\sgroup{t}{2}$ contains all machines of type $t$ that are not contained in $\sgroup{t}{1}$ and whose speed is bigger than $\paramgap s^{(t)}_{1,\min}$.
\item $\sgroup{t}{3}$ contains the rest of the machines of type $t$.
\end{itemize}
Note that $\sgroup{t}{2}$ and $\sgroup{t}{3}$ might be empty.
We denote the occurring speeds in group $\sgroup{t}{i}$ by $\speeds_{t,i}$ and call the speeds from $\speeds_{t,1}\cup\speeds_{t,2}$ \emph{fast} and the rest \emph{slow}.
With these definitions we have $(\speeds_{t,1}\cup\speeds_{t,2})\cap\speeds_{t,3}=\emptyset$ and $|\speeds_{t,1}|,|\speeds_{t,2}|\in\Oh(1/\eps\log(1/\eps))$, i.e., the fast and slow speed values are distinct and we have only a constant number of fast speed values.

\paragraph{MILP.}

For each machine type $t$ and speed $s\in\speeds_t$ we consider the set $\Conf_t(s\makespan)$ of configurations $C$ for the big processing times $\bjobsizes_{t,s}$ and with $\size(C)\leq s\makespan$.
Note that $|\Conf_t(s\makespan)|\in 2^{\Oh(\nicefrac{1}{\eps}\log^2 \nicefrac{1}{\eps})}$ and therefore $|\bigcup_{s\in\speeds_{t,1}\cup\speeds_{t,2}}\Conf_t(s\makespan)|\in 2^{\Oh(\nicefrac{1}{\eps}\log^2 \nicefrac{1}{\eps})}$.

The MILP formulation in this scenario follows the same basic ideas, but is more complicated than before.
We assign jobs fractionally to machines types.
For the fast machine speeds we chose configurations integrally and for the slow ones fractionally.
Furthermore, we fractionally assign job sizes to machine speeds for which they are small. 
Lastly, we count the number of jobs of each job size that are assigned to each machine type.
If the job size is big on some fast machine of the machine type, we require an integral number of jobs.
More precisely, we introduce the following variables:
\begin{itemize}
\item Configuration variables $z^{(t,s)}_{C}$ for each machine type $t\in[K]$, occurring speed $s\in\speeds_t$ and configuration $C\in\Conf_t(s\makespan)$.
If $s$ is fast, we require $z^{(t,s)}_{C}\in\ZZ_{\geq 0}$ and otherwise $z^{(t,s)}_{C}\geq 0$.
\item Job assignment variables $x_{j,t}\geq 0$ for each machine type $t\in[K]$ and job $j\in\jobs$.
\item Job size assignment variables $y^{(t)}_{p,s}\geq 0$ for each machine type $t\in[K]$, speed $s\in\speeds_t$ and job size $p\in\sjobsizes_{t,s}$.
\item Counting variables $u^{(t)}_{p}$ for each machine type $t\in[K]$ and job size $p\in\jobsizes_t$.
If there is a fast speed $s\in\speeds_{t,1}\cup\speeds_{t,2}$, such that $p\in\bjobsizes_{t,s}$ we require $u^{(t)}_{p}\in\ZZ_{\geq 0}$ and otherwise $u^{(t)}_{p}\geq 0$.
\end{itemize}
Now the MILP is given by the above variables and the following constraints:
\begin{align}
\sum_{C\in\Conf_t(s\makespan)}z^{(t,s)}_{C}     &= m_{t,s}      & \forall t\in [K],s\in\speeds_t \label{eq:uni_MILP_Conf} \\
\sum_{t\in[K]} x_{j,t}     &= 1     & \forall j\in\jobs \label{eq:uni_MILP_job_assignment} \\
\sum_{j\in\rjobs_{t}(p)}x_{j,t}     &\leq u^{(t)}_{p}    & \forall t\in [K],p\in\jobsizes_t \label{eq:uni_MILP_big_jobs_bounded_by_conf}\\
\sum_{s:p\in\bjobsizes_{t,s}}\sum_{C\in\Conf_t(s\makespan)} C_{p}z^{(t,s)}_{C} + \sum_{s: p\in\sjobsizes_{t,s}}y^{(t)}_{p,s}    &\geq u^{(t)}_{p}    &\forall t\in [K],p\in\jobsizes_t \label{eq:uni_MILP_cardinality_job_sizes}\\
\sum_{C\in\Conf_t(s\makespan)} \size(C) z^{(t,s)}_{C} + \sum_{p\in\sjobsizes_{t,s}} p y^{(t)}_{p,s}    &\leq m_{t,s}sT     & \forall t\in [K],s\in\speeds_t \label{eq:uni_MILP_jobs_fit_into_space}
\end{align}
The constraints (\ref{eq:uni_MILP_Conf}) and (\ref{eq:uni_MILP_job_assignment}) are very similar to constraints for the other MILPs that we consider.
For each machine type it is ensured with the constraints (\ref{eq:uni_MILP_big_jobs_bounded_by_conf}) and (\ref{eq:uni_MILP_cardinality_job_sizes}) that the summed up number of jobs of each size is covered by the the chosen configurations and the small job assignments.
Furthermore, (\ref{eq:uni_MILP_jobs_fit_into_space}) guarantees that the overall processing time of the configurations and small jobs assigned to a machine speed for each type does not exceed the available area.
\begin{lemma}\label{lem:uni_MILP_and_schedule}
If there is schedule with makespan $\makespan$, there is a feasible (integral) solution of $\MILP(\makespan)$; and if there is a feasible integral solution for $\MILP(\makespan)$, there is a schedule with makespan at most $(1+\eps^2)T$.
\end{lemma}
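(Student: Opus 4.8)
The plan is to mirror the proof of Lemma~\ref{lem:MILP_and_schedule}, inserting the extra bookkeeping forced by the separation of machines into the groups $\sgroup{t}{i}$ and by the distinction between fast and slow speeds. For the forward direction, I would start from a schedule $\sigma$ with makespan $\makespan$ on the rounded instance. For every machine type $t$ and every speed $s\in\speeds_t$, each machine $i\in\machs_{t,s}$ obeys exactly one configuration from $\Conf_t(s\makespan)$ for its big jobs (this set is nonempty and contains the relevant configuration because $\size(C)\le s\makespan$ holds by the definition of ``big'' and the bound on the number of big jobs per machine). For fast speeds we let $z^{(t,s)}_C$ count how many machines of that type and speed obey $C$; for slow speeds we use the same counts, which happen to be integral here but are only required to be nonnegative. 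We set $x_{j,t}=1$ iff $\sigma(j)$ has type $t$, and $y^{(t)}_{p,s}$ to be the number of small jobs of rounded size $p$ that $\sigma$ places on speed-$s$ machines of type $t$; finally $u^{(t)}_p:=\sum_{j\in\rjobs_t(p)}x_{j,t}$, which is integral whenever $p$ is big on some fast speed of type $t$ since then all those jobs are counted integrally. Constraints (\ref{eq:uni_MILP_Conf})--(\ref{eq:uni_MILP_jobs_fit_into_space}) then all follow by direct inspection: (\ref{eq:uni_MILP_Conf}) and (\ref{eq:uni_MILP_job_assignment}) are immediate, (\ref{eq:uni_MILP_big_jobs_bounded_by_conf}) holds with equality by the choice of $u^{(t)}_p$, (\ref{eq:uni_MILP_cardinality_job_sizes}) holds because the big jobs of size $p$ on type $t$ are exactly those appearing in the chosen configurations and the small ones are exactly those counted by the $y^{(t)}_{p,s}$, and (\ref{eq:uni_MILP_jobs_fit_into_space}) holds because the left-hand side is exactly the total load $\sigma$ puts on the $m_{t,s}$ machines of type $t$ and speed $s$, which is at most $m_{t,s}sT$ by the makespan bound.

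For the reverse direction, I would take an integral solution $(z,x,y,u)$ of $\MILP(\makespan)$ and build a schedule in the same style as in Lemma~\ref{lem:MILP_and_schedule}. First assign each job $j$ to the unique type $t$ with $x_{j,t}=1$ (using (\ref{eq:uni_MILP_job_assignment})). For each type $t$ and speed $s$, use the $z^{(t,s)}_C$ to hand one configuration to each of the $m_{t,s}$ machines via (\ref{eq:uni_MILP_Conf}), creating big-job slots of the prescribed sizes. By (\ref{eq:uni_MILP_big_jobs_bounded_by_conf}) and (\ref{eq:uni_MILP_cardinality_job_sizes}), for each job size $p$ the number of jobs of type $t$ and size $p$ is at most $u^{(t)}_p$, which in turn is at most the number of big slots of size $p$ plus the total small-capacity $\sum_{s:p\in\sjobsizes_{t,s}}y^{(t)}_{p,s}$ allotted to size $p$; so we can place all big jobs of type $t$ into big slots of exactly their size, and the remaining jobs of type $t$ are small on the speeds whose $y$-budget we will use for them. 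Then, speed by speed, we iterate over the machines of type $t$ and speed $s$ and greedily pack the small jobs assigned to that (type, speed) pair, moving to the next machine whenever adding a job would push the load above $sT$; constraint (\ref{eq:uni_MILP_jobs_fit_into_space}) guarantees that the available space $m_{t,s}sT$ suffices to accommodate the configuration sizes plus all these small jobs, so the greedy packing never runs out of machines. Since a small job has size at most $\eps^2 s T$ on a speed-$s$ machine, a machine's load exceeds $sT$ by at most $\eps^2 s T$, i.e.\ its processing time is at most $(1+\eps^2)T$; this holds for every machine, so the resulting schedule has makespan at most $(1+\eps^2)\makespan$.

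The one place that needs a little care beyond transcribing the earlier argument is the interaction of the counting variables $u^{(t)}_p$ with the mixed integrality pattern: a size $p$ may be big on some speeds of type $t$ and small on others, and $u^{(t)}_p$ is only forced integral when $p$ is big on a \emph{fast} speed. In the forward direction this is fine because whenever $u^{(t)}_p$ is required integral it equals a sum of integer job-counts; in the reverse direction one has to observe that the jobs of size $p$ placed into big slots are exactly accounted for by the big part of the right-hand side of (\ref{eq:uni_MILP_cardinality_job_sizes}), and the leftover jobs of that size, whose number is an integer bounded by $\sum_{s:p\in\sjobsizes_{t,s}}y^{(t)}_{p,s}$, are all small and can be distributed among the speeds $s$ with $p\in\sjobsizes_{t,s}$ so that at most $\lceil y^{(t)}_{p,s}\rceil$-many go to speed $s$ while the \emph{total} small load respects (\ref{eq:uni_MILP_jobs_fit_into_space}) up to the same $\eps^2 sT$ slack per machine already absorbed into the $(1+\eps^2)$ factor. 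The separation into groups $\sgroup{t}{1},\sgroup{t}{2},\sgroup{t}{3}$ and the fast/slow distinction play no role in \emph{this} lemma's correctness—they matter only for the subsequent step of bounding the number of integral variables and rounding the fractional configuration variables—so I would keep the proof clean by not invoking $\paramcard$ or $\paramgap$ at all here. I expect the main obstacle to be purely expository: stating the greedy small-job packing precisely enough that the error is visibly $\eps^2 T$ per machine and no more, exactly as in the proof of Lemma~\ref{lem:MILP_and_schedule}.
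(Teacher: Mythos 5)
Your proof is correct and takes essentially the same route as the paper's: set the variables by counting jobs per type/speed for the forward direction, and for the reverse direction assign jobs to types, configurations to machines, fill big slots using (\ref{eq:uni_MILP_big_jobs_bounded_by_conf}) and (\ref{eq:uni_MILP_cardinality_job_sizes}), distribute the remaining (small) jobs per the $y^{(t)}_{p,s}$ budgets, and pack greedily with an additive error of $\eps^2\makespan$ via (\ref{eq:uni_MILP_jobs_fit_into_space}). The only superfluous element is the $\lceil y^{(t)}_{p,s}\rceil$ hedge: in an integral MILP solution all variables, including the $y$'s, are integral by the paper's convention, so no rounding of the small-job budgets is needed.
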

\begin{proof}
Given a schedule $\sigma$ with makespan $T$, each machine of type $t$ with speed $s$ obeys exactly one configuration from $\Conf_t(s\makespan)$ and we can set the variables $z_{C,t}$ accordingly.
Furthermore, we set $x_{j,t_{\sigma(j)}}=1$, $x_{j^*,t}=0$ for $t\neq t_{\sigma(j)}$, $u^{(t)}_p := |\sett{j}{t_{\sigma(j)}=t, p_{t,j}=p}|$, and $y^{(t)}_{p,s} := |\sett{j}{t_{\sigma(j)}=t,s_{\sigma(j)}=s, p_{t,j}=p}|$.
It is easy to check that all conditions are fulfilled.

Like we did in the proof of Lemma \ref{lem:MILP_and_schedule}, given an integral solution $(z,x,y,u)$ we can assign the jobs to machine types, and configurations to machines.
Moreover, based one the $y^{(t)}_{p,s}$ variables we can assign jobs of size $p$ that are assigned to type $t$ to machines of speed $s$ on which they are small.
Because of (\ref{eq:uni_MILP_big_jobs_bounded_by_conf}) and (\ref{eq:uni_MILP_cardinality_job_sizes}) this can be done such that the remaining jobs of size $p$ can be scheduled into slots provided by configurations.
At this point each unscheduled job is assigned to a type and a speed.
Utilizing (\ref{eq:uni_MILP_jobs_fit_into_space}), these jobs can be scheduled greedyly with an additive error of $\eps^2T$.
\end{proof}
\begin{lemma}
The MILP has $\Oh(K(n+m))$ many constraints, $\Oh(Knm) + m2^{\Oh(\nicefrac{1}{\eps}\log^2 \nicefrac{1}{\eps})}$ many variables and $K2^{\Oh(\nicefrac{1}{\eps}\log^2 \nicefrac{1}{\eps})}$ integral variables. It can be solved in time: \[2^{\Oh(K\log(K)\nicefrac{1}{\eps^3} \log^5\nicefrac{1}{\eps})}\poly(|I|)\]
\end{lemma}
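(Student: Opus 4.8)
The plan is to read off the first three bounds directly from the constraint system and the variable declarations, and then to obtain the running time by the same strategy as in Section~\ref{sec:better_running_time}: shrink the MILP by guessing a sparse support for the integral configuration variables, justified by Theorem~\ref{thm:Eisenbrandt_Shmonin} exactly as in the proof of Corollary~\ref{cor:few_variables}, and feed the shrunken MILP to the algorithm of Lenstra and Kannan (Theorem~\ref{thm:lenstra_kannan_ilp}).

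For the number of constraints, each of~(\ref{eq:uni_MILP_Conf}) and~(\ref{eq:uni_MILP_jobs_fit_into_space}) has one inequality per pair $(t,s)$ with $t\in[K]$ and $s\in\speeds_t$, and $\sum_{t\in[K]}|\speeds_t|\le\sum_{t\in[K]}m_t=m$; constraint~(\ref{eq:uni_MILP_job_assignment}) has $n$ inequalities; and each of~(\ref{eq:uni_MILP_big_jobs_bounded_by_conf}) and~(\ref{eq:uni_MILP_cardinality_job_sizes}) has one inequality per pair $(t,p)$ with $p\in\jobsizes_t$, which is $\Oh(Kn)$ since $|\jobsizes_t|\le n$; summing gives $\Oh(K(n+m))$. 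For the number of variables there are $Kn$ variables $x_{j,t}$, at most $Kn$ variables $u^{(t)}_p$, at most $n\sum_{t\in[K]}|\speeds_t|\le nm\le Knm$ variables $y^{(t)}_{p,s}$, and at most $\sum_{t\in[K]}\sum_{s\in\speeds_t}|\Conf_t(s\makespan)|\le m\,2^{\Oh(\nicefrac{1}{\eps}\log^2\nicefrac{1}{\eps})}$ variables $z^{(t,s)}_C$, which totals $\Oh(Knm)+m\,2^{\Oh(\nicefrac{1}{\eps}\log^2\nicefrac{1}{\eps})}$. Finally, integrality is required only for the variables $z^{(t,s)}_C$ with $s$ fast and for the variables $u^{(t)}_p$ with $p$ big for some fast speed. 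By the separation of machines each type has only $|\speeds_{t,1}\cup\speeds_{t,2}|\in\Oh(\nicefrac{1}{\eps}\log\nicefrac{1}{\eps})$ fast speeds, so the first group numbers $K\cdot\Oh(\nicefrac{1}{\eps}\log\nicefrac{1}{\eps})\cdot 2^{\Oh(\nicefrac{1}{\eps}\log^2\nicefrac{1}{\eps})}=K\,2^{\Oh(\nicefrac{1}{\eps}\log^2\nicefrac{1}{\eps})}$; since each fast speed contributes only $|\bjobsizes_{t,s}|\in\Oh(\nicefrac{1}{\eps}\log\nicefrac{1}{\eps})$ big job sizes, the second group numbers $\Oh(K\nicefrac{1}{\eps^2}\log^2\nicefrac{1}{\eps})$, which is again $K\,2^{\Oh(\nicefrac{1}{\eps}\log^2\nicefrac{1}{\eps})}$.

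Passing the MILP with $K\,2^{\Oh(\nicefrac{1}{\eps}\log^2\nicefrac{1}{\eps})}$ integral variables directly to Theorem~\ref{thm:lenstra_kannan_ilp} would be doubly exponential in $\nicefrac{1}{\eps}$, so I would first sparsify the integral configuration variables. Fixing a feasible solution and a pair $(t,s)$ with $s$ fast, the restriction $(z^{(t,s)}_C)_C$ is a feasible solution of the configuration ILP for $m_{t,s}$, $\bjobsizes_{t,s}$, the covered-job counts $k_{s,p}:=\sum_C C_p z^{(t,s)}_C$, and $\Conf_t(s\makespan)$, and Theorem~\ref{thm:Eisenbrandt_Shmonin}, applied with $|\bjobsizes_{t,s}|+1$ rows and maximum entry at most $\nicefrac{1}{\eps^2}$, yields a replacement solution with only $\Oh(\nicefrac{1}{\eps}\log^2\nicefrac{1}{\eps})$ non-zero configurations that leaves $m_{t,s}$ and every $k_{s,p}$ unchanged; hence the modified vector is still feasible for the whole MILP. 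Doing this for all fast $(t,s)$ shows the MILP admits a solution whose integral configuration variables have support of size $\Oh(K\nicefrac{1}{\eps^2}\log^3\nicefrac{1}{\eps})$. I would then guess, for each of the $\Oh(K\nicefrac{1}{\eps}\log\nicefrac{1}{\eps})$ fast pairs $(t,s)$, which $\Oh(\nicefrac{1}{\eps}\log^2\nicefrac{1}{\eps})$ of the $2^{\Oh(\nicefrac{1}{\eps}\log^2\nicefrac{1}{\eps})}$ configurations are used, for a total of $\bigl(2^{\Oh(\nicefrac{1}{\eps^2}\log^4\nicefrac{1}{\eps})}\bigr)^{\Oh(K\nicefrac{1}{\eps}\log\nicefrac{1}{\eps})}=2^{\Oh(K\nicefrac{1}{\eps^3}\log^5\nicefrac{1}{\eps})}$ guesses. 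For each guess I delete the remaining integral configuration variables; the resulting MILP has $\Oh(K\nicefrac{1}{\eps^2}\log^3\nicefrac{1}{\eps})$ integral variables (the surviving configurations together with the integral $u$-variables) and polynomial encoding size, so Theorem~\ref{thm:lenstra_kannan_ilp} solves it in time $2^{\Oh(K\log(K)\nicefrac{1}{\eps^2}\log^4\nicefrac{1}{\eps})}\poly(|I|)$. Multiplying by the number of guesses gives the claimed $2^{\Oh(K\log(K)\nicefrac{1}{\eps^3}\log^5\nicefrac{1}{\eps})}\poly(|I|)$.

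The step that takes some care is the variable reduction: unlike in Section~\ref{sec:better_running_time}, a single rounded job size $p$ may be big for one fast speed and small for another, so the configuration ILPs attached to different speeds interact through the counting constraints~(\ref{eq:uni_MILP_big_jobs_bounded_by_conf}) and~(\ref{eq:uni_MILP_cardinality_job_sizes}). What makes the speed-by-speed sparsification legitimate is that Theorem~\ref{thm:Eisenbrandt_Shmonin} preserves both $m_{t,s}$ and the vector $(k_{s,p})_p$, and therefore also $\sum_C\size(C)z^{(t,s)}_C=\sum_p p\,k_{s,p}$, so none of the quantities that the other speeds or the $u$- and $y$-variables depend on is disturbed; the integral $u$-variables are carried along unchanged, which is affordable because there are only $\Oh(K\nicefrac{1}{\eps^2}\log^2\nicefrac{1}{\eps})$ of them. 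A secondary point to watch is that the exponent $\nicefrac{1}{\eps^3}\log^5\nicefrac{1}{\eps}$, rather than something depending on $m$, rests on the separation of machines guaranteeing $|\speeds_{t,1}\cup\speeds_{t,2}|\in\Oh(\nicefrac{1}{\eps}\log\nicefrac{1}{\eps})$, i.e.\ on having introduced integral configuration variables only for the constantly many fast speeds.
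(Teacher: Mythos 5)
Your proposal is correct and follows essentially the same route as the paper: read off the constraint/variable counts directly, use Theorem~\ref{thm:Eisenbrandt_Shmonin} per machine type and fast speed to show that $\Oh(\nicefrac{1}{\eps}\log^2\nicefrac{1}{\eps})$ integral configuration variables suffice there, guess their support ($2^{\Oh(\nicefrac{K}{\eps^3}\log^5\nicefrac{1}{\eps})}$ guesses), and solve the reduced MILP with $\Oh(\nicefrac{K}{\eps^2}\log^3\nicefrac{1}{\eps})$ integral variables via Lenstra--Kannan in $2^{\Oh(K\log(K)\nicefrac{1}{\eps^2}\log^4\nicefrac{1}{\eps})}\poly(|I|)$ time per guess. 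Your explicit remark that the sparsification preserves $m_{t,s}$ and the covered-size counts, so the coupling through constraints~(\ref{eq:uni_MILP_big_jobs_bounded_by_conf}) and~(\ref{eq:uni_MILP_cardinality_job_sizes}) is untouched, is exactly the point the paper leaves implicit by referring back to Section~\ref{sec:better_running_time}.
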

\begin{proof}
The bounds for the number of constraints and variables are easy to verify using the above considerations as well as $|\speeds_t|\leq m $ and $|\jobsizes_t|\leq n$.
The running time can be achieved with the first approach presented in Section \ref{sec:better_running_time}:
Using Theorem \ref{thm:Eisenbrandt_Shmonin}, we can argue that $\Oh(1/\eps\log^2(1/\eps))$ many integral variables for each machine type $t$ and speed $s$ suffice. 
Therefore the number of needed guesses is $2^{\Oh(\nicefrac{K}{\eps^3}\log^5 \nicefrac{1}{\eps})}$. 
Running the algorithm of Lenstra and Kannan with $ \Oh(K/\eps^2\log^3(1/\eps)) $ many integral variables takes $2^{\Oh(K\log(K)\nicefrac{1}{\eps^2} \log^4\nicefrac{1}{\eps})}\poly(|I|)$ time.
Together we get the stated running time.
\end{proof}

\paragraph{Rounding.}

We present rounding approaches for all fractional variables and start with the configuration variables.

\subparagraph{Configuration Variables.}

We fix a type $t$ and slow speed $s\in\speeds_t$ and set $k_p := \sum_{C\in\Conf_t(s\makespan)} C_{p}z^{(t,s)}_{C}$ for each $p\in\bjobsizes_{t,s}$.
We have:
\begin{align}
\sum_{C\in\Conf_t(s\makespan)} z^{(t,s)}_{C} &= m_{t,s} &\\
\sum_{C\in\Conf_t(s\makespan)} C_{p}z^{(t,s)}_{C} &= k_p & \forall p\in\bjobsizes_{t,s}
\end{align}
It is easy to check that, if we replace the solution of this LP with any other solution and change the MILP solution accordingly, the resulting MILP solution will still be feasible. 
We transform the solution into a basic feasible solution.
This can be done in time polynomial in $1/\eps$ and $|I|$.
The LP has $|\bjobsizes_{t,s}| + 1$ many constraints and therefore the solution has at most $|\bjobsizes_{t,s}| + 1$ many variables greater than $0$.
Now the idea, is to round down the fractional values and to assign the respective job sizes that lost covering by the configurations to the fastest group $\sgroup{t}{1}$.
More precisely, we chose some injective mapping $\xi$ between the configurations $C$ with fractional variables $z^{(t,s)}_{C}$ and the machines from $\sgroup{t}{1}$.
This can be done, due to the choice of the parameter $\paramcard$ that regulates the number of machines in $\sgroup{t}{1}$.
Now, we round down $z^{(t,s)}_{C}$ to the next integral value and increase $y^{(t)}_{p,s_{\xi(C)}}$ by $(z^{(t,s)}_{C}-\floor{z^{(t,s)}_{C}})C_p \leq C_p$ for each $p\in\bjobsizes_{t,s}$.
We perform these steps for all types $t$ and slow speeds $s\in\speeds_t$.
Note that any particular variable $y^{(t)}_{p,s}$ might be increased several times for each speed value for which $p$ is big.
Let $\tilde{y}^{(t)}_{p,s}$ denote the resulting increased $y^{(t)}_{p,s}$ variables and $\bar{z}^{(t,s)}_{C}$ the resulting configuration variables.
For $(\bar{z},x,\tilde{y},u)$ the constraints (\ref{eq:uni_MILP_Conf})-(\ref{eq:uni_MILP_big_jobs_bounded_by_conf}) obviously still hold and it is easy to see that this is also the case for (\ref{eq:uni_MILP_cardinality_job_sizes}), while (\ref{eq:uni_MILP_jobs_fit_into_space}) might be violated for speeds associated with the fastest machine groups.
We show that a modified version of (\ref{eq:uni_MILP_jobs_fit_into_space}) still holds.

Consider a machine $i\in\sgroup{t}{1}$. 
For each slow speed value $s\in\speeds_{t,3}$ there may be one configuration $C$ that is mapped to $i$.
The summed up job sizes that are reassigned to $s_i$ because of this are bounded by $sT$.
Summing up over all speed values $s\in\speeds_{t,3}$ and utilizing the convergence of the geometric series, the rounding of the speed values, and the fact that $ s^{(t)}_{3,\max} \leq \gamma s^{(t)}_{1,\max} = 1/\eps^2s^{(t)}_{1,\max}$, we get:
\begin{align*}
\sum_{s\in\speeds_{t,3}}sT &= Ts^{(t)}_{3,\max} \sum_{s\in\speeds_{t,3}} \frac{s}{s^{(t)}_{3,\max}}< Ts^{(t)}_{3,\max} \sum^{\infty}_{i=0} \frac{1}{(1+\eps)^i}\\
 & \leq Ts^{(t)}_{1,\min}(\eps+\eps^2) \leq Ts_i(\eps+\eps^2)
\end{align*}
Hence, (\ref{eq:uni_MILP_jobs_fit_into_space}) holds if we increase the makespan on the right hand side by $(\eps + \eps^2)T$.

\subparagraph{Counting Variables.}

The rounding step for the counting variables $u$ is the easiest: 
We round them up and assign the extra job sizes to the fastest machine speed in the group, that is, for each $t\in[K]$ and $p\in \jobsizes_t$ we set $\bar{u}_p^{(t)} = \ceil{u_p^{(t)}}$ and increase $\tilde{y}^{(t)}_{p,s^*}$ by $\bar{u}_p^{(t)} - u_p^{(t)} \leq 1$, where $s^* = s^{(t)}_{\max}$. 
We denote the changed $\tilde{y}^{(t)}_{p,s^*}$ by $\breve{y}^{(t)}_{p,s^*}$
Again, it is easy to see that for $(\bar{z},x,\breve{y},\bar{u})$ the constraints (\ref{eq:uni_MILP_Conf})-(\ref{eq:uni_MILP_cardinality_job_sizes}) still hold, while (\ref{eq:uni_MILP_jobs_fit_into_space}) is violated.
However, we can bound the increase the fastest speed receives, again utilizing the geometric series:
\[ \sum_{p\in\sjobsizes_{t,s^*}} p(\breve{y}^{(t)}_{p,s^*}-\tilde{y}^{(t)}_{p,s^*}) \leq \sum_{p\in\sjobsizes_{t,s^*}} p \leq \eps^2 s^* T \frac{1+\eps}{\eps} = (\eps + \eps^2)s^*T\]
Hence, (\ref{eq:uni_MILP_jobs_fit_into_space}) holds if we further increase the makespan by $(\eps + \eps^2)T$.

\subparagraph{Job Size Assignment Variables.}

Consider the constraint (\ref{eq:uni_MILP_cardinality_job_sizes}) for the solution $(\bar{z},x,\breve{y},\bar{u})$.
Note that the right hand side and the first sum on the left hand side are both integral.
Therefore, we can scale the $\breve{y}^{(t)}_{p,s}$ variables down, such that $\sum_{s: p\in\sjobsizes_{t,s}}\breve{y}^{(t)}_{p,s}$ is integral for each machine type $t$ and job size $p\in\jobsizes_t$ and (\ref{eq:uni_MILP_cardinality_job_sizes}) is still fulfilled.
We fix a machine type $t$, set $k_p := \sum_{s: p\in\sjobsizes_{t,s}}\breve{y}^{(t)}_{p,s}$ for each $p\in\jobsizes_t$ and assume $k_p\in\ZZ$, because of the argument above.
Furthermore, we set $L_{t,s} := \sum_{p\in\sjobsizes_{t,s}} p \breve{y}^{(t)}_{p,s}$ for each $s\in\speeds_t$.
With these definitions, we have:
\begin{align}
\sum_{s: p\in\sjobsizes_{t,s}}\breve{y}^{(t)}_{p,s}   &=   k_p   & \forall p\in\jobsizes_t \label{eq:rounding_jobs}\\
\sum_{p\in\sjobsizes_{t,s}} p \breve{y}^{(t)}_{p,s}   &=   L_{t,s}   & \forall s\in\speeds_t
\end{align}
If we replace the values $\breve{y}^{(t)}_{p,s}$ with any other solution for the above LP, we get an equivalent MILP solution. 
We can use a variation of the classical rounding approach by Lenstra, Shmoys and Tardos \cite{LST90} to transform the solution $\breve{y}^{(t)}_{p,s}$.

For the sake of completeness, we summarize the main ideas of the rounding.
The solution is transformed into a basic feasible one and the following bipartite graph is considered.
There are two types of nodes, some associated with sizes $p$ and some with speeds $s$.
For any $p$ or $s$, there can be at most one node; there are such nodes if and only if there are fractional variable $\breve{y}^{(t)}_{p,s'}$ or $\breve{y}^{(t)}_{p',s}$ left; and they are connected with an edge, if there is a fractional variable $\breve{y}^{(t)}_{p,s}$.
Using a counting argument and some further considerations, it can be shown that this graph is a pseudoforest, i.e., all connected components are either trees or trees with one extra edge.
Furthermore, because $k_p$ is integral, the definition of the graph, together with the constraint (\ref{eq:rounding_jobs}), yield that all the leafs are associated to speeds.
Using this structure, we can define an injective mapping $\xi$ from the job sizes for which there is a fractional variable $\breve{y}^{(t)}_{p,s}$ to the speeds such that $\breve{y}^{(t)}_{p,\xi(s)}$ is one of the fractional variables.
This can be done as follows:
For each connected component there may be at most one cycle in the graph with alternating size and speed nodes and a suitable injective mapping for the corresponding sizes and speeds can easily be found, by going around the cycle and appropriately mapping consecutive nodes.
After removing the corresponding nodes and edges, only trees remain in the graph.
For each tree we can chose an arbitrary leaf.
The leaf corresponds to a speed and its neighbor to a size and we can map the size to the speed and remove both corresponding nodes from the graph.
Iterating this yields the mapping $\xi$.
All the above steps can be performed in polynomial time in $1/\eps$ and $|I|$.

We use the mapping $\xi$ to round the variables $\breve{y}^{(t)}_{p,s}$ and because $\xi$ is injective we can guarantee that each speed receives at most one extra small job.
More precisely, for each $s\in\speeds_t$ we set $\bar{y}^{(t)}_{p,s} = \ceil{\breve{y}^{(t)}_{p,s}}$, if $\xi(p)=s$ and  $\bar{y}^{(t)}_{p,s} = \floor{\breve{y}^{(t)}_{p,s}}$ otherwise. 
For the solution $(\bar{z},x,\bar{y},\bar{u})$ the constraints (\ref{eq:uni_MILP_Conf})-(\ref{eq:uni_MILP_cardinality_job_sizes}) still hold, while for (\ref{eq:uni_MILP_jobs_fit_into_space}) the makespan has to be increased further by $\eps^2T$.

\subparagraph{Job Assignment Variables.}

The rounding of the job assignment variables is the same as in the regular machine types case. 
The only difference is that we can set $\eta_{t,p} = \bar{u}^{(t)}_p$ in this case.
Since all the values $\bar{u}^{(t)}_p$ are integral in this case there is no rounding error in this step.
Let $\bar{x}_{j,t}$ be the rounded version of $x_{j,t}$.

Summarizing the rounding steps, for the solution $(\bar{z},\bar{x},\bar{y},\bar{u})$ the constraints (\ref{eq:uni_MILP_Conf})-(\ref{eq:uni_MILP_cardinality_job_sizes}) hold together with:
\begin{align}
\sum_{C\in\Conf_t(s\makespan)} \size(C) \bar{z}^{(t,s)}_{C} + \sum_{p\in\sjobsizes_{t,s}} p \bar{y}^{(t)}_{p,s}    &\leq m_{t,s}s(1 + 2\eps + 3\eps^2)T     & \forall t\in [K],s\in\speeds_t
\end{align}

\paragraph{Analysis.}

Summarizing the above steps, we can construct a schedule with makespan at most $(1+\eps)^2(1 + 2\eps + 4\eps^2)T\leq(1 + 27\eps)T$ (assuming a schedule with makespan $T$ exists), by building and solving the MILP, then rounding it and lastly transforming it into a schedule like in the proof of Lemma \ref{lem:uni_MILP_and_schedule}.
Solving the MILP is again the most expensive step and with a simple case analysis we get a running time of:
\[2^{\Oh(K\log(K)\nicefrac{1}{\eps^3} \log^5\nicefrac{1}{\eps})} + \poly(|I|)\]

\section{Vector Scheduling}\label{sec:vector}

We present an EPTAS for $D$-dimensional unrelated vector scheduling, where both the dimension $D$ and the number $K$ of machine types are constant.
In this problem variant for each job $j$ a $D$-dimensional processing time vector $p_{tj}=(p^{(1)}_{tj},\dots, p^{(D)}_{tj})$ is given and the makespan is defined as the maximum load any machine receives in any dimension, i.e., $\Cmax(\sigma)  =\max_{i\in\machs}\|\sum_{j\in\sigma^{-1}(i)} p_{ij}\|_{\infty}$.
We define $\jobsizes = \sett{p^{(d)}_{tj}}{d\in[D],t\in[K],j\in\jobs}$ and $\vjobsizes = \sett{p_{tj}}{t\in[K],j\in\jobs}$.

The EPTAS is a direct adaptation of the one for the one dimensional case.
In the following we briefly describe the needed extra steps and modification.
Note, that we consider this result to be a proof of concept and took little effort to optimize the running time.

\paragraph{Preliminaries.}

We again use the dual approximation approach to get a guess $T$ of the makespan.
As an upper bound for this we can use the schedule that we get by assigning each job $j$ to a machine $i$ where $\sum_{d=1}^{D} p^{(d)}_{ij}$ is minimal.
It is easy to see that this approach yields a $Dm$-approximation and we can use this result for the dual approximation like described in Section \ref{sec:santa}.

First, we perform rounding steps similar to those for the other results.
For each $p_{tj}\in\vjobsizes$ with $p^{(d)}_{tj} > T$ in at least one dimension $d$ we set $\bar{p}_{tj} = (\infty,\dots,\infty)$ and for all other processing time vectors $p_{tj}$ we apply geometric rounding.
Let $\paramthresh = (\eps^2/D)^D$ be some threshold parameter.
We set $\bar{p}^{(d)}_{tj} =  (1+\eps)^x\paramthresh T$ with $x = \ceil{\log_{1+\eps} (p^{(d)}_{tj}/(\paramthresh T))}$ yielding a rounded vector $\bar{p}_{tj}$ and a corresponding rounded instance $\bar{I}$.

For a given processing time vector the numbers that can occur in the different dimensions may still differ strongly.
This complicates the problem, but we can reduce the extra complexity to some degree via a second rounding step:
For each $\bar{p}_{tj}$ we set $\tilde{p}^{(d)}_{tj} = \max\set{\bar{p}^{(d)}_{tj}, \|\bar{p}_{tj}\|_{\infty}\eps/D}$ yielding a rounded vector $\tilde{p}_{tj}$ and a corresponding rounded instance $\tilde{I}$.
Similar rounding steps were used by Chekuri and Khanna \cite{chekuri2004multidimensional} and Bonifaci and Wiese \cite{BW12}.
\begin{lemma}
If there is a schedule with makespan at most $\makespan$ for $I$, the same schedule has makespan at most $(1+\eps)^2\makespan$ for instance $\tilde{I}$ and any schedule for instance $\tilde{I}$ can be turned into a schedule for $I$ without increase in the makespan.
\end{lemma}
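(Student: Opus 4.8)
The plan is to dispatch the two directions separately and, for the non-trivial one, to split the analysis along the two rounding steps, each of which costs a factor $1+\eps$.

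For the easy direction (a schedule for $\tilde I$ yields one for $I$): geometric rounding only rounds up, so $p^{(d)}_{tj}\le\bar p^{(d)}_{tj}$ coordinatewise, and $\bar p^{(d)}_{tj}\le\tilde p^{(d)}_{tj}$ by the definition of the second step; moreover a coordinate of $\bar p_{tj}$ (hence of $\tilde p_{tj}$) equals $\infty$ exactly when some coordinate of $p_{tj}$ exceeds $\makespan$. Hence a schedule $\sigma$ of finite makespan for $\tilde I$ never assigns a job to a machine on which it was set to $\infty$, and for every machine $i$ of type $t$ and every dimension $d$ we have $\sum_{j\in\sigma^{-1}(i)}p^{(d)}_{tj}\le\sum_{j\in\sigma^{-1}(i)}\tilde p^{(d)}_{tj}$; so the same schedule serves for $I$ with no larger makespan (for a schedule of infinite makespan there is nothing to prove).

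Now fix a schedule $\sigma$ with $\Cmax(\sigma)\le\makespan$ for $I$; first I would pass to $\bar I$. Any pair $(j,t)$ with $p^{(d)}_{tj}>\makespan$ in some coordinate $d$ cannot be used by $\sigma$, since such an assignment already violates the target makespan, so every rounded vector occurring in $\sigma$ is finite. For a used pair, $x=\ceil{\log_{1+\eps}(p^{(d)}_{tj}/(\paramthresh\makespan))}$ gives $p^{(d)}_{tj}\le\bar p^{(d)}_{tj}<(1+\eps)p^{(d)}_{tj}$ whenever $p^{(d)}_{tj}>0$ (zero entries stay zero). Hence, for every machine $i$ of type $t$ and every dimension $d$,
\[\sum_{j\in\sigma^{-1}(i)}\bar p^{(d)}_{tj}\le(1+\eps)\sum_{j\in\sigma^{-1}(i)}p^{(d)}_{tj}\le(1+\eps)\makespan.\]

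It remains to absorb the second step $\tilde p^{(d)}_{tj}=\max\set{\bar p^{(d)}_{tj},\|\bar p_{tj}\|_\infty\eps/D}$, which is the only place a (short) argument is needed and is the direct analogue of the estimate of Chekuri and Khanna \cite{chekuri2004multidimensional}. From the definition, $\tilde p^{(d)}_{tj}\le\bar p^{(d)}_{tj}+\frac{\eps}{D}\|\bar p_{tj}\|_\infty$, and since the entries of $\bar p_{tj}$ are nonnegative, $\|\bar p_{tj}\|_\infty\le\sum_{d'=1}^D\bar p^{(d')}_{tj}$. Interchanging the two sums and using the previous display once in each of the $D$ dimensions, for every machine $i$ of type $t$ and dimension $d$,
\[\sum_{j\in\sigma^{-1}(i)}\tilde p^{(d)}_{tj}\le\sum_{j\in\sigma^{-1}(i)}\bar p^{(d)}_{tj}+\frac{\eps}{D}\sum_{d'=1}^D\sum_{j\in\sigma^{-1}(i)}\bar p^{(d')}_{tj}\le(1+\eps)\makespan+\frac{\eps}{D}\cdot D(1+\eps)\makespan=(1+\eps)^2\makespan,\]
which is exactly the claimed bound. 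Note that the particular value $\paramthresh=(\eps^2/D)^D$ plays no role here — any positive threshold works for this lemma; it only matters later, when bounding the number of distinct relevant processing-time vectors and of configurations. The only points requiring care are the bookkeeping around $\infty$- and $0$-entries and the interchange of summation order above; neither constitutes a real obstacle.
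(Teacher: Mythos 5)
Your proof is correct and follows essentially the same route as the paper's: the geometric rounding contributes a factor $(1+\eps)$, and the second rounding step is absorbed by charging the per-job increase $\tfrac{\eps}{D}\|\bar p_{tj}\|_\infty$ against the machine's load in the other dimensions, each bounded by $(1+\eps)\makespan$, giving an additive $\eps(1+\eps)\makespan$ and hence $(1+\eps)^2\makespan$ in total (the paper phrases this per dimension $d'\neq d$, yielding the marginally tighter factor $D-1$ in place of $D$, which makes no difference to the stated bound). Your explicit treatment of the easy direction and of the $\infty$-entries is consistent with what the paper leaves implicit.
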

\begin{proof}
Consider a schedule $\sigma$ with makespan $T$ for $I$.
The first rounding step may increase the makespan by a factor of $(1+\eps)$.
We fix a machine $i$, and a dimension $d$ and bound the increase in load on machine $i$ in dimension $d$ for instance $\tilde{I}$. 
Let $j$ be a job with $\sigma(j)=i$.
If $\bar{p}^{(d)}_{ij} = \|\bar{p}_{ij} \|_{\infty}$, then job $j$ causes no extra load on $i$ in dimension $d$ and if $\bar{p}^{(d')}_{ij} = \|\bar{p}_{ij} \|_{\infty}$ for some dimension $d'\neq d$, there might be an increase of at most $\|\bar{p}_{ij} \|_{\infty}\eps/D$.
In fact, the summed up load machine $i$ receives in dimension $d'$ might increase the load in $d$ by an $\eps/D$-factor in this fashion.
Because the load in dimension $d'$ is bounded by $(1+\eps)T$ and there are $D-1$ dimensions $d'\neq d$, the overall load increase in dimension $d$ on $i$ can be up to $(D-1)(1+\eps)T\eps/D\leq \eps (1+\eps)T$.
\end{proof}
Hence, we may search for a schedule for instance $\tilde{I}$ with makespan $\tilde{\makespan} := (1+\eps)^2\makespan$.
For the sake of simplicity, we do not use the $(\tilde{\,\cdot\,})$-notation in the following, i.e., we assume that the instance is already rounded and the makespan properly increased.

In this context we call a size $q\in\jobsizes$ \emph{big}, if $q > \paramthresh T$ and \emph{small} otherwise.
Furthermore, we call a processing time vector $p\in\vjobsizes$ \emph{big}, if there is a dimension $d\in[D]$, such that $p^{(d)}$ is big, and \emph{small} otherwise.
Because of the second rounding step, we have $p^{(d)} > \paramthresh T\eps/D$ for each big vector $p$ and dimension $d$. 
Let $\bjobsizes_t$ and $\sjobsizes_t$ be the sets of big and small processing time vectors occurring on machine type $t$.
Note that $|\bjobsizes_t| \leq (\ceil{\log_{1+\eps}(D/(\paramthresh\eps))})^D \leq (3D/\eps \log(D/\eps) )^D$.
Using these definition, the bound on the number of big jobs is much bigger than in the other cases.
We chose this definition, because in the rounding of the MILP solution, each machine may receive a big (but constant) number of jobs for each small job size and to bound the overall load the small jobs have to be appropriately small. 

For each processing time vector $p\in\vjobsizes$ we denote the set of jobs $j$ with $p_{tj} = p$ with $\rjobs_t(p)$.

\paragraph{MILP.}

Similar to the one dimensional case, for any set $V$ of processing time vectors we call the  $V$-indexed vectors of non-negative integers $\ZZ_{\geq 0}^V$ configurations (for $V$), set the size $\size(C)$ of a configuration $C$ to be the corresponding vector of sizes, i.e., $\size(C) = \sum_{p\in\vjobsizes}C_pp$, and set $\Conf_t(T)$ to be the set of configurations $C$ for $\bjobsizes_t$ with $\size(C) \leq T^D$.
Note that: 
\[|\Conf_t(T)| \leq (\frac{D}{\theta\eps}+1)^{(3D/\eps \log(D/\eps) )^D} \leq (\frac{D}{\eps})^{3D(\nicefrac{3D}{\eps} \log\nicefrac{D}{\eps} )^D}\leq 2^{(\nicefrac{3D}{\eps}\log \nicefrac{D}{\eps})^{D+1}} \]

The MILP is a straight-forward adaptation of the one for the one-dimensional case with one important difference:
The jobs are fractionally assigned to configurations belonging to a type, instead of just being assigned to machine types.
More precisely, we introduce integral variables $z_{C,t}\in\ZZ_{\geq 0}$ for each machine type $t\in[K]$ and configuration $C\in\Conf_t(T)$, and fractional variables $x_{j,t,C}\geq 0$ for each job $j\in\jobs$, machine type $t\in[K]$ and configuration $C\in\Conf_t(T)$.
For $p_{tj}=\infty^D$ we set $x_{j,t,C}=0$. 
$\MILP(T)$ is given by:
\begin{align}
\sum_{C\in\Conf_t(T)}z_{C,t} &= m_t & \forall t\in [K] \label{eq:vMILP_Conf} \\
\sum_{t\in[K]} \sum_{C\in\Conf_t(T)} x_{j,t,C} &= 1 & \forall j\in\jobs \label{eq:vMILP_job_assignment} \\
\sum_{j\in\rjobs_{t}(p)}x_{j,t,C} &\leq C_{p}z_{C,t}   & \forall t\in [K],p\in \bjobsizes_t, C\in\Conf_t(T)  \label{eq:vMILP_big_jobs_bounded_by_conf}\\
\sum_{p\in\sjobsizes_t}p\sum_{j\in\rjobs_{t}(p)} x_{j,t,C} & \leq (T^D-\size(C)) z_{C,t} & \forall t\in [K], C\in\Conf_t(T) \label{eq:vMILP_jobs_fit_into_space}
\end{align}
Note that the last constraint is $D$-dimensional. 
Unlike in the other cases we can not transform an integral solution for $\MILP(T)$ directly into a schedule with only a small increase in the makespan.
However, we deal with this in the rounding step and still have:
\begin{lemma}\label{lem:vMILP_and_schedule}
If there is schedule with makespan $\makespan$ there is a feasible (integral) solution of $\MILP(\makespan)$. \qed
\end{lemma}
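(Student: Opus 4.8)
The plan is to take an arbitrary schedule $\sigma$ with makespan $\makespan$ for the (already rounded) instance and read off a feasible integral solution of $\MILP(\makespan)$, mimicking the first halves of the proofs of Lemma \ref{lem:MILP_and_schedule} and Lemma \ref{lem:uni_MILP_and_schedule} but paying attention to the new feature that jobs are assigned to (type, configuration) pairs rather than just to types. First I would, for each machine $i$ of type $t$, look only at the big jobs assigned to $i$ by $\sigma$; since each such job $j$ has $p^{(d)}_{tj} > \paramthresh T\eps/D$ in every dimension and the total load in each dimension is at most $\makespan$, the multiset of big processing-time vectors on $i$ forms a configuration $C_i\in\Conf_t(\makespan)$ (its size is coordinate-wise $\le \makespan^D$). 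Define $z_{C,t}$ to be the number of machines of type $t$ with $C_i = C$; then constraint (\ref{eq:vMILP_Conf}) holds because every machine of type $t$ contributes exactly one configuration.

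Next I would set the fractional variables: for a job $j$ with $\sigma(j)=i$, let $t$ be the type of $i$ and $C = C_i$ its configuration, and put $x_{j,t,C}=1$ and all other $x_{j,t',C'}=0$. Constraint (\ref{eq:vMILP_job_assignment}) is then immediate since each job is placed on exactly one machine. For (\ref{eq:vMILP_big_jobs_bounded_by_conf}), fix $t$, a big vector $p\in\bjobsizes_t$, and a configuration $C$: the left-hand side counts the big jobs with vector $p$ that $\sigma$ assigns to machines of type $t$ carrying configuration $C$; on any single such machine this count is exactly $C_p$ by definition of $C_i$, and there are $z_{C,t}$ such machines, so the left-hand side equals $C_p z_{C,t}$, giving equality (hence the inequality). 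For (\ref{eq:vMILP_jobs_fit_into_space}), fix $t$ and $C$: on each of the $z_{C,t}$ machines with this configuration, the big jobs contribute exactly $\size(C)$ in each dimension and the total load is $\le \makespan^D$ coordinate-wise, so the small jobs contribute at most $\makespan^D - \size(C)$ per machine; summing over the $z_{C,t}$ machines and noting that the small jobs $j$ placed there are precisely those with $x_{j,t,C}=1$ yields the bound. Finally $x_{j,t,C}=0$ whenever $p_{tj}=\infty^D$ since such a job is never scheduled on a machine of type $t$ in a finite-makespan schedule.

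The only genuinely non-routine point is checking that the big jobs on a single machine really do constitute a configuration in $\Conf_t(\makespan)$ — i.e. that the per-dimension lower bound $p^{(d)}\ge\paramthresh T\eps/D$ from the second rounding step, together with the makespan bound, is what is needed; everything else is bookkeeping identical in spirit to the earlier lemmas. I do not expect this to be hard, but it is the step that uses the second rounding step in an essential way, so I would state it explicitly rather than leave it implicit.
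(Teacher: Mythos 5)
Your proposal is correct and matches the argument the paper intends: the paper omits the proof as a direct adaptation of the first half of Lemma \ref{lem:MILP_and_schedule}, namely letting each machine of type $t$ obey the configuration formed by its big jobs, setting $z_{C,t}$ to the number of machines obeying $C$ and $x_{j,t,C}=1$ for the (type, configuration) pair of the machine $\sigma(j)$, and checking (\ref{eq:vMILP_Conf})--(\ref{eq:vMILP_jobs_fit_into_space}) exactly as you do. One small remark: membership $C_i\in\Conf_t(\makespan)$ needs only the coordinate-wise bound $\size(C_i)\leq \makespan^D$, which follows from the makespan alone, so the lower bound $p^{(d)}\geq \paramthresh \makespan\eps/D$ from the second rounding step is not actually required here (it is used later, in bounding the number of configurations and the rounding error).
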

Using the algorithm by Lenstra and Kannan we can solve $\MILP(T)$ in time $f(1/\eps,D,K)\poly(|I|)$ for some computable function $f$.

\paragraph{Rounding.}

Using a variation of the rounding approach for the one dimensional case we can transform a solution $(z,x)$ for $\MILP(T)$ into a schedule with a makespan of at most $(1+\eps + \eps^2)T$.
The main difference is that we create nodes for pairs of \emph{machines} and processing time vectors instead of pairs of \emph{machine types} and processing times.

For each type $t$ we assign configurations to machines of type $t$ such that for each configuration $C \in \Conf_t(T) $ exactly $z_{C,t}$ configurations get assigned.
Therefore, we can assume that for each machine $i$ a configuration $C^{(i)}$ is given.
Based on this we can fractionally assign jobs to machines by setting $x_{j,i} = x_{j,t,C^{(i)}}/z_{C^{(i)},t}$, yielding:
\begin{equation}\label{eq:vfractional_schedule}
\sum_{j\in\jobs} p_{ij} x_{j,i} \leq T^D
\end{equation}
For each machine let $\vjobsizes_i$ be the set of occurring processing time vectors for machine $i$, that is, for each $p\in\vjobsizes$, we have $p\in\vjobsizes_i$, iff there is a job $j$ with $p_{ij} = p$ and $x_{j,i}>0$.
We set $\eta_{i,p}=\ceil{\sum_{j\in\rjobs_{t}(p)}x_{j,i}}$.
If $p$ is big, we have $\eta_{i,p}\leq C^{(i)}_p$, because of constraint~(\ref{eq:vMILP_big_jobs_bounded_by_conf}).

Like in the one dimensional case, the flow network $G=(V,E)$ has a source $\source$ and sink $\sink$, and for each job $j\in\jobs$ there is a job node $v_j$ and an edge $(\source,v_j)$ with capacity $1$ connecting the source and the job node.
Moreover, for each machine $i$ we have processing time vector nodes $u_{i,p}$ for each $p\in\vjobsizes_i$.
The processing time nodes are connected to the sink via edges $(u_{i,p},\sink)$ with capacity $ \eta_{i,p}$.
Lastly, for each job $j$ and machine type $i$ with $x_{j,i} > 0$, we have an edge $(v_j,u_{i,p_{i,j}})$ with capacity $1$ connecting the job node with the corresponding processing time vector nodes.
The variables $x_{j,i}$ yield a flow with value $n$ that is guaranteed to be correct because of the constraints of the MILP.
\begin{lemma}\label{lem:vmax_flow}
$G$ has a maximum flow with value $n$.\qed
\end{lemma}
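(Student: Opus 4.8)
The plan is to mirror the argument for Lemma~\ref{lem:max_flow} from the one‑dimensional setting, so this should be short. First I would note that the edges leaving the source $\source$ have total capacity $n$, so $n$ is a trivial upper bound on the value of any flow in $G$; it then remains to exhibit a feasible flow of value exactly $n$.

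Such a flow $f$ is read off directly from the MILP solution $(z,x)$ after the configuration‑to‑machine assignment has been fixed, i.e.\ from the values $x_{j,i}=x_{j,t,C^{(i)}}/z_{C^{(i)},t}$, where $t$ is the type of machine $i$ (terms with $z_{C,t}=0$ do not arise, since then $x_{j,t,C}=0$ by~(\ref{eq:vMILP_big_jobs_bounded_by_conf}) or~(\ref{eq:vMILP_jobs_fit_into_space})). Concretely I set $f((\source,v_j))=1$ for every job $j$, $f((v_j,u_{i,p_{ij}}))=x_{j,i}$ for every job–machine pair with $x_{j,i}>0$, and $f((u_{i,p},\sink))=\sum_{j\in\rjobs_t(p)}x_{j,i}$ for every machine $i$ of type $t$ and $p\in\vjobsizes_i$.

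Then I would verify the three things needed. Capacity feasibility on the source edges is immediate; on the job–to–vector edges it follows from $\sum_i x_{j,i}=\sum_{t}\sum_{C\in\Conf_t(T)}z_{C,t}\cdot x_{j,t,C}/z_{C,t}=\sum_t\sum_C x_{j,t,C}=1$ by~(\ref{eq:vMILP_job_assignment}), so each $x_{j,i}\in[0,1]$; capacity feasibility on the edges into $\sink$ holds since $\sum_{j\in\rjobs_t(p)}x_{j,i}\le\ceil{\sum_{j\in\rjobs_t(p)}x_{j,i}}=\eta_{i,p}$. Flow conservation at a job node $v_j$ reduces to $\sum_i x_{j,i}=1$, and conservation at a vector node $u_{i,p}$ is exactly the defining equation of $f((u_{i,p},\sink))$. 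Hence $f$ is a feasible flow of value $n$, and together with the upper bound the lemma follows.

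I do not anticipate a real obstacle: the one mildly non‑obvious point is that the per‑machine fractions $x_{j,i}$ stay bounded by $1$, which is precisely what constraint~(\ref{eq:vMILP_job_assignment}) guarantees once the configuration variables $z_{C,t}$ are used as the normalising denominators; everything else is routine bookkeeping, identical in structure to the one‑dimensional proof.
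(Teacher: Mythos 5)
Your proof is correct and follows exactly the route the paper takes (which is the same as its proof of Lemma~\ref{lem:max_flow} in the one-dimensional case): bound the flow by the total source capacity $n$ and read off a feasible flow of value $n$ from the per-machine fractions $x_{j,i}$, with feasibility following from constraints~(\ref{eq:vMILP_job_assignment})--(\ref{eq:vMILP_big_jobs_bounded_by_conf}) and the definition of $\eta_{i,p}$. Your remark that $x_{j,t,C}=0$ whenever $z_{C,t}=0$ is a nice touch of extra care that the paper leaves implicit.
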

Using the Ford-Fulkerson algorithm, an integral maximum flow $f^*$ can be found in time $\Oh(|E|f^*)=\Oh(n^2m)$.
Due to flow conservation, for each job $j$ there is exactly one machine $i^*$ such that $f((v_j,u_{i^*,p_{i^*j}}))=1$, and we set $\sigma(j) = i^*$.
Analogously to the one dimensional case, for each big processing time vector $p$ the schedule $\sigma$ assigns at most $C^{(i)}_p$ many jobs $j$ with $p_{ij}=p$ to machine $i$ and for each small processing time $p'$ vector one additional job $j$ with $p_{ij}=p'$ may be assigned to $i$.
Because of the choice of the parameter $\paramthresh$ and the second rounding step, we can bound the extra load $i$ receives.
\begin{lemma}
Let $q\in\jobsizes$ be small and $d\in [D]$. 
There are at most $(2\ceil{\log_{1+\eps}(D/\eps)})^{D-1}$ processing time vectors $p\in\vjobsizes$ with $p^{(d)}=q$.
\end{lemma}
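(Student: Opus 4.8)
The plan is to fix the value $q$ at coordinate $d$ and to show that each of the remaining $D-1$ coordinates of a rounded vector $p\in\vjobsizes$ with $p^{(d)}=q$ can then take at most $2\ceil{\log_{1+\eps}(D/\eps)}$ distinct values; multiplying over the $D-1$ free coordinates gives the claimed bound. (For $D=1$ the statement is trivial, since the only such vector is $(q)$, so assume $D\geq 2$.)

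First I would pin down the shape of a rounded vector. After the geometric rounding step every finite coordinate of $\bar p_{tj}$ equals $(1+\eps)^x\paramthresh T$ for some $x\in\ZZ_{\geq 0}$ and is at most $T$. The second rounding replaces $\bar p^{(d')}$ by $\max\set{\bar p^{(d')},\,\|\bar p\|_\infty\eps/D}$, which leaves the \emph{maximal} coordinate unchanged; hence $M:=\|p\|_\infty$ is still of the form $(1+\eps)^x\paramthresh T$, and every coordinate of $p$ is either of this same form or equal to $M\eps/D=(1+\eps)^x\paramthresh T\cdot\eps/D$. Moreover, every coordinate of $p$ is at least $M\eps/D$.

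Next I would confine the coordinates using $p^{(d)}=q$. From $q\geq M\eps/D$ we get $M\leq qD/\eps$, and trivially $M\geq q$, so for every $d'\neq d$
\[ q\frac{\eps}{D}\;\leq\; M\frac{\eps}{D}\;\leq\; p^{(d')}\;\leq\; M\;\leq\; q\frac{D}{\eps}. \]
Thus $p^{(d')}$ lies in an interval of ratio $(D/\eps)^2$, which I would split at $q$ into the two pieces $[q\eps/D,q]$ and $[q,qD/\eps]$, each of ratio $D/\eps$. On each piece the admissible values form geometric progressions of ratio $1+\eps$ — the multiples of $\paramthresh T$ and the multiples of $\paramthresh T\eps/D$ — and a progression of ratio $1+\eps$ meets an interval of ratio $D/\eps$ in $\Oh(\log_{1+\eps}(D/\eps))$ points; using also that a value of the form $M\eps/D$ lies below $q$ unless $M=qD/\eps$, a careful count of these contributions bounds the number of admissible values of $p^{(d')}$ by $2\ceil{\log_{1+\eps}(D/\eps)}$. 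Taking the product over the $D-1$ coordinates $d'\neq d$ then yields at most $(2\ceil{\log_{1+\eps}(D/\eps)})^{D-1}$ vectors.

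The only step requiring genuine care is this last count: one must keep track of the two possible shapes of a coordinate, of the fact that $M$ itself ranges over $[q,qD/\eps]$, and of the ceiling/floor bookkeeping needed to turn interval ratios into numbers of progression points with the stated constant (rather than just $\Oh(\log_{1+\eps}(D/\eps))$). None of this is conceptually hard, and there is in any case substantial slack, since in the intended application — bounding the additional small‑job load a machine receives in one dimension — this lemma is combined with the very small parameter $\paramthresh=(\eps^2/D)^D$.
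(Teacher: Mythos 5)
Your proposal is correct and follows essentially the same route as the paper's own proof: use the second rounding step to confine each coordinate $p^{(d')}$, $d'\neq d$, to the multiplicative window $[q\eps/D,\,qD/\eps]$, count the geometrically rounded values admissible in such a window (about $2\ceil{\log_{1+\eps}(D/\eps)}$), and take the product over the $D-1$ free coordinates. The only difference is that you additionally track the off-grid values $\|\bar p\|_\infty\eps/D$ created by the second rounding (which the paper's one-line count silently lumps in with the grid values); this is extra care within the same argument rather than a different approach, and as you note any constant-factor slippage there is harmless for the way the lemma is applied.
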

\begin{proof}
Let $p$ be such a vector, $d'\neq d$ and $q' = p^{(d')}$.
Because of the second rounding step we have $q'\geq q\eps/D$ and $q \geq q'\eps/D $ $p^{(d')}$.
Now, because of the first rounding step there are only few such processing times $q$, more precisely at most $2\ceil{\log_{1+\eps}(D/\eps)}$.
Hence, there can be at most $(2\ceil{\log_{1+\eps}(D/\eps)})^{D-1}$ many such processing time vectors $p$.
\end{proof}
Using this lemma and the same argumentation as in the one dimensional case, we can bound the extra load machine $i$ receives in dimension $d$ by: 
\begin{align*}
& (2\ceil{\log_{1+\eps}(D/\eps)})^{D-1} \times  \paramthresh T \times \sum_{i=0}^{\infty} 1/(1+\eps)^i\\
\leq\ \  & 2(1/\eps \log((D-1)/\eps)+1)^{D-1}\times  (\eps^2/D)^D T \times (1+\eps)/\eps\\
\leq\ \  & 2(D/\eps^2)^{D-1} \times (\eps^2/D)^D T \times (1+\eps)/\eps\\
\leq\ \  & \eps^2 T \times (1+\eps)/\eps \leq (\eps + \eps^2)T
\end{align*}
Summarizing, we have:
\begin{lemma}
A solution $(z,x)$ for $\MILP(T)$ can be transformed into a schedule with makespan at most $(1+\eps+\eps^2)T$ in time polynomial in $|I|$ and $1/\eps$.
\end{lemma}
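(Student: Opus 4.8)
The plan is to upgrade the construction sketched in the paragraphs preceding the statement into a proof, in three stages: (i) distribute the configurations chosen by the MILP among the individual machines and read off a \emph{fractional} assignment of jobs to machines satisfying (\ref{eq:vfractional_schedule}); (ii) round this assignment integrally with the flow network $G$, using flow integrality; and (iii) bound, coordinate by coordinate, the extra load the rounding creates. For stage (i) I would, for each type $t$ and each $C\in\Conf_t(T)$, reserve $z_{C,t}$ machines of type $t$ and label them with $C$; by (\ref{eq:vMILP_Conf}) every machine $i$ then carries some configuration $C^{(i)}$. Setting $x_{j,i}=x_{j,t,C^{(i)}}/z_{C^{(i)},t}$ for a type-$t$ machine $i$ and averaging (\ref{eq:vMILP_big_jobs_bounded_by_conf}) and (\ref{eq:vMILP_jobs_fit_into_space}) over the $z_{C^{(i)},t}$ machines carrying $C^{(i)}$ gives $\sum_{j\in\rjobs_t(p)}x_{j,i}\le C^{(i)}_p$ for every big vector $p$ and $\sum_{j}p_{ij}x_{j,i}\le T^D$ (componentwise), which is exactly (\ref{eq:vfractional_schedule}); summing $x_{j,i}$ over all $i$ reproduces (\ref{eq:vMILP_job_assignment}).

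For stage (ii) I would build $G$ as described --- source $\source$, a node $v_j$ per job with a capacity-$1$ edge from $\source$, a node $u_{i,p}$ for every machine $i$ and $p\in\vjobsizes_i$, capacity-$1$ edges $(v_j,u_{i,p_{ij}})$ whenever $x_{j,i}>0$, and capacity-$\eta_{i,p}$ edges $(u_{i,p},\sink)$ with $\eta_{i,p}=\lceil\sum_{j}x_{j,i}\rceil$. The fractional $x_{j,i}$ induce a feasible flow of value $n$, so by Lemma \ref{lem:vmax_flow} the maximum flow value is $n$; Ford--Fulkerson returns an integral maximum flow $f^*$ of value $n$ in $\Oh(n^2m)$ time, and since it saturates every source edge it assigns each job to exactly one machine, giving $\sigma$. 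As $\eta_{i,p}\le C^{(i)}_p$ for big $p$ and $\eta_{i,p}\le\sum_j x_{j,i}+1$ for small $p$, the schedule $\sigma$ puts on each machine $i$ at most $C^{(i)}_p$ jobs of every big vector $p$ and at most one extra job (over the fractional count) of every small vector.

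For stage (iii), fix a machine $i$ and a dimension $d$: the big jobs contribute at most $\size(C^{(i)})^{(d)}$, while the small jobs contribute at most their fractional load plus $\sum_{\text{small }p}p^{(d)}$; combining with (\ref{eq:vMILP_jobs_fit_into_space}) and (\ref{eq:vfractional_schedule}) the total load in dimension $d$ is at most $T+\sum_{\text{small }p}p^{(d)}$. The last term is bounded by invoking the counting lemma above (at most $(2\lceil\log_{1+\eps}(D/\eps)\rceil)^{D-1}$ small vectors share any fixed coordinate value) and summing a geometric series over the admissible coordinate values, which is precisely the displayed estimate yielding $\sum_{\text{small }p}p^{(d)}\le(\eps+\eps^2)T$; hence $\Cmax(\sigma)\le(1+\eps+\eps^2)T$, and every step runs in time polynomial in $|I|$ and $1/\eps$. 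The main obstacle is controlling this last term: a machine can pick up one spurious job per small vector, and the number of small vectors is polylogarithmic in $1/\eps$ rather than constant, so the estimate closes only because the two rounding steps in the preliminaries force every coordinate of a big vector above $\paramthresh T\eps/D$ and pin the coordinates of a small vector to within an $\eps/D$ factor of one another --- it is this, together with the aggressive choice $\paramthresh=(\eps^2/D)^D$ absorbing the $(D/\eps^2)^{D-1}$ factor from the counting lemma, that makes the constants line up. The flow-rounding part itself is entirely routine.
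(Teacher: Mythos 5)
Your proposal follows the paper's own argument essentially verbatim: distributing the configurations via (\ref{eq:vMILP_Conf}) and averaging (\ref{eq:vMILP_big_jobs_bounded_by_conf})--(\ref{eq:vMILP_jobs_fit_into_space}) to get the fractional per-machine assignment (\ref{eq:vfractional_schedule}), rounding with the machine/vector-node flow network and flow integrality, and bounding the per-dimension extra load (one spurious job per small vector) by the counting lemma together with the geometric-series estimate and the choice $\paramthresh=(\eps^2/D)^D$. The argument is correct and adds no genuinely different ideas, only a slightly more explicit derivation of (\ref{eq:vfractional_schedule}).
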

Therefore there is an EPTAS for this case as well.

\section{Conclusion}

We presented efficient approximation schemes for several variants of the problem of scheduling on unrelated parallel machines.
In the following, we briefly discuss some possible directions for further studies.

\paragraph{Better Running Times.}

The presented approximation schemes have running times of the form $f(1/\eps,K) + \poly(|I|)$ (or $f(1/\eps,K,D) + \poly(|I|)$ in the vector scheduling case).
While we took some effort to optimize $f$ at least for the first two schemes, we did not optimize the $\poly(|I|)$ part in any of the results.
Furthermore, for the case with a constant number of uniform types, one could study whether a quadratic or linear dependence in $1/\eps$ (ignoring polylogarithmic dependencies) in the exponent of the $f(1/\eps,K)$ part can be achieved, e.g. by utilizing techniques from \cite{Jan10QCmaxEPTAS} and \cite{JKV16ICALP}.
Lastly, the EPTAS for the vector scheduling variant is basically just a proof of concept and we did not optimize the running time at all.

\paragraph{Lower Bound.}

Chen, Ye and Zhang \cite{chen2014improved} showed that we can not hope for an EPTAS with a sub-linear dependency in $1/\eps$ in the exponent, unless the exponential time hypothesis fails.
It is unclear what can be ruled out in terms of the parameter $K$.

%
%

\paragraph{Job Types.}

In the introduction we mentioned the concept of job types for scheduling on unrelated parallel machines:
Two jobs $j,j'$ are of the same type, if they behave the same on every machine $i$, i.e., $p_{ij} = p_{ij'}$.
It is unknown, whether there is a PTAS for scheduling on unrelated parallel machines with a constant number of job types.
Furthermore, it is unknown, whether this problem is NP-hard.
Indeed, the problem is in P for important special cases:
For scheduling on identical parallel machines the number of job types is equal to the number of distinct processing times and for the case of the restricted assignment problem---where each job $j$ has a size $p_j$ and its processing time $p_{ij}$ on machine $i$ is either $p_j$ or $\infty$---the number of distinct processing times is bounded by the number of job types and a constant number of job types implies a constant number of machine types.
Both problems can be solved in polynomial time, if the number of distinct processing times is constant.

\paragraph{Acknowledgements.}

We thank Florian Mai and Jannis Mell for helpful discussions on the problem.

\bibliography{constant_mach_types.bib}

\begin{thebibliography}{10}

\bibitem{AMO93network}
Ravindra~K Ahuja, Thomas~L Magnanti, and James~B Orlin.
\newblock Network flows: theory, algorithms, and applications.
\newblock 1993.

\bibitem{AS10}
Arash Asadpour and Amin Saberi.
\newblock An approximation algorithm for max-min fair allocation of indivisible
  goods.
\newblock {\em SIAM Journal on Computing}, 39(7):2970--2989, 2010.

\bibitem{BCG09}
MohammadHossein Bateni, Moses Charikar, and Venkatesan Guruswami.
\newblock Maxmin allocation via degree lower-bounded arborescences.
\newblock In {\em Proceedings of the forty-first annual ACM symposium on Theory
  of computing}, pages 543--552. ACM, 2009.

\bibitem{BD05}
Ivona Bez{\'a}kov{\'a} and Varsha Dani.
\newblock Allocating indivisible goods.
\newblock {\em ACM SIGecom Exchanges}, 5(3):11--18, 2005.

\bibitem{BKMMT15}
Raphael Bleuse, Safia Kedad-Sidhoum, Florence Monna, Gr{\'e}gory Mouni{\'e},
  and Denis Trystram.
\newblock Scheduling independent tasks on multi-cores with gpu accelerators.
\newblock {\em Concurrency and Computation: Practice and Experience},
  27(6):1625--1638, 2015.

\bibitem{BW12}
Vincenzo Bonifaci and Andreas Wiese.
\newblock Scheduling unrelated machines of few different types.
\newblock {\em arXiv preprint arXiv:1205.0974}, 2012.

\bibitem{CCK09}
Deeparnab Chakrabarty, Julia Chuzhoy, and Sanjeev Khanna.
\newblock On allocating goods to maximize fairness.
\newblock In {\em Foundations of Computer Science, 2009. FOCS'09. 50th Annual
  IEEE Symposium on}, pages 107--116. IEEE, 2009.

\bibitem{chekuri2004multidimensional}
Chandra Chekuri and Sanjeev Khanna.
\newblock On multidimensional packing problems.
\newblock {\em SIAM journal on computing}, 33(4):837--851, 2004.

\bibitem{chen2014optimality}
Lin Chen, Klaus Jansen, and Guochuan Zhang.
\newblock On the optimality of approximation schemes for the classical
  scheduling problem.
\newblock In {\em Proceedings of the twenty-fifth annual ACM-SIAM symposium on
  Discrete algorithms}, pages 657--668. SIAM, 2014.

\bibitem{chen2017parameterized}
Lin Chen, D{\'a}niel Marx, Deshi Ye, and Guochuan Zhang.
\newblock Parameterized and approximation results for scheduling with a low
  rank processing time matrix.
\newblock In {\em LIPIcs-Leibniz International Proceedings in Informatics},
  volume~66. Schloss Dagstuhl-Leibniz-Zentrum fuer Informatik, 2017.

\bibitem{chen2014improved}
Lin Chen, Deshi Ye, and Guochuan Zhang.
\newblock An improved lower bound for rank four scheduling.
\newblock {\em Operations Research Letters}, 42(5):348--350, 2014.

\bibitem{ES06caratheodory}
Friedrich Eisenbrand and Gennady Shmonin.
\newblock Carath{\'e}odory bounds for integer cones.
\newblock {\em Operations Research Letters}, 34(5):564--568, 2006.

\bibitem{GJKS16}
Jan~Clemens Gehrke, Klaus Jansen, Stefan~EJ Kraft, and Jakob Schikowski.
\newblock A ptas for scheduling unrelated machines of few different types.
\newblock In {\em International Conference on Current Trends in Theory and
  Practice of Informatics}, pages 290--301. Springer, 2016.

\bibitem{GR14}
Michel~X Goemans and Thomas Rothvo{\ss}.
\newblock Polynomiality for bin packing with a constant number of item types.
\newblock In {\em Proceedings of the Twenty-Fifth Annual ACM-SIAM Symposium on
  Discrete Algorithms}, pages 830--839. Society for Industrial and Applied
  Mathematics, 2014.

\bibitem{HS87Dual}
Dorit~S Hochbaum and David~B Shmoys.
\newblock Using dual approximation algorithms for scheduling problems
  theoretical and practical results.
\newblock {\em Journal of the ACM (JACM)}, 34(1):144--162, 1987.

\bibitem{HS76}
Ellis Horowitz and Sartaj Sahni.
\newblock Exact and approximate algorithms for scheduling nonidentical
  processors.
\newblock {\em Journal of the ACM (JACM)}, 23(2):317--327, 1976.

\bibitem{Imr03}
Csanad Imreh.
\newblock Scheduling problems on two sets of identical machines.
\newblock {\em Computing}, 70(4):277--294, 2003.

\bibitem{Jan10QCmaxEPTAS}
Klaus Jansen.
\newblock An eptas for scheduling jobs on uniform processors: using an milp
  relaxation with a constant number of integral variables.
\newblock {\em SIAM Journal on Discrete Mathematics}, 24(2):457--485, 2010.

\bibitem{JKV16ICALP}
Klaus Jansen, Kim{-}Manuel Klein, and Jos{\'{e}} Verschae.
\newblock Closing the gap for makespan scheduling via sparsification
  techniques.
\newblock In {\em 43rd International Colloquium on Automata, Languages, and
  Programming, {ICALP} 2016, July 11-15, 2016, Rome, Italy}, pages 72:1--72:13,
  2016.

\bibitem{WADSversion}
Klaus Jansen and Marten Maack.
\newblock An {EPTAS} for scheduling on unrelated machines of few different
  types.
\newblock In {\em Algorithms and Data Structures - 15th International
  Symposium, {WADS} 2017, St. John's, NL, Canada, July 31 - August 2, 2017,
  Proceedings}, pages 497--508, 2017.
\newblock URL: \url{https://doi.org/10.1007/978-3-319-62127-2_42}, \href
  {http://dx.doi.org/10.1007/978-3-319-62127-2_42}
  {\path{doi:10.1007/978-3-319-62127-2_42}}.

\bibitem{Kan87ILP}
Ravi Kannan.
\newblock Minkowski's convex body theorem and integer programming.
\newblock {\em Mathematics of operations research}, 12(3):415--440, 1987.

\bibitem{knop2016scheduling}
Du{\v{s}}an Knop and Martin Kouteck{\`y}.
\newblock Scheduling meets n-fold integer programming.
\newblock {\em arXiv preprint arXiv:1603.02611}, 2016.

\bibitem{LST90}
Jan~Karel Lenstra, David~B Shmoys, and {\'E}va Tardos.
\newblock Approximation algorithms for scheduling unrelated parallel machines.
\newblock {\em Mathematical programming}, 46(1-3):259--271, 1990.

\bibitem{Len83ILP}
Hendrik~W Lenstra~Jr.
\newblock Integer programming with a fixed number of variables.
\newblock {\em Mathematics of operations research}, 8(4):538--548, 1983.

\bibitem{RN12}
Gurulingesh Raravi and Vincent N{\'e}lis.
\newblock A ptas for assigning sporadic tasks on two-type heterogeneous
  multiprocessors.
\newblock In {\em Real-Time Systems Symposium (RTSS), 2012 IEEE 33rd}, pages
  117--126. IEEE, 2012.

\bibitem{wiese2013partitioned}
Andreas Wiese, Vincenzo Bonifaci, and Sanjoy Baruah.
\newblock Partitioned edf scheduling on a few types of unrelated
  multiprocessors.
\newblock {\em Real-Time Systems}, 49(2):219--238, 2013.

\bibitem{WS11}
David~P Williamson and David~B Shmoys.
\newblock {\em The design of approximation algorithms}.
\newblock Cambridge university press, 2011.

\bibitem{Woe00}
Gerhard~J Woeginger.
\newblock When does a dynamic programming formulation guarantee the existence
  of a fully polynomial time approximation scheme ({FPTAS})?
\newblock {\em INFORMS Journal on Computing}, 12(1):57--74, 2000.

\end{thebibliography}

\end{document}